\let\ftype@table\ftype@figure
\begin{document}

\title{ Non-parametric Likelihood-free Inference with Jensen--Shannon Divergence for Simulator-based Models with Categorical Output}
\author{\name Jukka Corander  \email jukka.corander@medisin.uio.no \\ 
\addr  Department of Mathematics and Statistics and Helsinki Institute of Information Technology (HIIT) \\ University of Helsinki  \\ Pietari Kalmin katu 5, 00014 Helsingin Yliopisto, Finland \\
\addr  Department of Biostatistics, Institute of Basic Medical Sciences \\ University of Oslo \\ Sognsvannsveien 9, 0372 Oslo, Norway \\
\addr Parasites and Microbes, Wellcome Sanger Institute \\ Cambridge, CB10 1SA, UK 
\AND \name Ulpu Remes \email  u.m.v.remes@medisin.uio.no\\  
\addr  Department of Biostatistics, Institute of Basic Medical Sciences \\ University of Oslo \\ Sognsvannsveien 9, 0372 Oslo, Norway 
\AND \name Ida Holopainen \email ida.x.holopainen@helsinki.fi \\ 
\addr  Department of Mathematics and Statistics \\  University of Helsinki \\ Pietari Kalmin katu 5, 00014 Helsingin Yliopisto, Finland
\AND  \name  Timo Koski \email tjtkoski@kth.se \\ 
\addr  Department of Mathematics and Statistics and Helsinki Institute of Information Technology (HIIT) \\  University of Helsinki \\ Pietari Kalmin katu 5, 00014 Helsingin Yliopisto, Finland \\
\addr  KTH Royal Institute of Technology \\ Lindstedtsv\"agen 25, 100 44 Stockholm, Sweden}

\editor{NN}

\maketitle

\begin{abstract}%   <- trailing '%' for backward compatibility of .sty file
Likelihood-free inference for simulator-based statistical models has recently attracted a surge of interest, both in the machine learning and statistics communities. The primary focus of these research fields has been to approximate the posterior distribution of model parameters, either by various types of Monte Carlo sampling algorithms or deep neural network -based surrogate models. Frequentist inference for simulator-based models has been given much less attention to date, despite that it would be particularly amenable to applications with big data where implicit asymptotic approximation of the likelihood is expected to be accurate and can leverage computationally efficient strategies. Here we derive a set of theoretical results to enable estimation, hypothesis testing and construction of confidence intervals for model parameters using asymptotic properties of the Jensen--Shannon divergence. Such asymptotic approximation offers a rapid alternative to more computation-intensive approaches and can be attractive for diverse applications of simulator-based models.  
\end{abstract}

\begin{keywords}
$\phi$-divergence, Sufficiency, Bernstein polynomials,Voronovskaya$^{,}$s Asymptotic Formula,   Moments of Multinomials, $\chi^{2}$-divergence, Reverse Pinsker Inequality, Bayesian Optimization 
\end{keywords}

\section{Introduction}
\label{ppo:sec:intro}

% \subsection{ Simulator-Based Models and the  Jensen--Shannon Divergence}

 There  are plenty of occasions   for statistical inference and learning in various disciplines,  in  natural and social science and in medicine,   where  the source of the observed data may have a scientific  description with  partly unknown  components    or   with   a prohibitively complex   analytical expression, leading to the need of using simulator-based models and likelihood-free inference, see  \citet{cranmer}  for a recent comprehensive review of this field.   Simulator-based models  are implemented as   computing  agents, which  specify how synthetic, e.g. simulated data are generated  as samples from a parametric statistical model to match the  intended  description. Let us consider  an observed data set   $\mathbf{D}$, seen as  $n_{o}$ independent and identically distributed (i.i.d.) samples  from  a finite discrete set or alphabet ${\cal A}$.     The physical data source that has emitted  $\mathbf{D}$  has a complex mathematical model with parameters $\theta$.  Frequently physical experimentation with the data generating source is prohibitively  expensive (or even impossible), 
whereas computer  simulations can be done to learn about $\theta$.  Several real life examples are found in   \citet{gutmann2016bayesian}  and  \citet{lintusaari2017fundamentals}.

The model   is assumed to be implemented  in a corresponding computer program  ${\mathbf M}_{C}$ called a  simulator model.  
Such simulator models specify  how synthetic, i.e.,  simulated data are generated  as samples for any given a value   ${\theta}$ of a model parameter.  
 A simulator-based  model  is  generative in the concrete sense that the functions  in  ${\mathbf M}_{C}$  may be as complex and flexible as needed as long  as exact sampling
is possible. This permits  researchers to develop  sophisticated models for data generating mechanisms without having to make strong simplifying assumptions to satisfy the requirements of  
computational and mathematical  tractability,    see \citet{gutmann2016bayesian} and \citet{lintusaari2017fundamentals}  for   theory  and  references,  and \citet{lintusaari2018elfi} for a platform of inference methods.

Let  $\mathbf{X}_{\theta}$  be    $n$  i.i.d. samples  generated by  ${\mathbf M}_{C}$  under  a given $\theta$, in short, samples from   ${\mathbf M}_{C}(\theta)$.  The primary question is then, how to use    $\mathbf{D}$ and $\mathbf{X}_{\theta}$  for statistical inference about 
real parameters $\theta \in \mathbb{R}^{d}$. 

 A simulator-based  statistical model   ${\mathbf M}_{C}$ can be interpreted  as an  implicit statistical model  in the sense of  \citet{diggle1984monte}, who were among the first to consider likelihood-free inference. To clarify this,  a probability distribution $P_{\theta} $  on  ${\cal A}$ is induced by  $P_{\theta} (a)= P\left( {\mathbf M}_{C}(\theta)=a \right) $ for any $a \in {\cal A}$.  Due to the   complexity of  the simulator-based model,   $P_{\theta} $   cannot be written by  means of an explicit  analytical expression  and   the likelihood  function for $\theta$  given  $\mathbf{D}$, ${\cal L}_{\mathbf{D}}(\theta)$,   cannot be written down explicitly. Thus simulator-based inference is  likelihood-free inference (LFI). 
Following   \citet{diggle1984monte}  the likelihood function ${\cal L}_{\mathbf{D}}(\theta)$ is called an implicit likelihood function.   In the LFI approaches typically considered in the statistics community the simulations are generally evoked in two ways: 
\begin{itemize}
\item to build an explicit surrogate parametric likelihood, or 
\item to accept/reject parameter values according to a measure of  discrepancy of data from the observations (Approximate Bayesian Computation). A  discrepancy between the observed $\mathbf{D}$ and the simulated  $\mathbf{X}_{\theta}$ is introduced  to approximate the implicit  likelihood.  
\end{itemize}
In both cases, simulations are adaptively tailored to the value of the observation data and our work combines some features of both of them. Inference about $\theta$  under  simulator-based  statistical models is based on  the computational minimization of a  measure of  discrepancy between the observed $\mathbf{D}$ and the simulated  $\mathbf{X}_{\theta} $ without 
an explicit surrogate likelihood but by quantities that under certain conditions turn out to be good estimates of the implicit likelihood.

Naturally, we  will by necessity be  making some tacit  assumptions about the structure of  ${\mathbf M}_{C}$  via assumptions on   $P_{\theta}$. In this regard the  Kennedy-O'Hagan  theory  (KOH),  see \citet [p.~2]{kennedy2000predicting} and 
 \citet{tuo2018prediction} for an up-date of it,  postulates   ${\mathbf M}_{C}$  with different levels of code, with  more complex slow codes being  achieved by  expanding  the simpler fast codes. Thus   simple natural laws can  be incorporated   in  ${\mathbf M}_{C}$, too. 
 KOH interprets the output of  ${\mathbf M}_{C}$  as a deterministic function observed with additive noise.  %As an attractive way to use our theoretical results in practice, we apply the optimization technique  in \citet{gutmann2016bayesian}  called  BOLFI  (Bayesian optimization for likelihood-free inference), and its  implementation  in the  ELFI software package  \citet{lintusaari2018elfi}. % ulpu comment: commented this out because (i think) rather than emphasise BOLFI, we should make it clear that the theoretical work presented here does not build on BOLFI. this is so that there is no confusion when we later observe that we cannot use BOLFI to calculate the test statistic values when there is variation between the simulated data sets, and at the same time we conclude (based on experiments that do not use BOLFI) that the proposed test statistic does work.

Our theoretical results stem from the use of  the  Jensen--Shannon divergence (JSD) between the observed and synthetic data, which was also considered as the basis for model training in the original work on generative adversarial neural networks (GANs), see \cite{good}.  It was  established  by    \citet{osterreicher1993statistical}  that  JSD  expresses  statistical information in the sense of  \citet{degroot1962uncertainty}  and that it is in fact   
a $\phi$-divergence,  as defined by   \citet{csiszar1967information}. This brings a major  advantage in  that   the  well known general properties of     $\phi$-divergences  are available  for  the study. A survey of the applications of $\phi$-divergences and other measures of statistical  divergence  is \citet{basseville2013divergence}. Further remarks about  the background  and origins of JSD can be found  in \citet{corremkos}.
 
%\subsection{  Organization of the Paper}
 
The remaining paper is organized as follows. Notations and basic properties  for  simulator-based models and categorical probability distributions are  introduced in Section \ref{sec:phiestimat11}.  The definition and first  properties of JSD are recapitulated in Section \ref{jsd}. By  applications of inequalities in  information theory we then prove the existence of JSD-based estimator that maximizes the implicit log-likelihood, and further derive Taylor expansions and statistics whose asymptotic properties are such that approximate confidence intervals and hypothesis tests can be directly obtained from them. Finally, we demonstrate our approach by application to multiple simulator models and discuss possible extensions of the theoretical framework to more general model classes. 

\section{ Categorical Distributions, Sampling and Sufficient Summaries}\label{sec:phiestimat11}  
In this section the definition and notations for categorical distributions  are stated. In this work  a categorical distribution is  in the first place   a map from  a finite and discrete alphabet  to the interval  $[0,1]$. The  map  will be   identified with the probability simplex,  a subset of a finite  dimensional vector  space.     Then we   give the definition and
notations  for simulator-based  categorical models and their representations as implicit models. 

It has been observed that approximate Bayesian computation (ABC)  requires computations based on low dimensional   summary statistics, rather than on the  full data sets.  It is desired to     find low dimensional summaries which are informative about the parameter inference. One  approach to dimension reduction in ABC has  focused on various 
approximations of the notion of  sufficiency concept   as found  in statistical inference, see the review in  \citet{prangle2015summary}.  In the case of categorical data we show in this section that the the vector of relative frequencies of the 
categories  in a sample is an exact  sufficient statistic.  The proof is based on the first  properties of the multinomial distribution. 

\subsection{Categorical Distributions }\label{sec:phiestimat} 
Let  ${\cal A}=\{a_{1},\ldots, a_{k}\}$ be a finite set, $k \geq 2$. We are concerned with a situation, where 
$k$  and  all  categories $a_{j}$ are known. This excludes the issues   of  very large ${\cal A}$, cf.,  \citet{kelly2012classification}.     $\mathbb{R}^{{\cal A}}$  denotes   the set of real valued functions on  ${\cal A}$. 
%With  the can  be written as 
%$$
%f(x)= \sum_{i=1}^{k} f_{i} [x=a_{i}], \quad   x \in {\cal A}.
%$$  
We introduce the set of categorical (probability)  distributions as 
\begin{equation}\label{allprob}
\mathbb{P} = \left\{ \mbox{all probability distributions on}   \quad \cal A \right\} \subset  \mathbb{R}^{{\cal A}}.
\end{equation}
The   Iverson bracket  $[ x =a_{i}]  \in  \mathbb{R}^{{\cal A}}$ is  defined for each $a_{i} \in {\cal A}$ by 
$
[x  =a_{i}]=  1$, if  $ x =a_{i}$, and 
$
[x  =a_{i}]=  0$, if  $ x \neq  a_{i}$. 
Any  $ P \in \mathbb{P} $ can then  be   written  as   
\begin{equation}\label{iverson}
P(x)= \prod_{i=1}^{k} p_{i}^{[x=a_{i}]}, x \in {\cal A},
\end{equation}
where  ($0^{0}=1, 0^{1}=0$), and $ p_{i} \geq 0$, $\sum_{i=1}^{k}p_{i}=1$. 
 The support 
of $P \in \mathbb{P}$  is  
$
{\rm supp}(P)= \{ a_{i} \in {\cal A}  |  \quad p_{i}=P(a_{i}) >0  \}$. 
If  $X$ is a random variable (r.v.) assuming values on ${\cal A}$, 
$X \sim P$ $ \in \mathbb{P}$  means that   $P(X=x)= P(x)$ for all $x \in {\cal A}$.  

Any  $P \in \mathbb{P}$  is naturally  understood   as a probability vector ${\bf p}$,  an element of the probability simplex  $\triangle_{k-1}$ defined by  
\begin{equation}\label{simplex}
\triangle_{k-1} :=\left \{ {\bf p} = \left(p_{1}, \ldots, p_{k} \right)\mid  p_{i} \geq 0 , i=1, \ldots,k;   \sum_{i=1}^{k}p_{i}=1 \right \}.
\end{equation}  
We write  this one-to-one correspondence between $\mathbb{P} \in  \mathbb{R}^{{\cal A}}$ and $\triangle_{k-1}\subset \mathbf{R}^{k} $ as 
\begin{equation}\label{trmap}
\triangle\left(P \right)= {\bf p}. 
\end{equation} 
The  $ i$-th face of $\triangle_{k-1}$  is defined as    $\partial_{i}\triangle_{k-1}=\{ {\bf p} \in \triangle_{k-1} | p_{i}=0 \}$.  Any  face is in fact a probability 
simplex in $\mathbf{R}^{k-1}$.

The simplicial boundary of  $\triangle_{k-1}$  is  $\partial \triangle_{k-1} =\cup_{i=1}^{p}\partial_{i}\triangle_{k-1}$ $= \{ {\bf p} \in \triangle_{k-1}  |p_{i} =0 
\quad  \mbox{for some } i \}$.
The simplicial or topological  interior of  $ \triangle_{k-1} $   is $\stackrel{o}{\triangle}_{k-1} := \triangle_{k-1} \setminus \partial \triangle_{k-1} $, i.e., 
\begin{equation}\label{topint}
\stackrel{o}{\triangle}_{k-1}= \{ {\bf p} \in \triangle_{k-1}  |p_{i} > 0,  i=1, \ldots,k;   \sum_{i=1}^{k}p_{i}=1  \}.
\end{equation}
We note that $ 
{\rm supp}(P)=  {\cal A}   \Leftrightarrow \triangle(P) \in   \stackrel{o}{\triangle}_{k-1}$.  
The assumption  
\begin{equation}\label{posass}
 \triangle(P) \in   \stackrel{o}{\triangle}_{k-1}
\end{equation}
will be required  of various members  of $\mathbb{P}$ in several situations in the sequel. 
 \begin{example}\label{standbas}{\bf The Standard Simplex:}
For $l=1, \ldots, k$  we define  $e_{l}  \in \mathbb{P} $   by    
\begin{equation}\label{iverson2}
e_{l}(x)= \prod_{i=1}^{k} e_{i,l}^{[x=a_{i}]}, x \in {\cal A},
\end{equation}
where for  $i=1, \ldots, k$  we have 
$$
e_{i,l}= \left\{ \begin{array}{cc}  1 &  i=l \\
0 & i \neq l. \end{array} \right.
$$
We have by Equation~(\ref{trmap})  the unit vector ${\bf e}_{l} =\triangle(e_{l} )$ 
\begin{equation}\label{bas}
{\bf e}_{l} = \left(  0, 0,  \ldots, 0, \underbrace{1}_{\mbox{position $l$}} , 0, \ldots, 0 \right) \in  \triangle_{k-1}.
\end{equation} 
Each  ${\bf e}_{l}$ is called a vertex of   $\triangle_{k-1}$, since  any  $ {\bf p} \in  \triangle_{k-1}$  can be written as 
\begin{equation}\label{vertex} 
 {\bf p} = \sum_{i=1}^{k} p_{i} {\bf e}_{i}. 
\end{equation} 
Hence $ \triangle_{k-1}$  is also known as the standard simplex.  
 \end{example}
\begin{example}\label{barycdef}{\bf The Barycenter of  $\triangle_{k-1}$:}
The discrete uniform distribution on ${\cal A}$  is
\begin{equation}\label{unif}
P_{U}(x)= \prod_{i=1}^{k} \left(  \frac{1}{k}\right)^{[x=a_i]},  \quad  x \in {\cal A}. 
\end{equation}
 $\triangle\left( P_{U} \right)$ is called  the barycenter of  $\triangle_{k-1}$. 
\end{example}

\subsection{Data, Simulator-based Categorical Models  and Implicit  Categorical Models; Sufficiency}\label{sec:phiestimat22} 
Consider  $P_{o} \in \mathbb{P}$, which is nominally   the   true distribution assumed to have generated the  observed data 
$ {\mathbf D}= (D_{1}, \ldots, D_{n_{o}}) $,  that are assumed to be  an i.i.d. $n_{o}$-sample  $\sim$ $P_{o}$.  Value of $n_{o}$  is fixed by external  circumstances   that are   independent  of $\mathbf{D}$.  We assume  Equation~(\ref{posass}) to hold for  $P_{o}$. 

 The summary statistics for $ {\mathbf D}$  will be the empirical distribution $\widehat{P}_{\mathbf D} \in \mathbb{P}$. This is  computed  
in terms of the relative frequencies of the categories $a_{i}$ in   ${\mathbf D}$. Formally, we write           
\begin{equation}\label{relfrekv}
\widehat{p}_{i}= \frac{n_{o,i}}{n_{o}}, \quad i=1, \ldots, k,
\end{equation}
where $n_{o,i} = $ the number of samples $D_{j}$ in $ {\mathbf D}$ such that $D_{j}=a_{i}$, and  following Equation~(\ref{iverson})
\begin{equation}\label{typen}
\widehat{P}_{\mathbf D}(x)=\prod_{i=1}^{k} \widehat{p}_{i}^{[x=a_{i}]}, \quad x \in{\cal A}. 
\end{equation}

Let  ${\mathbf M}_{C}$  be the simulator-model.  Citing   \citet{lintusaari2017fundamentals},  the functions in  ${\mathbf M}_{C}$   are computer programs, which we run $n$ times   taking  as input  random numbers  $V$ and the parameter $\theta  \in \Theta \subset  \mathbb{R}^{d}$, $d < k =\mid {\cal A} \mid$  and  in the special  case  under consideration here produce as output  $\mathbf{X}_{\theta}=\left(X_{1,\theta}, \ldots, X_{n,\theta}\right)$,  $n$ i.i.d. samples of categories in  ${\cal A}$.  
 We write the simulated outputs as  $\mathbf{X}_{\theta}$   and the corresponding function as 
 ${\mathbf M}_{C}(\theta)$.   

By this designation,
$
 p_{i}(\theta)=  P \left( {\mathbf M}_{C}(\theta) = a_{i}  \right) 
$
 for any  $\theta \in \Theta$  induces the category probabilities  so that there is the  distribution $P_{\theta} \in \mathbb{P}$ so that  the probability of a simulated  output $X$ being $ x \in {\cal A}$, i.e.,  $ P _{\theta}\left( X= x  \right) $, is 
\begin{equation}\label{pariversion}
P_{\theta}(x):= \prod_{i=1}^{k}p_{i}(\theta)^{[x=a_{i}]}, x \in {\cal A}, \theta \in \Theta.   
\end{equation}   
In this  $p_{i}(\theta)$ are  $k$ functions that  have no (fully)  explicit expression, i.e., they are implicit functions  of $\theta$ in the sense  of Diggle and  Gratton   satisfying  $p_{i}(\theta) \geq 0$, $\sum_{i=1}^{k} p_{i}(\theta)=1$ for all $\theta \in \Theta$. 
The   implicit model  representation  of 
 ${\mathbf M}_{C}$ in $\mathbb{P}$  is denoted  by   $\mathbb{M}= \left\{ P_{\theta} \mid  \theta \in \Theta \right \}\subset \mathbb{P}$, 
$$
  \mathbb{M}= \left\{ P_{\theta} \mid  \theta \in \Theta \right \}  \models {\mathbf M}_{C} = \left\{ {\mathbf M}_{C} (\theta) \mid  \theta \in \Theta \right \} . 
$$

We are going to use the customary notation   $ {\mathbf X}_{\theta} \sim   P_{\theta}$, or simpler  $ {\mathbf X} \sim   P_{\theta}$ , which is to be understood in the above sense of generative simulator-based sampling, not as sampling  from  a known categorical distribution.
This  statement  is fine-tuned   by  intractable   likelihoods  in  Example  \ref{machlear} below.  In some computations to follow we 
shall write  $ {\mathbf X}_{\theta} \sim   P^{(n)}_{\theta}$, where  $ P^{(n)}_{\theta}= \times_{i=}^{n} P_{\theta}$.
Of course, $ {\mathbf D} \sim P_{o} $ does not presume   simulation in any  tangible fashion, but we do not alter  the 
notation for this purpose.

%There is the  question of specification, whether $P_{o} \in   \mathbb{M}$ or not, that is,  if  there is in  $ {\mathbf M}_{C}$  a function  expressing   the true data generating mechanism. 

Let the relative frequencies of the categories $a_{j}$ in   ${\mathbf X}_{\theta}$ be
$
\widehat{q}_{i}, i=1, \ldots, k$.  This yields  
\begin{equation}\label{qhat2}
\widehat{Q}_{\theta}(x) =  \prod_{i=1}^{k} \widehat{q}_{i}^{[x=a_{i}]}, x \in {\cal A}
\end{equation}
as the summary statistics  of  a given sample  $ {\mathbf X}_{\theta}$.    Let  $\widehat{\mathbf{Q}}_{\theta} $ be  the  r.v.  defined by  the  summary statistics for  the r.v. $\mathbf{X} \sim   P_{\theta}$. A shorthand for this is  $\widehat{\mathbf{Q}}_{\theta} \in \mathbb{M}_{n}(\theta) $.
Consider    r.v.'s $\xi_{i}$ with  $\xi_{1}+\ldots +\xi_{k}=n$, $\xi_{i} \in \{0,\ldots, n\}$,  such that
the vector $\underline{ \xi}:=\left(\xi_{1},\ldots, \xi_{k} \right)$ has the multinomial distribution with parameters $n$ and $\left( p_{1}\left(\theta \right),\ldots,  p_{k}\left(\theta \right)\right) $.  Then
\begin{equation}\label{Qhat2}
\triangle\left(\widehat{\mathbf{Q}}_{\theta} \right) = \left(  \frac{\xi_{1}}{n},\ldots,  \frac{\xi_{k}}{n}\right)  = \frac{\underline{ \xi}}{n}.
\end{equation}
Summary of    categorical data set by  their   empirical distribution, i.e., by  outcomes $\widehat{Q}$  of $ \widehat{\mathbf{Q}}_{\theta} $  is next  justified next  by means of the Neyman factorization citerion and Bayes sufficiency w.r.t.  $\mathbb{M}$.  For the  definition and  structure  of  sufficiency  we refer to   \cite[Ch. 4, pp.~192--193]{bernardo2009bayesian}.

In the  sufficiency result  below,   the subindices $ \theta$ of  simulated data and  their summary have been dropped  as there is no explicit  formal dependence of the simulated empirical distributions on $\theta$. 
\begin{proposition}[Sufficiency]\label{sufffic}
Assume $\widehat{\mathbf{Q}} \in \mathbb{M}_{n}(\theta)  $.      Let $ \underline{x} =\left(x_{1}, \ldots,x_{k} \right) \in {\cal A}^{n}$   and let     $n_{i} =$ be the number of   $x_{j} \in $ $ \underline{x} $  such that $x_j=a_{i}$.  Let  $\underline{n} =\left(n_{1}, \ldots, n_{k} \right)$  so that
$\underline{n}/n: =\left(\frac{n_{1}}{n}, \ldots, \frac{n_{k}}{n}\right)$  $\in  \triangle_{k-1}$.
Then  it holds that  

\begin{description}
\item[i)]  There is a function  $G(\underline{x})$  such that   
\begin{equation}\label{suffstat}
  P_{\theta}  \left( \mathbf{X} = \underline{x}\right)= P_{\theta}  \left( \triangle \left(\widehat{\mathbf{Q}} \right) = \underline{n}/n \right)G\left( \underline{x}\right),
\end{equation}
where  $G\left( \underline{x}\right) >0$. 

\item[ii)]Under any prior density on $p(\theta)$  on $\Theta$ it holds that
\begin{equation}\label{basssuff} 
P\left(\theta | \mathbf{X}= \underline{x} \right)=   P\left(\theta |  \triangle \left(\widehat{\mathbf{Q}} \right) = \underline{n}/n  \right).  
\end{equation}
\end{description}
\end{proposition}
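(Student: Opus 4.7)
The plan is to prove part (i) by direct computation of both sides using the i.i.d. assumption and the multinomial distribution of the summary, and then deduce part (ii) by the standard Bayes-sufficiency argument, since the factor $G(\underline{x})$ will cancel in the posterior ratio.

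For part (i), I would start from $\mathbf{X} \sim P_\theta^{(n)}$ and the Iverson-bracket representation in Equation~(\ref{pariversion}). Independence gives
\begin{equation*}
P_\theta(\mathbf{X} = \underline{x}) = \prod_{j=1}^{n} P_\theta(x_j) = \prod_{j=1}^{n} \prod_{i=1}^{k} p_i(\theta)^{[x_j = a_i]} = \prod_{i=1}^{k} p_i(\theta)^{n_i},
\end{equation*}
since $\sum_{j=1}^{n}[x_j=a_i] = n_i$. On the other hand, by the multinomial law governing $\underline{\xi}$ in Equation~(\ref{Qhat2}),
\begin{equation*}
P_\theta\!\left(\triangle(\widehat{\mathbf{Q}}) = \underline{n}/n\right) = \binom{n}{n_1,\ldots,n_k} \prod_{i=1}^{k} p_i(\theta)^{n_i}.
\end{equation*}
Dividing the two identities yields the factorization (\ref{suffstat}) with
\begin{equation*}
G(\underline{x}) = \binom{n}{n_1,\ldots,n_k}^{-1},
\end{equation*}
which is strictly positive and depends on $\underline{x}$ only through the counts $n_1,\ldots,n_k$, as required by the Neyman criterion. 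Note that $\theta$ does not enter $G$, which is the crux.

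For part (ii), I would invoke Bayes' theorem with the prior density $p(\theta)$ on $\Theta$ and plug in the factorization from part (i). Writing the posterior and using (\ref{suffstat}),
\begin{equation*}
P(\theta \mid \mathbf{X} = \underline{x}) = \frac{P_\theta(\mathbf{X}=\underline{x})\, p(\theta)}{\int_\Theta P_{\theta'}(\mathbf{X}=\underline{x})\, p(\theta')\, d\theta'} = \frac{P_\theta\!\left(\triangle(\widehat{\mathbf{Q}})=\underline{n}/n\right) G(\underline{x})\, p(\theta)}{\int_\Theta P_{\theta'}\!\left(\triangle(\widehat{\mathbf{Q}})=\underline{n}/n\right) G(\underline{x})\, p(\theta')\, d\theta'}.
\end{equation*}
The factor $G(\underline{x})$ is free of $\theta$ and cancels between numerator and denominator, leaving exactly $P(\theta \mid \triangle(\widehat{\mathbf{Q}}) = \underline{n}/n)$, which establishes Bayes sufficiency in (\ref{basssuff}).

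There is no real obstacle here: the argument is the textbook Neyman-factorization-implies-Bayes-sufficiency route specialized to the categorical/multinomial setting. The only minor subtlety is being careful about the distinction between the event $\{\mathbf{X}=\underline{x}\}$ in $\mathcal{A}^n$ (ordered tuples) and the event $\{\triangle(\widehat{\mathbf{Q}}) = \underline{n}/n\}$ (an unordered type), which is precisely what the multinomial coefficient $\binom{n}{n_1,\ldots,n_k}$ counts; getting this combinatorial bookkeeping right is what makes $G(\underline{x})$ positive and parameter-free.
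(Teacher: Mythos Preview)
Your proof is correct and follows essentially the same approach as the paper: compute $P_\theta(\mathbf{X}=\underline{x})=\prod_i p_i(\theta)^{n_i}$, identify the multinomial probability of the type, and take $G(\underline{x})=\binom{n}{n_1,\ldots,n_k}^{-1}=\prod_j n_j!/n!$; then for part (ii) apply Bayes' rule and cancel the $\theta$-free factor. The paper's version phrases the cancellation as multiplying numerator and denominator by the multinomial coefficient rather than invoking the factorization (\ref{suffstat}) directly, but this is only a cosmetic difference.
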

\begin{proof} 
\begin{description}
\item[i)] As    $\underline{ \xi}:=\left(\xi_{1},\ldots, \xi_{k} \right)$ is   multinomial  with parameters $n$ and $ p_{1}\left(\theta \right),\ldots,  p_{k}\left(\theta \right) $.
By multinomial probability, since $n$ is known in advance and does not depend on $\mathbf{X}$,   and   Equation~(\ref{Qhat2})
\begin{equation}\label{multinompid}
P_{\theta}  \left(\triangle \left(\widehat{\mathbf{Q}} \right)  = \underline{n}/n   \right)= P_{\theta}  \left(\underline{ \xi}=\left(n_{1},\ldots, n_{k} \right)\right)=
\frac{ n! }{ \prod_{j=1}^{k}n_{j}!}  \prod_{j=1}^{k}p_{j}^{n_{j}}(\theta).
\end{equation}
But   $ \prod_{j=1}^{k}p_{j}^{n_{j}}(\theta)=  P_{\theta}  \left( \mathbf{X} =\underline{x}\right)$,    the probability  of  $ \underline{x}$ as  $n$ i.i.d. outputs  of  ${\mathbf M}_{C} (\theta)$.   With $G\left( \underline{x}\right)=1/\frac{ n! }{ \prod_{j=1}^{k}n_{j}!}$   the Neyman factorization in  
Equation~(\ref{suffstat}) holds.

\item[ii)] 
Let $p(\theta)$ be a prior density on $\Theta$.  Then   
\begin{equation}\label{datallikelihood}
P\left(\mathbf{X}= \underline{x}  \right ) := \int_{\Theta} P_{\theta} \left(\mathbf{X}= \underline{x}  \right)p(\theta)d\theta.   
\end{equation} 
Here 
$d\theta$ is the Lebesgue measure induced on $\Theta$.
 By Bayes rule, Equation~(\ref{datallikelihood}) and Equation~(\ref{multinompid}) we obtain 
 \begin{eqnarray}
P\left(\theta | \mathbf{X}= \underline{x} \right) & =&  \frac{P_{\theta} (\mathbf{X}= \underline{x} )p(\theta)}{P\left(\mathbf{X}= \underline{x} \right ) }  \nonumber \\
& = & \frac{\frac{ n! }{ \prod_{j=1}^{k}n_{j}!} P_{\theta} (\mathbf{X}= \underline{x} )p(\theta)}{\int_{\Theta} \frac{ n! }{ \prod_{j=1}^{k}n_{j}!} P_{\theta} \left(\mathbf{X})= \underline{x}  \right)p(\theta)d\theta.    } \nonumber  \\
&= & 
\frac{
    P_{\theta}  \left( \triangle \left(\widehat{\mathbf{Q}} \right) = \underline{n}/n   \right) p(\theta)}{\int_{\Theta}   \  P_{\theta}  \left( \triangle \left(\widehat{\mathbf{Q}} \right)= \underline{n}/n   \right)p(\theta)d\theta.    }  \nonumber  \\
& = &    P\left(\theta |  \triangle \left(\widehat{\mathbf{Q}} \right)  = \underline{n}/n  \right), \nonumber 
\end{eqnarray}
where  Bayes rule was  invoked in  the final step.  This  proves Equation~(\ref{basssuff}) as claimed.  
 \end{description} \end{proof}
%In addition, $ P_{o}  \left( \mathbf{D}  = \underline{ n}_{o} \right)= P_{o}  \left( \widehat{P}_{\mathbf{D} }\right)G\left(\mathbf{D} \right)$, where $G\left(\mathbf{D} \right)=1/\frac{ n_{o}! }{ \prod_{j=1}^{k}n_{o,j}!}$  .

Let  ${\bf p}_{\theta} =\left(p_{1}(\theta), \ldots, p_{k}(\theta) \right)= \triangle\left(P_{\theta} \right)$ and 
$\widehat{{\bf q}}_{\theta}=  \triangle \left(\widehat{\mathbf{Q}}_{\theta} \right)  $. Then,
 since as $\xi_{i} \sim {\rm Bin}(n, p_{i}\left(\theta \right))$,  and by  Equation~(\ref{vertex})
\begin{equation}\label{vertexprop} 
{\bf p}_{\theta} = \sum_{i=1}^{k} p_{i}(\theta) {\bf e}_{i} =  \sum_{i=1}^{k} E_{P_{\theta}}\left[  \frac{\xi_{i}}{n} \right]{\bf e}_{i}= E_{P_{\theta}}\left[  \widehat{{\bf q}}_{\theta} \right].
\end{equation}

\begin{example}\label{machlear} Models that are not implicit are  by  Diggle and  Gratton called  prescribed  models.   
{\rm The inference about $\theta$ by JSD to be discussed here works also  even for  sampling  from  a prescribed model  with  an  intractable  likelihood.  To discuss   this    more precisely, consider   any real vector  $\theta \in   \mathbb{R}^{d}$
and $k$ known functions $g_{i}\left(\theta \right)$, $i=1, \ldots,k$.   The soft-max 
assignments 
$
p_{i}(\theta):= e^{g_{i}( \theta)} / C(\theta)$, $ i=1, \ldots, k
$
determine  by Equation~(\ref{pariversion})  a prescribed  model 
$ \mathbb{M}= \left\{ P_{\theta} \mid  \theta \in \Theta \right \} \subset \mathbb{P}$.  The normalizing constant $C(\theta)$  cannot frequently  be evaluated by  a closed-form expression. Hence  the likelihood is  intractable. But one can anyhow  generate synthetic samples 
$X_{\theta} \sim P_{\theta}$ and $\widehat{Q}_{\theta} $ by means of, e.g.,  the  Gumbel trick of \citet[Lemma 6, p. 123]{yellott1977relationship}. Hence nonparametric likelihood-free inference by means of JSD   is applicable. }
\end{example}

The suggested  notion of closeness  between  the summary statistics  $\widehat{Q}_{\theta} \in \mathbb{M}_{n}(\theta) $  and $\widehat{P}_{\mathbf D} $  is next  explicitly defined   by means of the Jensen--Shannon divergence.  
 
\section{The Jensen--Shannon Divergence}\label{jsd}
In this section we define the Jensen--Shannon divergence as a  $\phi$-divergence and recapitulate some of its properties, as found useful for  the present purposes.   Then we discuss an interpretation based on the early contributions  by  
N.~Jardine and R.~Sibson, for whom, however,  the later terminology of   Jensen--Shannon divergence was not available. We define the total  variation distance and cite a  useful   inequality for the distance between  the Shannon entropies of two categorical distributions  in terms of the total variation distance.  

\subsection{The Definition and the Range Property} 
We denote by $\phi$ a continuous convex function  $(0, + \infty)$ $\stackrel{\phi}{\mapsto} \mathbf{R}\cup {+\infty}$. 
The function $\phi$  has the properties $0 \phi \left( \frac{0}{0} \right) =0$ and $0\phi(x/0)= \lim_{\epsilon  \rightarrow 0} \epsilon
\phi(x/\epsilon)$. We require also that  $\phi(1)=0$ and that $\phi(x)$ is strictly convex at $x=1$.  We call $\phi$  a divergence function.   \citet[Ch. 3]{vajda1989theory}  presents   several  properties  specially valid for  convex functions  of  $[0, + \infty)$.

For two generic categorical probability   distributions $P:  P(x)= \prod_{i=1}^{k} p_{i}^{[x=a_{i}]}$  and $Q: Q(x)= \prod_{i=1}^{k} q_{i}^{[x=a_{i}]}$ in  $ \mathbb{P}$, we define the $\phi$-divergence $D_{\phi}( P, Q)$, also known as $\phi$-divergence  of  Csiszar,  between $P$ and $Q$ by means of a divergence function as  
\begin{equation}\label{informationone}
D_{\phi}( P, Q) := \sum_{x \in {\cal A}} Q(x)  \phi\left( \frac{P(x)}{Q(x)}\right)= \sum_{i=1}^{k} q_{i} \phi\left( \frac{ p_{i}}{q_{i}}\right).   
 \end{equation}
Since  $\phi(1)=0$, the Jensen inequality gives  $D_{\phi}( P, Q) \geq 0$.  The property   $D_{\rm JS}( P, Q)=0 \Leftrightarrow P=Q$ follows by   strictly convexity  at $x=1$, see  \citet{csiszar1967information}.
Comprehensive studies of $\phi$-divergences and generalizations on general abstract spaces are  \citet{liese2006divergences}  and \citet[Ch.  8 \& 9]{vajda1989theory}. Concise presentations  of the main properties  are  found  in  \citet[Ch. 1.2]{pardo2018statistical} and \citet{osterreicher2002csiszar}.  

For a first instance,   we select  in Equation~(\ref{informationone}) the divergence function $\phi(x) = x \ln x$ and  the resulting divergence becomes 
 the Kullback--Leibler divergence (KLD) given by $
 D_{\rm KL}( P, Q) := \sum_{i=1}^{k} p_{i} \ln \left( \frac{ p_{i}}{q_{i}}\right)$. 
Non-symmetry  $D_{\rm KL}( P, Q) \neq  D_{\rm KL}( Q,P)$, if $P \neq Q$ holds in general. 

Next,  the Jensen--Shannon divergence is   denoted by  $D_{\rm JS}( P, Q)$,    and  is defined with $M:= \pi P + (1-\pi)Q$ as 
 \begin{equation}\label{jsinformation}
D_{\rm JS}( P, Q) := \pi  D_{\rm KL}( P, M) + (1-\pi)  D_{\rm KL}( Q,  M), \quad 0 < \pi < 1.
 \end{equation}
It  is shown in  \citet{osterreicher1993statistical} that $D_{\rm JS}( P, Q)$ is a $\phi$-divergence with the divergence function 
\begin{equation}\label{jenshaphi}
\phi_{\rm JS}(x):=  \pi \cdot  x \ln x - (\pi x + (1- \pi)) \ln ( \pi x +(1- \pi)).
\end{equation}
It follows as a special case of the range property  of   any    $\phi$-divergence with  $\phi(1)=0$, \citet[Thm 5, p.~4399]{liese2006divergences}, that 
\begin{equation}\label{range2}
0 \leq D_{\rm JS}( P, Q) \leq B(\pi) \leq  B(1/2) =\ln(2),
\end{equation}  
where  $B(\theta)$  is   the binary entropy function in natural logarithms (nats) defined  for  $\theta  \in [0,1]$  by  
$
B(\theta) := -\theta \ln(\theta) -(1-\theta) \ln (1 -\theta)$. Hence $D_{\rm JS}( P, Q)$ is bounded, even  if the supports of $P$ and  $Q$ differ and   hence JSD is a smoothing of KLD, as KLD  can be equal to $+\infty$.  
The uniqueness property   can be  checked   by the uniqueness property of KLD.   
 
\subsection{An  Interpretation}
$D_{\rm JS}( P, Q)$ appears under the name  information radius of order one  in \citet{jardine1971mathematical}   and  \citet{sibson1969information}. 
In  \citet[Thm 2.8., p.~154]{sibson1969information} it  is   shown by an analytic proof that 
\begin{equation}\label{perpe}
D_{\rm JS}( P, Q) = B(\pi) \Leftrightarrow  {\rm supp}(P) \cap {\rm supp}(Q)= \emptyset.
\end{equation} 
In words, $D_{\rm JS}( P, Q)$ reaches its maximum as soon as we know with certainty that a sample of  $P$ and cannot be a sample of $Q$, and conversely.   There is thus  a  simple  proof of Equation~(\ref{perpe}) using the  properties of   the Bayesian optimal error in hypothesis testing.

In  \citet[Cor. 2.3, p. 153]{sibson1969information}   $D_{\rm JS}$  is  defined by 
 \begin{equation}\label{jsinformationsibs}
D_{\rm JS}( P, Q) = \inf_{\nu  } \left[ \pi  D_{\rm KL}( P, \nu) + (1-\pi)  D_{\rm KL}( Q,  \nu) \right], 
 \end{equation}
where $\nu$ is any probability in $\mathbb{P}$  dominating the convex combination  $ \pi   P + (1-\pi)  Q $.  

In \citet[pp.~13$-$16] {jardine1971mathematical}  the interpretation of    $D_{\rm JS}( P, Q) $ via Equation~(\ref{jsinformationsibs}) is as follows.   $ \pi  D_{\rm KL}( P, \nu)$ $+$ $ (1-\pi)  D_{\rm KL}( Q,  \nu)$ is  the amount of information per sample to discriminate against  $\nu$, when sampling from $P$   or $Q$ with probabilities  $\pi$ and $1-\pi$. 
Then  $\nu$ is chosen  so as to incorporate as much as possible of what is known about which  of $P$ and $Q$ should be chosen.  $D_{\rm JS}( P, Q) $  is 
the remaining  deficit in information.   \\

Next 
$
H(R):= -\sum_{i=1}^{k} r_{i} \ln r_{i} 
$
is the Shannon entropy of   $R \in \mathbb{P}$ in nats.
A special case of an  identity  in \citet[Lemma 4]{topsoe1979information}   is 
\begin{equation}\label{klinformationiudnet}
 D_{\rm JS}( P, Q) =H(M) - \pi H(P) - (1-\pi)H(Q).  
\end{equation}
The  right hand side of Equation~(\ref{klinformationiudnet})  is the Jensen--Shannon divergence  $D_{\rm JS}( P, Q)$  as  defined in \citet{lin1991divergence}.  

The divergence function   $\phi(x)=|x-1|$ gives the (total) variation distance denoted by $V(P,Q)$    and
equaling $V( P, Q)= \sum_{i=1}^{k} | p_{i} - q_{i}|$  by Equation~(\ref{informationone}).
 The  variation distance is special in the sense that  the triangle inequality 
holds for a $\phi$-divergence   if and only if for some constant $ \alpha >0 $
$
D_{\phi}( P, Q) = \alpha  V( P, Q) $, 
as shown in  \citet{khosravifard2006exceptionality}, see also \citet{vajda2009metric}.

Next    we cite  \citet[Lemma 2.7, p.~19] {csiszar2011information}. 
This lemma is needed for the asymptotics of  %the BOLFI statistic defined in Section \ref{thebolfistats}.
JSD between observed and simulated data.
\begin{lemma}\label{lemmakontbd}
$P \in \mathbb{P}$, $Q \in \mathbb{P}$. Assume  $V(P,Q) < 1/2$.  Then 
\begin{equation}\label{cskorner}
|H(P) - H(Q)| \leq  -V(P,Q) \cdot  \ln\left( \frac{V(P,Q)}{k}\right). 
\end{equation}
\end{lemma}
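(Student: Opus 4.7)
My plan is to represent the entropy difference coordinate-wise using $\eta(t):=-t\ln t$ (with the convention $\eta(0)=0$) as $H(P)-H(Q)=\sum_{i=1}^{k}[\eta(p_i)-\eta(q_i)]$, bound each summand by $\eta(|p_i-q_i|)$, and then dispatch the resulting sum by a short Jensen-type computation. Writing $\epsilon_i:=|p_i-q_i|$, the hypothesis gives $V:=V(P,Q)=\sum_i\epsilon_i<1/2$, so in particular each $\epsilon_i<1/2$; this uniform bound on the individual $\epsilon_i$ is what makes the pointwise step work.

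The main obstacle is the pointwise inequality
\[
|\eta(p)-\eta(q)|\le \eta(|p-q|)\qquad\text{whenever } p,q\in[0,1],\ |p-q|\le 1/2.
\]
To prove it I would, without loss of generality, take $p=q+t$ with $t\in[0,1/2]$ and $q\in[0,1-t]$, and study $g(q):=\eta(q+t)-\eta(q)$. A direct calculation yields $g'(q)=\ln\!\bigl(q/(q+t)\bigr)\le 0$, so $g$ is non-increasing on $[0,1-t]$ with boundary values $g(0)=\eta(t)$ and $g(1-t)=-\eta(1-t)$. Hence $g(q)\in[-\eta(1-t),\eta(t)]$, and the claim reduces to verifying the comparison $\eta(1-t)\le\eta(t)$ on $[0,1/2]$. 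For this I would set $h(t):=\eta(t)-\eta(1-t)$ and note that $h(0)=h(1/2)=0$, $h'(t)=-\ln(t(1-t))-2$ satisfies $h'(0^+)=+\infty$ and has a single zero in $(0,1/2)$ at $t_\star$ determined by $t_\star(1-t_\star)=e^{-2}$, so $h$ increases on $(0,t_\star)$ and decreases on $(t_\star,1/2)$, forcing $h\ge 0$ throughout.

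Granted the pointwise bound, the triangle inequality gives
\[
|H(P)-H(Q)|\le\sum_{i=1}^{k}|\eta(p_i)-\eta(q_i)|\le\sum_{i=1}^{k}\eta(\epsilon_i)=-\sum_{i=1}^{k}\epsilon_i\ln\epsilon_i.
\]
To finish, I would set $\lambda_i:=\epsilon_i/V$, so that $(\lambda_1,\ldots,\lambda_k)$ is a probability vector on $k$ atoms and consequently $-\sum_i\lambda_i\ln\lambda_i\le\ln k$. Substituting $\epsilon_i=V\lambda_i$ yields
\[
-\sum_i\epsilon_i\ln\epsilon_i=-V\ln V-V\sum_i\lambda_i\ln\lambda_i\le -V\ln V+V\ln k=-V\ln(V/k),
\]
which is precisely the bound asserted in the lemma. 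The whole argument is thus driven by the coordinate-wise asymmetric-function estimate in paragraph two, with the closing a routine entropy bookkeeping.
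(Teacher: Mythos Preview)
Your proof is correct. The paper does not actually prove this lemma; it simply cites \cite[Lemma 2.7, p.~19]{csiszar2011information}, and your argument is essentially the standard one given there: the coordinate-wise bound $|\eta(p)-\eta(q)|\le\eta(|p-q|)$ for $|p-q|\le 1/2$, followed by the log-sum/entropy maximization step to pass from $\sum_i\eta(\epsilon_i)$ to $-V\ln(V/k)$.
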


We note finally  a kind of fundamental   justification for JSD  in its  role  here.    

 The topology on $ \mathbb{P}$ induced by  $V(P,Q)$
is called the variation distance topology. For any $P \in \mathbb{P}$  we define  an  open JSD - neighborhood around $P$  as 
\begin{equation}
N(P, \epsilon):= \{ Q \in  \mathbb{P} |  D_{\rm JS}(P,Q)< \epsilon \}. 
\end{equation}
The following is \citet[Thm 2.7., p.153]{sibson1969information}.
\begin{theorem}\label{topbasis}  For varying $P$ and $\epsilon$, $N(P, \epsilon)$  form a
basis for  the variation distance topology.
 \end{theorem}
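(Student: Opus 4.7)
The plan is to prove that the two topologies coincide by showing mutual nesting of the neighborhood bases at every point. Concretely, for a fixed $P \in \mathbb{P}$, I will establish that
\emph{(a)} every variation ball $B_V(P,\delta) = \{Q : V(P,Q) < \delta\}$ contains some $N(P,\epsilon)$, and
\emph{(b)} every JSD-ball $N(P,\epsilon)$ contains some variation ball $B_V(P,\delta)$.
Together (a) and (b) show that the two systems of neighborhoods are mutually cofinal, and hence the sets $N(P,\epsilon)$ form a basis for the variation-distance topology.

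For direction (a), I would invoke a Pinsker-type lower bound on JSD. Recall that for $M=\pi P+(1-\pi)Q$ one has $V(P,M)=(1-\pi)V(P,Q)$ and $V(Q,M)=\pi V(P,Q)$. Applying Pinsker's inequality $D_{\rm KL}(R\|S)\geq \tfrac{1}{2}V(R,S)^{2}$ to each KL term in the definition (\ref{jsinformation}) and summing gives
\begin{equation*}
D_{\rm JS}(P,Q) \;\geq\; \tfrac{1}{2}\bigl[\pi(1-\pi)^{2}+(1-\pi)\pi^{2}\bigr] V(P,Q)^{2} \;=\; \tfrac{1}{2}\pi(1-\pi)\,V(P,Q)^{2}.
\end{equation*}
Hence choosing $\epsilon=\tfrac{1}{2}\pi(1-\pi)\delta^{2}$ ensures $N(P,\epsilon)\subseteq B_V(P,\delta)$.

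For direction (b), I would use the entropy identity (\ref{klinformationiudnet}), writing
\begin{equation*}
D_{\rm JS}(P,Q) \;=\; \pi\bigl[H(M)-H(P)\bigr] + (1-\pi)\bigl[H(M)-H(Q)\bigr].
\end{equation*}
Since $V(P,M)=(1-\pi)V(P,Q)$ and $V(Q,M)=\pi V(P,Q)$, whenever $V(P,Q)<\tfrac{1}{2}$ both of these are strictly less than $1/2$, so Lemma \ref{lemmakontbd} applies to each bracket. This yields
\begin{equation*}
D_{\rm JS}(P,Q) \;\leq\; -2\pi(1-\pi)\,V(P,Q)\,\ln\!\Bigl(\tfrac{\pi(1-\pi)V(P,Q)}{k}\Bigr),
\end{equation*}
after absorbing $\pi$ and $1-\pi$ into the log's argument (using monotonicity of $-x\ln(x/k)$ on small $x$). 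The right-hand side tends to $0$ as $V(P,Q)\to 0$, so given $\epsilon>0$ we can choose $\delta>0$ small enough (in particular $\delta<1/2$) that the bound is below $\epsilon$, giving $B_V(P,\delta)\subseteq N(P,\epsilon)$.

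The main obstacle is direction (b): one must control both $|H(M)-H(P)|$ and $|H(M)-H(Q)|$ uniformly by a function of $V(P,Q)$ that vanishes at $0$, and ensure the hypothesis $V<1/2$ of Lemma \ref{lemmakontbd} holds for the relevant pairs. The trick is to exploit the convex-combination structure of $M$ to reduce both variation distances to scalar multiples of $V(P,Q)$, after which the continuity of $-x\ln(x/k)$ at $x=0$ closes the argument. Direction (a) is essentially immediate from Pinsker.
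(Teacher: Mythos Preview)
The paper does not actually prove Theorem~\ref{topbasis}; it simply attributes the result to \citet[Thm~2.7, p.~153]{sibson1969information} and states it without argument. So there is no in-paper proof to compare against directly.

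Your proposal is correct and self-contained. Direction~(a) via Pinsker is clean and correct: the identity $V(P,M)=(1-\pi)V(P,Q)$, $V(Q,M)=\pi V(P,Q)$ followed by Pinsker on each KL term gives exactly the constant $\tfrac{1}{2}\pi(1-\pi)$ you state. Direction~(b) is essentially the content of the paper's Appendix Lemma~\ref{lemmakontbd2} (Equation~(\ref{cskorner2})), which you have reconstructed; your absorption of the $\pi$ and $1-\pi$ factors into the logarithm via monotonicity is a legitimate upper-bounding step, since $\pi(1-\pi)V\le \pi V$ and $\pi(1-\pi)V\le (1-\pi)V$ make $-\ln(\pi(1-\pi)V/k)$ dominate both individual log terms. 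The only thing worth making explicit is that mutual cofinality of neighborhood bases at every point is indeed equivalent to the JSD-balls forming a basis: given any $V$-open $U$ and $Q\in U$, pick $B_V(Q,\delta)\subseteq U$ and then $N(Q,\epsilon)\subseteq B_V(Q,\delta)$ by your direction~(a), so the basis property holds with $P=Q$.

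In short: the paper outsources the proof to Sibson, while you supply one using exactly the inequalities (Pinsker, Lemma~\ref{lemmakontbd}) that the paper itself assembles elsewhere for other purposes. Your argument is sound.
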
 
 
%\section{Asymptotics of the  BOLFI Statistic}\label{thebolfistats} 
\section{Asymptotics of Simulator-Based JSD}\label{thebolfistats}

The present work focuses on comparison between observed and simulated data.
For $\widehat{Q}_{\theta} \in \mathbb{M}_{n} (\theta)$,    $D_{\rm JS}( \widehat{P}_{\mathbf{D}}, \widehat{Q}_{\theta})$ will be referred to as %the BOLFI statistic.
the simulator-based JSD statistic.
This is non-parametric, as $D_{\rm JS}( \widehat{P}, \widehat{Q}_{\theta})$    is  a symbol with $\theta$ as an argument    in the 
simulator-modeling  sense, whereby   the dependence on   $\theta$ is  through   the  statistical properties  mediated by   $\mathbb{M}_{n}\left(\theta \right)$, i.e., not through the explicit presence like, e.g.,  in a contrast function.

\begin{example}\label{berno} {\rm  ${\cal A}=\{0,1\}$,  $\mathbf{D}$ is an observed    sample of $n_{o}$ zeros and ones, $  \widehat{p}=$(number of  1's  in $\mathbf{D})/n_{o}$,  and $ \widehat {P}_{\mathbf{D}}(x) =    (1-\widehat{p})^{[x=0]} \cdot \widehat{p}^{[x=1]}   $. Let  $ \Theta=(0,1) $ and $\mathbb{M}= \left\{ P_{\theta} \mid  \theta \in  \Theta=(0,1) \right \}$$\models $ $ \mathbf{M}_{C}$. Then 
$
\widehat{Q}_{\theta}(x) =( 1-\widehat{q})^{[x=0]} \cdot \widehat{q}^{[x=1]}$, $ x \in {\cal A}
$,  where $\widehat{q}=$  $\left( \mbox{number of  1's  in } \mathbf{X}_{\theta} \right)/n$, where $ \mathbf{X}_{\theta} \sim \mathbf{M}_{C}(\theta)$.
We take   $\pi=1/2$, and denote   the corresponding JSD  by  $D_{\rm JS,1/2}$. Clearly, for this example, $H\left( \widehat{P}_{\mathbf{D}}\right)=B( \widehat{p}) 
$,  $H\left( \widehat{Q}_{\theta}\right) =B( \widehat{q}) 
$, $ H \left( M  \right)=B \left( \frac{1}{2} \widehat{p} +  \frac{1}{2}\widehat{q}  \right) 
$ and by Equation~(\ref{klinformationiudnet}), %the BOLFI statistic is 
the simulator-based JSD statistic
 \begin{equation}\label{jssymmen} 
D_{\rm JS,1/2}\left( \widehat{P}_{\mathbf{D}}, \widehat{Q}_{\theta} \right)  = B \left( \frac{1}{2} \widehat{p} +  \frac{1}{2}\widehat{q}  \right) - \frac{1}{2} B( \widehat{p}) - \frac{1}{2}B( \widehat{q}).
 \end{equation}
This  explicit expression is  also the JSD between two Bernoulli distributions, since here  $ \widehat{P}_{\mathbf{D}} \in  \mathbb{M}$  and $\widehat{Q}_{\theta} \in  \mathbb{M}$.  In general an empirical distribution is not inside the model that generated it. 
$D_{\rm JS,1/2}$ is applied  for numerical studies  in Appendix 1 of  \citet{jardine1971mathematical}, which does not, however, explicitly  recognize Equation~(\ref{jssymmen}). } 
\end{example}
We find  first   a simple exact  connection between 
$ D_{\rm JS}\left(\widehat{P}_{\mathbf{D}}, \widehat{Q}_{\theta}   \right)$ and $ D_{\rm JS} \left(  \widehat{P}_{\mathbf{D}},  P_{\theta}\right)$.  This will  yield a proof of the almost sure convergence of %the BOLFI-statistic to what we call a JSD statistic. 
the simulator-based JSD statistic to the JSD statistic $ D_{\rm JS} \left(  \widehat{P}_{\mathbf{D}},  P_{\theta}\right)$.

\subsection{An Exact  Representation  for the Simulator-Based JSD Statistic}\label{thebolfistatsrepr} 

\begin{lemma}\label{mellcomp0}
Assume  Equation~(\ref{posass})  for $P_{\theta}$ and $  \widehat{P}_{\mathbf{D}}$. Let   $\widehat{M}:= \pi \widehat{P}_{\mathbf{D}} + (1-\pi)\widehat{Q}_{\theta}$ and $  \widehat{M}_{\rm imp}:= \pi   \widehat{P}_{\mathbf{D}}  + (1- \pi) P_{\theta}$. $\widehat{Q}_{\theta}   \in \mathbb{M}_{n}(\theta)$. Then 
\begin{eqnarray}\label{melcomp2} 
D_{\rm JS}\left(\widehat{P}_{\mathbf{D}}, \widehat{Q}_{\theta}   \right) &=&   D_{\rm JS} \left(  \widehat{P}_{\mathbf{D}},  P_{\theta} \right)   +(1-\pi)D_{\rm KL}(\widehat{Q}_{\theta},  P_{\theta}) \nonumber \\ 
&  & \\
& &  +  D_{\rm KL} \left(  \widehat{M}_{\rm imp},  P_{\theta} \right)  -   D_{\rm KL}( \widehat{M},  P_{\theta}). \nonumber
\end{eqnarray} 
\end{lemma}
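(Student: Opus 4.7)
The plan is to work entirely with the entropy identity (\ref{klinformationiudnet}) rather than the mixture-form definition (\ref{jsinformation}), because the log-density terms inside each $D_{\rm KL}(\cdot,P_\theta)$ on the right-hand side share the common reference $P_\theta$, which suggests that shifting from $D_{\rm JS}$ to $D_{\rm KL}(\cdot,P_\theta)$ will produce cancellations if one decomposes every KL-divergence into its entropy and cross-entropy parts.

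First I would subtract: by (\ref{klinformationiudnet}) the entropy $\pi H(\widehat{P}_{\mathbf D})$ cancels between $D_{\rm JS}(\widehat{P}_{\mathbf{D}},\widehat{Q}_{\theta})$ and $D_{\rm JS}(\widehat{P}_{\mathbf{D}},P_{\theta})$, leaving
\begin{equation*}
D_{\rm JS}(\widehat{P}_{\mathbf{D}},\widehat{Q}_{\theta}) - D_{\rm JS}(\widehat{P}_{\mathbf{D}},P_{\theta})
= H(\widehat{M}) - H(\widehat{M}_{\rm imp}) + (1-\pi)\bigl[H(P_{\theta}) - H(\widehat{Q}_{\theta})\bigr].
\end{equation*}
So it suffices to show that the right-hand side of (\ref{melcomp2}) minus $D_{\rm JS}(\widehat{P}_{\mathbf{D}},P_{\theta})$ equals this same quantity.

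Next I would expand each of the three KL terms on the RHS by $D_{\rm KL}(R,P_{\theta}) = -H(R) - \sum_i r_i\ln p_i(\theta)$. The entropies $-H(\widehat{Q}_\theta)$, $-H(\widehat{M}_{\rm imp})$, $+H(\widehat{M})$ appear immediately with the correct signs and weights. The remaining linear (cross-entropy) pieces combine as
\begin{equation*}
-(1-\pi)\sum_i \widehat q_i \ln p_i(\theta) -\sum_i \bigl(m_{{\rm imp},i}-\widehat m_i\bigr)\ln p_i(\theta).
\end{equation*}
Here I would use the key algebraic observation $\widehat{M}_{\rm imp}-\widehat{M} = (1-\pi)(P_\theta-\widehat{Q}_\theta)$, which is immediate from the definitions of the two mixtures; this substitutes the $(1-\pi)(p_i(\theta)-\widehat q_i)$ factor and splits the remaining cross-entropy into $(1-\pi)H(P_\theta)$ (from the $p_i(\theta)\ln p_i(\theta)$ portion) plus a term $+(1-\pi)\sum_i \widehat q_i\ln p_i(\theta)$ that exactly cancels the leftover from $(1-\pi)D_{\rm KL}(\widehat{Q}_\theta,P_\theta)$. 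After this cancellation only the four entropy terms remain, matching the expression above.

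There is no real obstacle beyond careful bookkeeping; the whole argument is a one-pass algebraic identity once one commits to the entropy/cross-entropy decomposition. The only mild care needed is the assumption (\ref{posass}) for $P_\theta$ (and for $\widehat P_{\mathbf D}$), which guarantees that all $\ln p_i(\theta)$ are finite so that the separation of $D_{\rm KL}$ into entropy plus cross-entropy is valid term by term; note that $\widehat Q_\theta$ itself need not have full support, since its zero categories simply contribute $0\ln 0 = 0$ and vanish from the sums. The identity is exact and combinatorial in nature, so no limiting or inequality arguments enter.
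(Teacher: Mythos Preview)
Your proof is correct but proceeds along a genuinely different route from the paper's. The paper invokes Tops{\o}e's compensation identity $D_{\rm JS}(P,Q)=\pi D_{\rm KL}(P,R)+(1-\pi)D_{\rm KL}(Q,R)-D_{\rm KL}(M,R)$ with $R=P_\theta$ to rewrite $D_{\rm JS}(\widehat P_{\mathbf D},\widehat Q_\theta)$, then separately computes $D_{\rm JS}(\widehat P_{\mathbf D},P_\theta)=\pi D_{\rm KL}(\widehat P_{\mathbf D},P_\theta)-D_{\rm KL}(\widehat M_{\rm imp},P_\theta)$ directly from the $\phi_{\rm JS}$-divergence expression (this is Equation~(\ref{plan})), and substitutes one into the other. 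You instead reduce both JSD terms to entropies via (\ref{klinformationiudnet}) and expand every $D_{\rm KL}(\cdot,P_\theta)$ into entropy plus cross-entropy, so that the linear relation $\widehat M_{\rm imp}-\widehat M=(1-\pi)(P_\theta-\widehat Q_\theta)$ forces the cross-entropies to collapse to $(1-\pi)H(P_\theta)$. In effect you are re-deriving the compensation identity inline rather than citing it. Your argument is more self-contained and makes the cancellation mechanism explicit; the paper's version is more structural, isolates the compensation identity as a reusable tool, and produces Equation~(\ref{plan}) as a by-product that the paper exploits again later (e.g.\ in the $\chi^2$ analysis of Section~\ref{secttjitvaa}). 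Either way the identity is purely algebraic, and your remarks on where (\ref{posass}) enters and why $\widehat Q_\theta$ need not have full support are accurate.
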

\begin{proof}
The compensation identity   \citet[Lemma 7]{topsoe1979information}, see also \citet[p.1603] {topsoe2000some},  tells  that for any 
$ R \in \mathbb{P}$   with ${\rm supp}\left( R \right)= {\cal A}$  and  $M= \pi P + (1-\pi)Q$  we have 
$
D_{\rm JS}( P, Q)= \pi D_{\rm KL}( P, R) + (1-\pi) D_{\rm KL}( Q, R)
-  D_{\rm KL}( M, R)$.
As ${\rm supp}\left(P_{\theta} \right)= {\cal A}$ we have  
$$
D_{\rm JS}\left(P, Q \right) =  \pi D_{\rm KL}(P,  P_{\theta}) + (1-\pi) D_{\rm KL}(Q,  P_{\theta})  -  D_{\rm KL}(M,  P_{\theta}). 
$$  
We take $P=\widehat{P}_{\mathbf{D}}$ and $ Q= $ $\widehat{Q}_{\theta} $  and  obtain 
\begin{equation}\label{melcomp} 
D_{\rm JS}\left(\widehat{P}_{\mathbf{D}}, \widehat{Q}_{\theta}   \right)  = \pi D_{\rm KL}(\widehat{P}_{\mathbf{D}},  P_{\theta})+(1-\pi)D_{\rm KL}(\widehat{Q}_{\theta},  P_{\theta})  - D_{\rm KL}( \widehat{M},  P_{\theta}). 
\end{equation}
   By Equation~(\ref{informationone})  and Equation~(\ref{jenshaphi})
\begin{eqnarray}\label{plan}
D_{\rm JS} \left(  \widehat{P}_{\mathbf{D}},  P_{\theta} \right)&=& D_{\phi_{\rm JS}}\left(  \widehat{P}_{\mathbf{D}},  P_{\theta} \right) = \sum_{i=1}^{k} p_{i} (\theta)\phi_{\rm JS}\left( \frac{ \widehat{p}_{i}}{p_{i} (\theta)}\right) \nonumber \\ 
   &=&\pi  \sum_{i=1}^{k} \widehat{p}_{i}\ln\left( \frac{ \widehat{p}_{i}}{p_{i} (\theta)}\right) \nonumber \\
& &  -
 \sum_{i=1}^{k}  \left(\pi  \widehat{p}_{i} + (1- \pi)p_{i} (\theta)\right) \ln \left(\frac{\pi  \widehat{p}_{i} + (1- \pi)p_{i} (\theta) }{p_{i} (\theta) }\right)\nonumber \\
& =& \pi D_{\rm KL} \left(  \widehat{P}_{\mathbf{D}},  P_{\theta} \right) - D_{\rm KL} \left(  \widehat{M}_{\rm imp}, P_{\theta} \right).
 \end{eqnarray}
 When we substitute  $\pi D_{\rm KL} \left(  \widehat{P}_{\mathbf{D}},  P_{\theta} \right)$  from  Equation~(\ref{plan})  in Equation~(\ref{melcomp}) we have  the equality in Equation~(\ref{melcomp2}) as claimed. \end{proof}

\begin{example}
For the barycenter in Equation~(\ref{unif})  we have 
$
D_{\rm KL}\left(  \widehat{P}_{\mathbf{D}},  P_{U} \right)=$ $ - H\left(   \widehat{P}_{\mathbf{D}} \right) -
\sum_{i=1}^{k}\frac{n_{o,i}}{n_{o}} \ln (1/k). 
$
Hence  we get from  Equation~(\ref{plan})
$$
  D_{\rm JS} \left(  \widehat{P}_{\mathbf{D}},  P_{\theta} \right)   -   \pi D_{\rm KL}\left(  \widehat{P}_{\mathbf{D}},  P_{U} \right) = \ln \frac{{\cal L}_{\mathbf{D}} (1/k)}{{\cal L}_{\mathbf{D}}(\theta)}  -
D_{\rm KL} \left(  \widehat{M}_{\rm imp},  P_{\theta} \right).$$
   Here  $ \ln \frac{{\cal L}_{\mathbf{D}} (1/k)}{{\cal L}_{\mathbf{D}}(\theta)}=\sum_{i=1}^{k}\frac{n_{o,i}}{n_{o}}\ln \frac{1/k}{p_{j}(\theta)} $ is a  log-likelihood ratio  between the uniform distribution  and  the  implicit likelihood.  
\end{example}
\subsection{Almost Sure Convergence of the Simulator-Based JSD Statistic to the JSD Statistic,  as $n \rightarrow +\infty$} \label{bolfianal}
	 
 \begin{theorem}\label{nskonv}
Assume  Equation~(\ref{posass})  for $P_{\theta}$ and $ \widehat{P}_{\mathbf{D}}$. $\widehat{Q} _{\theta} \in \mathbb{M}_{n}(\theta)$.  Then 
\begin{equation}\label{nskonv2}
 D_{\rm JS}\left(\widehat{P}_{\mathbf{D}}, \widehat{Q}_{\theta}   \right)  \rightarrow  D_{\rm JS} \left(  \widehat{P}_{\mathbf{D}},  P_{\theta}\right).
\end{equation} 
$P_{\theta}$-almost surely, as $ n\rightarrow +\infty$. 
\end{theorem}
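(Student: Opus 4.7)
The plan is to start from the exact representation in Lemma \ref{mellcomp0}, which gives
\[
D_{\rm JS}\!\left(\widehat{P}_{\mathbf{D}}, \widehat{Q}_{\theta}\right) - D_{\rm JS}\!\left(\widehat{P}_{\mathbf{D}}, P_{\theta}\right)
= (1-\pi)\,D_{\rm KL}(\widehat{Q}_{\theta}, P_{\theta}) + D_{\rm KL}(\widehat{M}_{\rm imp}, P_{\theta}) - D_{\rm KL}(\widehat{M}, P_{\theta}),
\]
so the statement reduces to showing that the right-hand side tends to $0$ $P_{\theta}$-a.s.\ as $n\to\infty$. Note that $\widehat{P}_{\mathbf{D}}$ is held fixed (it depends only on $n_{o}$), while the randomness driving the limit enters entirely through $\widehat{Q}_{\theta}$.

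First I would invoke the strong law of large numbers for the multinomial counts. By Equation~(\ref{Qhat2}) each component $\widehat{q}_{i} = \xi_{i}/n$ is a sample mean of i.i.d.\ Bernoulli$(p_{i}(\theta))$ indicators, so $\widehat{q}_{i}\to p_{i}(\theta)$ $P_{\theta}$-a.s.\ for each $i=1,\ldots,k$. Taking a single null set gives $\widehat{Q}_{\theta}\to P_{\theta}$ componentwise a.s., and consequently
\[
\widehat{M} = \pi\widehat{P}_{\mathbf{D}} + (1-\pi)\widehat{Q}_{\theta} \longrightarrow \pi\widehat{P}_{\mathbf{D}} + (1-\pi)P_{\theta} = \widehat{M}_{\rm imp}
\]
$P_{\theta}$-a.s. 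In particular the total variation distances $V(\widehat{Q}_{\theta},P_{\theta})$ and $V(\widehat{M},\widehat{M}_{\rm imp}) = (1-\pi)V(\widehat{Q}_{\theta},P_{\theta})$ tend to $0$ a.s.

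Next I would exploit the assumption~(\ref{posass}) that $P_{\theta}$ lies in the open simplex: because $\min_{i} p_{i}(\theta)>0$, the map $R\mapsto D_{\rm KL}(R,P_{\theta}) = \sum_{i} r_{i}\ln r_{i} - \sum_{i} r_{i}\ln p_{i}(\theta)$ is continuous on all of $\triangle_{k-1}$ (with the convention $0\ln 0 = 0$ making $R\mapsto -H(R)$ continuous). Hence $D_{\rm KL}(\widehat{Q}_{\theta},P_{\theta})\to D_{\rm KL}(P_{\theta},P_{\theta})=0$ and $D_{\rm KL}(\widehat{M},P_{\theta})\to D_{\rm KL}(\widehat{M}_{\rm imp},P_{\theta})$ a.s. Plugging these limits back into the displayed identity yields the claim. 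If one prefers a quantitative route, the entropy representation~(\ref{klinformationiudnet}) reduces the task to controlling $|H(\widehat{Q}_{\theta}) - H(P_{\theta})|$ and $|H(\widehat{M}) - H(\widehat{M}_{\rm imp})|$, both of which are dominated for large $n$ by the bound of Lemma~\ref{lemmakontbd} applied to the vanishing variation distances above.

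The proof is essentially routine once Lemma \ref{mellcomp0} is in hand; the only mild subtlety is the continuity of $D_{\rm KL}(\cdot,P_{\theta})$ at empirical distributions that may lie on the boundary of $\triangle_{k-1}$, which is why the full-support assumption on $P_{\theta}$ (rather than on $\widehat{P}_{\mathbf{D}}$ or $\widehat{Q}_{\theta}$) is the critical hypothesis to keep in mind.
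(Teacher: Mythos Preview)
Your proof is correct and follows essentially the same route as the paper: both start from the exact decomposition of Lemma~\ref{mellcomp0} and then show that the three right-hand terms combine to zero $P_{\theta}$-a.s. The only minor difference is in the finishing step. The paper bounds $D_{\rm KL}(\widehat{Q}_{\theta},P_{\theta})$ via the reverse Pinsker inequality (Lemma~\ref{dklkonv}) and controls $|D_{\rm KL}(\widehat{M}_{\rm imp},P_{\theta}) - D_{\rm KL}(\widehat{M},P_{\theta})|$ by writing $D_{\rm KL}(R,P_{\theta}) = -H(R) + \sum_{j} r_{j}\ln(1/p_{j}(\theta))$ and applying Lemma~\ref{lemmakontbd} together with Lemma~\ref{totvarlemma}; you instead invoke the SLLN for the multinomial frequencies and the continuity of $R\mapsto D_{\rm KL}(R,P_{\theta})$ on the closed simplex, which is cleaner for the bare a.s.\ statement. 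Your aside about the quantitative route via Lemma~\ref{lemmakontbd} is exactly what the paper does, so you have in fact sketched both arguments.
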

\begin{proof} By Equation~(\ref{melcomp2}) in  Lemma \ref{mellcomp0} we have 
\begin{eqnarray}\label{melcomp22} 
\left | D_{\rm JS}\left(\widehat{P}_{\mathbf{D}}, \widehat{Q}_{\theta}   \right) -  D_{\rm JS} \left(  \widehat{P}_{\mathbf{D}},  P_{\theta} \right)\right | &\leq &    (1-\pi)D_{\rm KL}(\widehat{Q}_{\theta},  P_{\theta}) \nonumber \\ 
&  & \\
& +&  \left |  D_{\rm KL} \left(  \widehat{M}_{\rm imp},  P_{\theta} \right)  -   D_{\rm KL}( \widehat{M},  P_{\theta})\right |. \nonumber
\end{eqnarray} 
The first term in the right hand side of  Equation~(\ref{melcomp22})   is  shown to converge to $0$ $P_{\theta}$- a.s., as $n \rightarrow +\infty$  in Lemma \ref{dklkonv} of Appendix \ref{nasymptotiksec}. We consider the second term. 

By definition of  $D_{\rm KL}$  we have 
$$
D_{\rm KL} \left(  \widehat{M}_{\rm imp},  P_{\theta} \right)   = -H\left(  \widehat{M}_{\rm imp} \right) + 
\sum_{j=1}^{k} \widehat{m}_{j,\rm imp} \ln \frac{1}{p_{j}(\theta)}
$$
and analogously for  $D_{\rm KL}( \widehat{M},  P_{\theta})$.  By the triangle inequality we get 
\begin{eqnarray}\label{melcomp2245} 
\left |  D_{\rm KL} \left(  \widehat{M}_{\rm imp},  P_{\theta} \right)  -   D_{\rm KL}( \widehat{M},  P_{\theta})\right | &\leq &          \left |   H\left(  \widehat{M}\right)  - H\left(  \widehat{M}_{\rm imp} \right)   \right | \nonumber \\ 
&  & \\
& +&  \sum_{j=1}^{k}\left |  \widehat{m}_{j,\rm imp} -  \widehat{m}_{j} \right | \ln \frac{1}{p_{j}(\theta)},  \nonumber
\end{eqnarray} 
since  $\frac{1}{p_{j}(\theta)} >1$, and where we used the assumption in  Equation~(\ref{posass}) to ensure  that $ \ln \frac{1}{p_{j}(\theta)} < + \infty$. 

Next,  we shall invoke Lemma \ref{lemmakontbd}.  We have here 
$$
V\left(  \widehat{M}, \widehat{M}_{\rm imp} \right)  = \sum_{j=1}^{k}  \left |  \widehat{m}_{j,\rm imp} -  \widehat{m}_{j} \right |
$$
$$
=  \sum_{j=1}^{k} | \left( \pi \widehat{p}_{j=1}^{k} + (1-\pi) p_{j}(\theta)\right)-  \left( \pi \widehat{p}_{j=1}^{k} +  (1-\pi)\widehat{q}_{j} \right | = (1-\pi)  V \left(\widehat{Q}_{\theta}   , P_{\theta} \right).
$$
By Lemma \ref{totvarlemma}, $ V\left(\widehat{Q}_{\theta}, P_{\theta}\right)  \rightarrow 0$, as $n \rightarrow +\infty$. Hence there is an integer $N$ such that  $ (1-\pi)   V\left(\widehat{Q}_{\theta}, P_{\theta}\right) <   V\left(\widehat{Q}_{\theta}, P_{\theta}\right)<1/2$ for $ n >N$  $ P_{\theta}$-a.s..
Hence Equation~(\ref{cskorner})  entails 
$$
\left |   H\left(  \widehat{M}\right)  - H\left(  \widehat{M}_{\rm imp} \right)   \right | \leq   - V\left(\widehat{Q}_{\theta}, P_{\theta}\right)  \ln\left( \frac{ V\left(\widehat{Q}_{\theta}, P_{\theta}\right)}{k}\right).
$$
 Since $V \ln V  \rightarrow 0$, as $V \downarrow 0$, we get again  by Lemma  \ref{totvarlemma} that 
$$
\left |   H\left(  \widehat{M}\right)  - H\left(  \widehat{M}_{\rm imp} \right)   \right| \rightarrow 0, \quad  P_{\theta}  \mbox{-a.s.},
$$
as  $n \rightarrow +\infty$. 

The second term in the right hand side of Equation~(\ref{melcomp2245})  we have 
$$
 \sum_{j=1}^{k}\left |  \widehat{m}_{i,\rm imp} -  \widehat{m}_{i} \right | \ln \frac{1}{p_{j}(\theta)} \leq 
 \ln \left[ \frac{1}{ \min_{1 \leq j \leq k}p_{j}(\theta)}\right] V\left(  \widehat{M}_{\rm imp} , \widehat{M}\right), 
$$
where $\min_{1 \leq j \leq k}p_{j}(\theta) >0$ by the assumption in Equation~(\ref{posass}).
Hence, by  the computations done  for the first term in the right hand side of Equation~(\ref{melcomp2245}),  the sum 
in the left hand side of the inequality above  converges to $0$, $P_{\theta}$- a.s., as $n \rightarrow +\infty$  by  Lemma \ref{totvarlemma}. When these results are used in Equation~(\ref{melcomp22}), the asserted convergence follows as claimed.
\end{proof}

\begin{example}\label{bern11} {\rm We continue with  example  \ref{berno}.   $\widehat{Q}_{\theta} \in \mathbb{M}_{n}(\theta)$, so that 
$$
\widehat{Q}_{\theta}(x)  = \left (1-\sum_{i=1}^{n} \frac{ \xi_{i}}{n}\right)^{[x=0]} \left (\sum_{i=1}^{n} \frac{ \xi_{i}}{n}\right)^{[x=1]}, \quad x \in \{0,1 \}, 
$$
where  $\xi_{i}$ $\sim Be(\theta)$. 
We have by Equation~(\ref{jssymmen})
 \begin{equation}\label{jssymmen22} 
D_{\rm JS,1/2}( \widehat{P}_{\mathbf{D}},\widehat{Q}_{\theta})  = B \left( \frac{1}{2} \widehat{p} + \frac{1}{2}\sum_{i=1}^{n}\frac{ \xi_{i}}{n}\right) - \frac{1}{2} B\left( \sum_{i=1}^{n}\frac{ \xi_{i}}{n}\right) - \frac{1}{2}B( \widehat{p}).
 \end{equation}
When  $n$  is  large enough, Lemma \ref{totvarlemma} predicts  that
$\frac{1}{n}\sum_{i=1}^{n}  \xi_{i}  \approx \theta$.
By the proposition above,  (or as the binary  entropy function $B(\pi)$  is a  continuous function of a single variable), we get in  Equation~(\ref{jssymmen22})
 \begin{equation}\label{jssymmen222} 
D_{\rm JS,1/2}( \widehat{P}_{\mathbf{D}},\widehat{Q}_{\theta}) \rightarrow  B \left( \frac{1}{2} \widehat{p} + \frac{1}{2} \theta \right) - \frac{1}{2} B\left( \theta\right) - \frac{1}{2}B( \widehat{p})= D_{\rm JS,1/2}( \widehat{P}_{\mathbf{D}}, P_{\theta}). 
 \end{equation}
 Let us set $ -\frac{1}{2n_{o}}\ln {\cal L}_{\mathbf{D}}(\theta) =$ $  - \frac{1}{2} \left[ \widehat{p} \ln \theta +  (1-\widehat{p}) \ln(1-\theta)   \right] $. By maximum likelihood for ${\rm Be}(\theta)$,  $ -\frac{1}{2n_{o}}\ln {\cal L}_{\mathbf{D}}(\theta) \geq  -\frac{1}{2n_{o}}\ln {\cal L}_{\mathbf{D}}\left( \widehat{p} \right) =\frac{1}{2} B\left( \widehat{p} \right)$. Hence, by the right hand side of Equation~(\ref{jssymmen222}),
$ D_{\rm JS,1/2}\left( \widehat{P}_{\mathbf{D}}, P_{\widehat{p}}\right)= 0$. Hence,  
if $\theta$ is in a small neighborhood of  $ \widehat{p}$, 
$
D_{\rm JS,1/2}( \widehat{P}_{\mathbf{D}},\widehat{Q}_{\theta}) $  $\approx   0
$. This says that the simulator-based JSD statistic  $D_{\rm JS,1/2}( \widehat{P}_{\mathbf{D}},\widehat{Q}_{\theta})$ is approximately minimized with a high simulator probability.} \end{example}
\begin{theorem}\label{misspec11}
 Assume that Equation~(\ref{posass}) holds for
$P_{o} \in  \mathbb{P}$ and for  any $P_{\theta} \in \mathbb{M}$. Let  $ {\mathbf D}= (D_{1}, \ldots, D_{n_{o}}) $ be    an i.i.d. $n_{o}$-sample  $\sim$ $P_{o}$. Then  it holds that 
\begin{equation}\label{misspec1}
\lim_{n_{o} \rightarrow +\infty} D_{\rm JS} \left( \widehat{P}_{\mathbf D}, P_{\theta}\right) =  D_{\rm JS} \left( P_{o}, P_{\theta}\right). 
\end{equation}
$P_{o}$-a.s.. 
\end{theorem}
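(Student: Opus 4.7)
The plan is to recognize this as a pure continuity-plus-SLLN statement, since it involves only the observed data $\mathbf{D}$ and the fixed distribution $P_\theta$, with no simulator randomness. First, I would apply the strong law of large numbers coordinate-wise. Since the $D_j$ are i.i.d.\ $\sim P_o$, the relative frequency $\widehat{p}_i = n_{o,i}/n_o$ is an average of i.i.d.\ Bernoulli$(p_{o,i})$ indicators, hence $\widehat{p}_i \to p_{o,i}$ $P_o$-almost surely for every $i \in \{1,\ldots,k\}$. Since $k$ is finite, this yields
\begin{equation*}
V\bigl(\widehat{P}_{\mathbf{D}}, P_o\bigr) = \sum_{i=1}^{k} \bigl| \widehat{p}_i - p_{o,i}\bigr| \;\longrightarrow\; 0 \qquad P_o\text{-a.s.}
\end{equation*}

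Next I would exploit the Shannon-entropy representation from Equation~(\ref{klinformationiudnet}). Writing $\widehat{M} = \pi \widehat{P}_{\mathbf{D}} + (1-\pi) P_\theta$ and $M_o = \pi P_o + (1-\pi) P_\theta$, we get
\begin{equation*}
\bigl| D_{\rm JS}(\widehat{P}_{\mathbf{D}}, P_\theta) - D_{\rm JS}(P_o, P_\theta) \bigr|
\;\leq\; \bigl| H(\widehat{M}) - H(M_o) \bigr| + \pi \bigl| H(\widehat{P}_{\mathbf{D}}) - H(P_o)\bigr|,
\end{equation*}
because the $H(P_\theta)$ contributions cancel. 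Observing that the linear nature of total variation under a common affine shift gives $V(\widehat{M}, M_o) = \pi V(\widehat{P}_{\mathbf{D}}, P_o)$, both relevant total variations tend to $0$ $P_o$-a.s.\ by the previous step. Hence eventually (a.s.) both are below the threshold $1/2$ needed to apply Lemma~\ref{lemmakontbd}, which then bounds each entropy difference by $-V \ln(V/k)$. Since $-x\ln(x/k) \to 0$ as $x \to 0^+$, both entropy differences tend to $0$ $P_o$-almost surely, and the claim follows.

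I do not anticipate a serious obstacle: the argument is structurally a simplified version of the second half of the proof of Theorem~\ref{nskonv}, where the only fluctuating quantity is now $\widehat{P}_{\mathbf{D}}$ rather than both $\widehat{P}_{\mathbf{D}}$ and $\widehat{Q}_\theta$. The positivity assumption~(\ref{posass}) on $P_o$ and $P_\theta$ is not strictly needed for this proof (Shannon entropy is continuous on all of $\triangle_{k-1}$ under the convention $0\ln 0 = 0$), but it is convenient to retain it so that Lemma~\ref{lemmakontbd} and the framework from the preceding results apply verbatim. The only minor care point is to ensure the random threshold event $\{V(\widehat{P}_{\mathbf{D}}, P_o) < 1/2\}$ holds from some (random) $n_o$ onwards $P_o$-a.s., which is immediate from Step~1.
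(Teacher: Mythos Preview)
Your proof is correct and takes a genuinely different, more direct route than the paper. The paper's proof (in Appendix~\ref{stokka}) works via the KL decomposition of Equation~(\ref{plan}), writing $D_{\rm JS}(\widehat{P}_{\mathbf{D}},P_\theta)=\pi D_{\rm KL}(\widehat{P}_{\mathbf{D}},P_\theta)-D_{\rm KL}(\widehat{M}_{\rm sp},P_\theta)$, and then handles the two KL terms separately: the first via the auxiliary Lemma~\ref{misspec} (which uses the Cover--Thomas identity~(\ref{fshkldid}) and the SLLN applied to the log-likelihood ratio), the second by splitting into an entropy difference and a cross-entropy difference and invoking Lemma~\ref{lemmakontbd}. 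You instead go straight to the entropy representation~(\ref{klinformationiudnet}), so that the difference $D_{\rm JS}(\widehat{P}_{\mathbf{D}},P_\theta)-D_{\rm JS}(P_o,P_\theta)$ collapses to a pure entropy difference, and a single application of Lemma~\ref{lemmakontbd} (twice) finishes the job. Your argument is shorter and avoids the detour through Lemma~\ref{misspec} entirely; the paper's route has the side benefit of establishing the KL convergence~(\ref{misspec13}) as an intermediate result of independent interest, and keeps the proof structurally parallel to that of Theorem~\ref{nskonv}.
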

The proof is found  in  Appendix  \ref{stokka}.  Proposition \ref{misspec12} in the Appendix \ref{nasymptotiksec} is more general, as it covers two cases $i)$  $P_{o}=P_{\theta_{o}} \in \mathbb{M}$ and 
 $ii)$  $P_{o} \notin \mathbb{M}$.

By  arguments  similar to those used above we prove next a  continuity property of  JSD. 
\begin{theorem}\label{nskonv3}
Assume  Equation~(\ref{posass})  for  every  $P_{\theta} \in \mathbb{M}$. Assume that   $p_{j}(\theta)$  are   a continuous functions of $\theta \in \Theta$ for $j=1, \ldots,k$. Let    Then   $D_{\rm JS}\left(\widehat{P}_{\mathbf{D}},  P_{\theta}   \right)$ is a continuous function of 
$\theta \in \Theta$.
\end{theorem}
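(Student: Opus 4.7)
The plan is to work from the entropy representation in Equation~(\ref{klinformationiudnet}), which gives
\[
D_{\rm JS}\!\left(\widehat{P}_{\mathbf{D}}, P_{\theta}\right) \;=\; H\!\left(\widehat{M}_{\rm imp}(\theta)\right) \;-\; \pi\, H\!\left(\widehat{P}_{\mathbf{D}}\right) \;-\; (1-\pi)\, H(P_{\theta}),
\]
where $\widehat{M}_{\rm imp}(\theta) = \pi \widehat{P}_{\mathbf{D}} + (1-\pi) P_{\theta}$ has components $\widehat{m}_{j,\mathrm{imp}}(\theta) = \pi \widehat{p}_{j} + (1-\pi) p_{j}(\theta)$. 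Since $\widehat{P}_{\mathbf{D}}$ is fixed, the middle term does not depend on $\theta$, so continuity of the left-hand side reduces to continuity of $\theta \mapsto H(P_{\theta})$ and $\theta \mapsto H(\widehat{M}_{\rm imp}(\theta))$.

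For the first map, I would write $H(P_{\theta}) = -\sum_{j=1}^{k} p_{j}(\theta)\ln p_{j}(\theta)$ and observe that each summand is the composition of the assumed continuous function $\theta \mapsto p_{j}(\theta)$ with the map $x \mapsto -x\ln x$. The latter is continuous on $[0,1]$ under the convention $0\ln 0 = 0$ that was already adopted in Section~\ref{jsd}, so each summand is continuous in $\theta$, and a finite sum of continuous functions is continuous. For the second map, the mixture coordinates $\widehat{m}_{j,\mathrm{imp}}(\theta)$ are affine combinations of a constant $\widehat{p}_{j}$ and the continuous function $p_{j}(\theta)$, hence continuous in $\theta$; applying exactly the same argument to $H(\widehat{M}_{\rm imp}(\theta)) = -\sum_{j=1}^{k} \widehat{m}_{j,\mathrm{imp}}(\theta) \ln \widehat{m}_{j,\mathrm{imp}}(\theta)$ yields its continuity.

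There is essentially no obstacle here; the proof is a direct consequence of the entropy decomposition combined with the continuity of $x \mapsto x\ln x$ on $[0,1]$. The positivity hypothesis~(\ref{posass}) on the $P_{\theta}$ is not logically necessary for continuity per se, but it is what legitimizes using Equation~(\ref{klinformationiudnet}) without any subtlety at the boundary of $\triangle_{k-1}$ and guarantees that all the quantities that appear on the way are finite. The only mildly delicate point worth flagging in writing up is the extension by continuity $0\ln 0 := 0$, which makes the entropy a continuous map on the whole simplex rather than only on its topological interior $\stackrel{o}{\triangle}_{k-1}$.
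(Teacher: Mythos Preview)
Your proof is correct and considerably more direct than the paper's. You exploit the entropy decomposition~(\ref{klinformationiudnet}) and reduce everything to the continuity of $x\mapsto -x\ln x$ on $[0,1]$, which is all that is really needed here.

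The paper instead argues via the compensation identity of Lemma~\ref{mellcomp0}, bounding $\left|D_{\rm JS}(\widehat{P}_{\mathbf{D}},P_{\theta})-D_{\rm JS}(\widehat{P}_{\mathbf{D}},P_{\theta_{0}})\right|$ by a combination of $D_{\rm KL}(P_{\theta},P_{\theta_{0}})$ and a difference of KL terms, and then controls each piece separately using the reverse Pinsker inequality~(\ref{sasonpinsker}), the Csisz\'ar--K\"orner entropy bound of Lemma~\ref{lemmakontbd}, and an elementary logarithm estimate. This is substantially more work, but it yields an explicit quantitative bound (Equation~(\ref{melcomp4})) in terms of $V(P_{\theta},P_{\theta_{0}})$, i.e.\ a modulus of continuity, which your qualitative argument does not immediately provide. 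For the bare statement of the theorem your route is the cleaner one; the paper's machinery pays off only if one wants to know \emph{how} continuous the map is. Your remark that~(\ref{posass}) is not strictly required for continuity under the $0\ln 0=0$ convention is also accurate, whereas the paper's proof genuinely uses positivity (in the reverse Pinsker step and in bounding $\ln(1/p_{j}(\theta))$).
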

\begin{proof} Again by  Equation~(\ref{melcomp2}) in  Lemma \ref{mellcomp0} we obtain 
\begin{eqnarray}\label{melcomp2233} 
\left | D_{\rm JS}\left(\widehat{P}_{\mathbf{D}}, P_{\theta}   \right) -  D_{\rm JS} \left(  \widehat{P}_{\mathbf{D}},  P_{\theta_{o}} \right)\right | &\leq &    (1-\pi)D_{\rm KL}(P_{\theta},  P_{\theta_{o}}) \nonumber \\ 
&  & \\
& +&  \left |  D_{\rm KL} \left( M_{\theta},  P_{\theta} \right)  -   D_{\rm KL}\left(M_{\theta_{o}},  P_{\theta_{o}} \right)\right |, \nonumber
\end{eqnarray} 
where $M_{\theta}= \pi \widehat{P}_{\mathbf{D}} + (1-\pi) P_{\theta}$ and  $M_{\theta_{o}}= \pi \widehat{P}_{\mathbf{D}} + (1-\pi) P_{\theta_{o}}$.
The reverse Pinsker inequality $D_{\rm KL} (P,Q) \leq \frac{1}{Q_{min}}V( P, Q)^{2}$, cf. the discussion of  Equation~(\ref{sasonpinsker}), 
yields 
\begin{equation}\label{sasonverdu}
D_{\rm KL}(P_{\theta},  P_{\theta_{o}}) \leq  \frac{1}{\min_{j} p_{j}\left(\theta_{o}\right)}V\left(  P_{\theta} , P_{\theta_{o}} \right)^{2}.
\end{equation}
 As in the preceding proofs, counting here even   Appendix  \ref{stokka}, 
\begin{eqnarray}\label{melcomp5} 
\left |  D_{\rm KL} \left( M_{\theta},  P_{\theta} \right)  -   D_{\rm KL}\left(M_{\theta_{o}},  P_{\theta}\right)\right | &\leq &          \left |   H\left(  M_{\theta} \right)  - H\left( M_{\theta_{o}} \right)   \right | \nonumber \\ 
&  & \\
& +& \left| \sum_{j=1}^{k} m_{j,{\theta}}  \ln \frac{1}{p_{j}(\theta)} - \sum_{j=1}^{k}m_{j,{\theta_{o}}}  \ln \frac{1}{p_{j}\left(\theta_{o}\right)} \right|,  \nonumber
\end{eqnarray} 
where we used Equation~(\ref{posass}). Here we resort  again  to Lemma \ref{lemmakontbd}.  We have 
$$
V\left( M_{\theta}, M_{\theta_{o}} \right)  = (1-\pi)  V \left(P_{\theta}   , P_{\theta_{o}} \right).
$$
Thus Equation~(\ref{cskorner})  entails  for $ V\left( P_{\theta}, P_{\theta_{o}}\right)  <1/2$     that 
\begin{equation}\label{bound2}
\left |   H\left(  M_{\theta} \right)  - H\left( M_{\theta_{o}} \right)   \right |  \leq   - V\left(P_{\theta}, P_{\theta_{o}}\right)  \ln\left( \frac{  V\left(P_{\theta}, P_{\theta_{o}}\right) }{k}\right).
\end{equation}
For the second term in the right hand side of Equation~(\ref{melcomp5}) we have  by definitions of  $M_{\theta}$ and  $M_{\theta_{o}}$
$$
  \sum_{j=1}^{k} m_{j,{\theta}}  \ln \frac{1}{p_{j}\left(\theta\right)} - \sum_{j=1}^{k}m_{j,{\theta_{o}}}  \ln \frac{1}{p_{j}\left(\theta_{o}\right)} = 
$$
$$
\pi  \sum_{j=1}^{k}  \widehat{p}_{j} \ln \frac{p_{j}\left(\theta_{o} \right) }{p_{j}(\theta)} + 
(1-\pi) \left[  H\left( P_{\theta_{o}} \right)   - H\left( P_{\theta} \right)  \right].
$$
Hence 
$$
\left| \sum_{j=1}^{k} m_{j,{\theta}}  \ln \frac{1}{p_{j}(\theta)} - \sum_{j=1}^{k}m_{j,{\theta_{o}}}  \ln \frac{1}{p_{j}\left(\theta_{o}\right)} \right| \leq  
$$
$$
\pi  \sum_{j=1}^{k}  \widehat{p}_{j}  \left| \ln \frac{p_{j}\left(\theta_{o} \right) }{p_{j}\left(\theta\right)} \right|  + 
(1-\pi)   \left|  H\left( P_{\theta} \right)  - H\left( P_{\theta_{o}} \right) \right|. 
$$
Once more, if  $  V\left(P_{\theta}, P_{\theta_{o}}\right)  <1/2$,   Equation~(\ref{cskorner})  implies  
\begin{equation}\label{bound3}
   \left| H\left( P_{\theta_{o}} \right)   - H\left( P_{\theta} \right) \right| \leq -V\left(P_{\theta}, P_{\theta_{o}}\right)  \ln\left( \frac{  V\left(P_{\theta}, P_{\theta_{o}}\right) }{k}\right).
\end{equation}
Furthermore, 
$$
 \sum_{j=1}^{k}  \widehat{p}_{j}  \left| \ln \frac{p_{j}\left(\theta_{o} \right) }{p_{j}\left(\theta\right)} \right|=  
 \sum_{j=1}^{k}  \widehat{p}_{j}  \left| \ln\left[ 1  + \frac{p_{j}\left(\theta_{o} \right)-p_{j}\left(\theta\right) }{p_{j}\left(\theta\right)}\right] \right|.
$$ 
Since  $ \widehat{p}_{j}  <1$ and $x=(p_{j}\left(\theta_{o} \right)-p_{j}\left(\theta\right) )/ p_{j}\left(\theta\right)> -1$, so that $\ln(1+x) \leq x$,  the right hand side is bounded upwards by 
\begin{equation}\label{bound4}
\leq   \frac{1}{ \min_{ 1 \leq j \leq k} p_{j}\left(\theta\right) } \sum_{j=1}^{k} \left|p_{j}\left(\theta\right)  - p_{j}\left(\theta_{o} \right) \right|. 
\end{equation}
By Equations~(\ref{sasonverdu}),  (\ref{bound2}),  (\ref{bound3}) and  (\ref{bound4})  we have
\begin{eqnarray}\label{melcomp4} 
\left | D_{\rm JS}\left(\widehat{P}_{\mathbf{D}}, P_{\theta}   \right) -  D_{\rm JS} \left(  \widehat{P}_{\mathbf{D}},  P_{\theta_{o}} \right)\right | &\leq &    \frac{ (1-\pi) }{\min_{j} p_{j}\left(\theta_{o}\right)}V\left(  P_{\theta} , P_{\theta_{o}} \right)^{2} \nonumber \\ 
& -&  V\left(P_{\theta}, P_{\theta_{o}}\right)  \ln\left( \frac{  V\left(P_{\theta}, P_{\theta_{o}}\right) }{k}\right). \nonumber \\
& & \\
& - & (1-\pi)V\left(P_{\theta}, P_{\theta_{o}}\right)  \ln\left( \frac{  V\left(P_{\theta}, P_{\theta_{o}}\right) }{k}\right) \nonumber  \\  
& +&  \frac{1}{ \min_{ 1 \leq j \leq k} p_{j}\left(\theta\right) } \sum_{j=1}^{k} \left|p_{j}\left(\theta\right)  - p_{j}\left(\theta_{o} \right) \right|. \nonumber 
\end{eqnarray} 
We have  $V\left(  P_{\theta} , P_{\theta_{o}} \right)$ $  = \sum_{j=1}^{k}\left|p_{j}(\theta)-  p_{j}\left(\theta_{o} \right) \right|$.  Let next   $||\theta - \theta_{o}||$   be  any norm on $\Theta$,  all  norms in a finite dimensional space are equivalent, as is well known,  see, e.g.,   \citet{johnson2012notes}.  Since each  $p_{j}(\theta)$ is a  continuous functions of  $\theta$,  there exists  for every  $\epsilon >0$   a  $\delta_{1,\epsilon} >0$, such that as soon as  $||\theta - \theta_{o}|| < \delta_{1,\epsilon}$,
$ \frac{ (1-\pi) }{\min_{j} p_{j}\left(\theta_{o}\right)}V\left(  P_{\theta} , P_{\theta_{o}} \right)^{2}  \leq  \epsilon/4$.  By the same argument there exist  $\delta_{2,\epsilon}$ such that as soon as  $||\theta - \theta_{o}|| < \delta_{2,\epsilon}$,
$ 0 \leq   -  V\left(P_{\theta}, P_{\theta_{o}}\right)  \ln\left( \frac{  V\left(P_{\theta}, P_{\theta_{o}}\right) }{k}\right)
\leq  \epsilon/4$.  By continuing this line we find   $\delta_{3,\epsilon}$  and $\delta_{4,\epsilon}$ such that 
the two last terms in the right hand side of Equation~(\ref{melcomp4}) are both bounded by   $\epsilon/4$ for $||\theta - \theta_{o}||$ correspondingly small.  As soon as $||\theta - \theta_{o}|| < \min \left(  \delta_{1,\epsilon},  \delta_{2,\epsilon},   \delta_{3,\epsilon}, \delta_{4,\epsilon}\right)$ we get that 
$$
\left | D_{\rm JS}\left(\widehat{P}_{\mathbf{D}}, P_{\theta}   \right) -  D_{\rm JS} \left(  \widehat{P}_{\mathbf{D}},  P_{\theta_{o}} \right)\right |\leq \epsilon, 
$$
which proves the asserted continuity.
\end{proof}

 \section{The  Existence of the Minimum  JSD Estimate}\label{esitmateftn}

The preceding section leads to the study of  the minimum JSD estimate of $\theta$,
$$
\widehat{\theta}_{\rm JS}\left(\mathbf{D} \right)= {\rm argmin}_{\theta \in \Theta} D_{\rm JS}( \widehat{P}_{\mathbf{D}}, P_{\theta}).
$$
This is a special case of   the minimum $\phi$-divergence estimate  treated,  e.g., in 
\citet[Ch. 5.1--5.3]{pardo2018statistical}.    The minimum $\phi$-divergence estimate for discrete (incl. categorical) distributions is  studied  in \citet{morales1995asymptotic} and 
  \citet[pp. 388$-$391]{vajda1989theory}. The existence and measurability  result 
in the next proposition is not found in  \citet{morales1995asymptotic}. It is shown in \citet  {corremkos}  that 
$\widehat{\theta}_{\rm JS}\left(\mathbf{D} \right)$  and the maximum likelihood estimate $ \widehat{\theta}_{\rm ML}\left(\mathbf{D} \right)$ agree  asymptotically, as $n \rightarrow +\infty$.  

In information  theory   $\widehat{P}_{\mathbf{D}} $     is  called the type of $ {\mathbf D}$ on ${\cal A}$, see  
  \citet[Part I Ch.2]{csiszar2011information} and   \citet[Ch. 11.1]{cover2012elements}. 
  The type class of $ \widehat{P}_{\mathbf{D}}$  is defined , see   \citet[Ch. 11.1--11.3]{cover2012elements},  
  by  
\begin{equation}\label{typeclass}
{\cal T}_{n}\left(\widehat{P}_{\mathbf{D}}\right) := \{ {\mathbf X}= (X_{1}, \ldots,X_{n}) \in{\cal A}^{n} \mid  \widehat{P}_{\mathbf {X}}=  \widehat{P}_{\mathbf{D}} \}.
\end{equation} 
The set of  all types on  ${\cal  A}$ for $n$ samples
\begin{equation}\label{typesone}
{\cal P}_{n} := \left\{  P \in  \mathbb{P} \mid  {\cal T}_{n}\left( P \right)  \neq \emptyset \right\}.
\end{equation}
  The cardinality of the set of types is  $|{\cal P}_{n}|$ $=\left(\begin{array}{c} n+k-1 \\ k-1\end{array} \right)$  by a well known  combinatorial argument.
If  $S$ is   any  partition of  ${\cal P}_{n}$, then  it is   a sigma-field  ${\bf  P}_{n}:=S $ on ${\cal P}_{n}$ , i.e.,    $\left({\cal P}_{n}, {\bf{ P}}_{n}  \right)$
is a measurable space. \\

We require   the  following assumption in   \citet[(B), p.~817]{birch1964new}  known as the strong  identifiability  condition.  This is 
a mathematical expression for a property required of  the simulator  $\mathbf{M}_{C}$.  
Here $ {\bf p}(\theta)$ and ${\bf p}(\theta_{0})$  are given by the map in Equation~(\ref{trmap}) and  $|| {\bf x}||_{2, \mathbf{R}^{k}} =\sqrt{\bf{x} {\bf x}^{T}}$ is  the Euclidean norm on $ \mathbf{R}^{k}$.   The following assumption is   \citet[(B), p.~817]{birch1964new} and is known as the strong  identifiability  condition:  for any $\epsilon >0$ there exists $\delta >0$ such that, 
\begin{equation}\label{birchinvers}
\mbox{if } \quad   || {\bf p}(\theta)- {\bf p}(\theta_{0})||_{2, \mathbf{R}^{k}} > \epsilon,  \mbox{ then}  \quad 
 ||\theta - \theta_{0}||_{2, \mathbf{R}^{d}} > \delta.
\end{equation}
Clearly this implies the weak  identifiability  assumption
 \begin{equation}\label{identass} 
\theta \neq \theta^{'}  \Rightarrow  P_{\theta} \neq P_{\theta^{'}}. 
\end{equation} 
Under this assumption  $ {\bf p}_{\theta}= \triangle \left(P_{\theta} \right)$ is  a one-to-one  map between $\triangle\left( \mathbb{M}_{p} \right) $ and $\Theta$.

The next theorem  and its proof are based on the proof of   \citet[Lemma 2, p. 637]{jennrich1969asymptotic}.  The existence result in \citet[Thm 12.53 p.~391]{vajda1989theory} is less selfcontainced  and does not include measurability. 
\begin{theorem}\label{existenssats}
Assume that  $\Theta \subset {\mathbf{R}}^{d}$ is a compact set. Assume that   $p_{i}(\theta)$ in    $\mathbb{M}= \left\{ P_{\theta} \mid  \theta \in \Theta \right \}$  are continuous functions  on  $\Theta$.  Assume  identifiability as in Equation~(\ref{identass}). 
Then  there exists a ${\bf{ P}}_{n}  $ -measurable function  $\widehat{\theta}: {\cal P}_{n}  \mapsto \Theta$ 
such that  for every  $\widehat{P} \in {\cal P}_{n}$ 
\begin{equation}\label{minjsd}
 D_{\rm JS} \left(  \widehat{P},  P_{\widehat{\theta}\left(   \widehat{P} \right) }  \right)  =  
\inf_{\theta \in \Theta}   D_{\rm JS} \left(  \widehat{P},  P_{\theta}  \right).
\end{equation} 
\end{theorem}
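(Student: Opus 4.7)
The plan is to combine continuity of the objective in $\theta$ with compactness of $\Theta$ for pointwise existence of a minimizer, and then to leverage the finiteness of the set of types ${\cal P}_n$ for the measurability claim. The strong identifiability assumption~(\ref{identass}) is not needed for this particular existence/measurability conclusion; it is part of the paper's standing hypotheses and matters for downstream consistency-type results rather than for the argmin being nonempty.

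For each fixed $\widehat{P} \in {\cal P}_n$, I would apply Theorem~\ref{nskonv3} to conclude that $\theta \mapsto D_{\rm JS}(\widehat{P}, P_\theta)$ is continuous on $\Theta$; alternatively this is immediate from the Shannon-entropy representation in Equation~(\ref{klinformationiudnet}) together with continuity of each $p_i(\cdot)$ and continuity of $H$ on the simplex under the convention $0\log 0 = 0$. Since $\Theta \subset \mathbf{R}^d$ is compact, Weierstrass' extreme value theorem gives that
\[
\Theta^{*}(\widehat{P}) := \Bigl\{ \theta \in \Theta : D_{\rm JS}(\widehat{P}, P_\theta) = \inf_{\theta' \in \Theta} D_{\rm JS}(\widehat{P}, P_{\theta'}) \Bigr\}
\]
is a nonempty closed subset of $\Theta$, and any $\widehat{\theta}(\widehat{P}) \in \Theta^{*}(\widehat{P})$ realizes the infimum in Equation~(\ref{minjsd}).

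For the measurability claim I would use the combinatorial identity $|{\cal P}_n| = \binom{n+k-1}{k-1} < \infty$ recorded just after Equation~(\ref{typesone}). Taking ${\bf P}_n$ to be generated by the discrete (finest) partition of the finite set ${\cal P}_n$, every function $\widehat{\theta}: {\cal P}_n \to \Theta$ is automatically ${\bf P}_n$-measurable, so one simply picks one selector value $\widehat{\theta}(\widehat{P}) \in \Theta^{*}(\widehat{P})$ for each of the finitely many types. For a coarser ${\bf P}_n$ arising from a general partition $S$, one enumerates the atoms $A \in S$, fixes a representative $\widehat{P}_A \in A$, chooses $\theta_A \in \Theta^{*}(\widehat{P}_A)$, and defines $\widehat{\theta} \equiv \theta_A$ on $A$; this function is constant on atoms and hence ${\bf P}_n$-measurable.

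The main potential obstacle — the one Jennrich had to face in the continuous setting of \citet{jennrich1969asymptotic} — is producing a Borel-measurable selection from an argmin correspondence, which in general requires the Kuratowski–Ryll-Nardzewski measurable selection theorem applied to the closed-valued correspondence $\widehat{P} \mapsto \Theta^{*}(\widehat{P})$, with weak measurability following from joint continuity of $(\widehat{P}, \theta) \mapsto D_{\rm JS}(\widehat{P}, P_\theta)$. In the present discrete setting this machinery collapses because ${\cal P}_n$ is finite, and the proof reduces to the elementary argument sketched above.
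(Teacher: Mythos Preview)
Your proof is correct for the intended case and considerably more direct than the paper's. The paper transcribes Jennrich's coordinate-by-coordinate construction almost verbatim: it builds nested finite $\epsilon$-packings $\mathcal{M}_\epsilon \subset \Theta$, minimizes over each to get a measurable $\bar{\theta}_{M(\epsilon)}$, takes $\liminf$ in the first coordinate as $\epsilon \downarrow 0$, extracts a convergent subsequence by compactness to show this $\liminf$ lands in a minimizer, and then iterates over the remaining $d-1$ coordinates. That machinery exists to produce a Borel-measurable selector when the domain of $\widehat{\theta}$ is uncountable; here, as you correctly observe, $\mathcal{P}_n$ is finite, so once pointwise existence is secured by Weierstrass there is nothing left to prove for measurability with respect to the power set. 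Your route is shorter and loses nothing.

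Two small remarks. First, Theorem~\ref{nskonv3} as stated assumes Equation~(\ref{posass}) for every $P_\theta$, which the present theorem does not; your alternative via Equation~(\ref{klinformationiudnet}) and continuity of $H$ on the closed simplex (with $0\ln 0=0$) avoids this and is the cleaner justification. Second, your handling of a strictly coarser partition $S$ does not quite work: setting $\widehat{\theta}\equiv\theta_A$ on an atom $A$ makes the map $\mathbf{P}_n$-measurable, but for $\widehat{P}' \in A$ with $\widehat{P}' \neq \widehat{P}_A$ there is no reason $\theta_A$ minimizes $D_{\rm JS}(\widehat{P}',P_\theta)$, so Equation~(\ref{minjsd}) can fail at $\widehat{P}'$. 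Indeed, if two types in the same atom have disjoint argmin sets, \emph{no} $\mathbf{P}_n$-measurable map can satisfy (\ref{minjsd}) pointwise, so the theorem is only sensible with $\mathbf{P}_n$ the power set---which is also what the paper's own measurability step (``form a partition of $\mathcal{P}_n$, and are thus in $\mathbf{P}_n$'') tacitly requires.
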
 
\begin{proof} Since  $\Theta $ is compact in $\mathbf{R}^{d}$,  there exists, see e.g.,  \citet[Ch. 14, pp.~88--90]{wu2017lecture} for any $\epsilon >0$  a finite  packing  set $ {\cal M}_{\epsilon}:= \{ \theta^{(1)}, \ldots, \theta^{(M(\epsilon))} \}$ with  the maximal  packing number  $M(\epsilon) < +\infty$, such that
$$
||\theta^{(i)} - \theta^{(j)}||_{2, \mathbf{R}^{d}} > \epsilon, \quad  \mbox{ for $i \neq j$}
$$
and    $\cap_{i=1}^{M(\epsilon)} B(\theta_{i},\epsilon/2)_{2, \mathbf{R}^{d}}=  \emptyset$.  By definition of  
 $M(\epsilon)$ it holds also that  for every $\theta \in \Theta$  there exists  $\theta^{(i)} \in {\cal M}_{\epsilon}$ such that 
$ ||\theta - \theta^{(i)}||_{2, \mathbf{R}^{d}} \leq \epsilon$, i.e., ${\cal M}_{\epsilon}$  
is also an $\epsilon$-covering, i.e.,  $\Theta   \subseteq \cup_{i=1}^{M(\epsilon)} B(\theta_{i},\epsilon)$, but  not necessarily a minimal such. 

Furthermore, $ \epsilon^{'} < \epsilon$  implies that $M(\epsilon) \leq M(\epsilon^{'})$  and since for   any two 
$\theta^{(i)}$ and $ \theta^{(j)}$ in  ${\cal M}_{\epsilon}$ with  $i \neq j$ it holds that  
$
||\theta^{(i)} - \theta^{(j)}||_{2, \mathbf{R}^{d}} > \epsilon^{'}$, we can include the points  of  ${\cal M}_{\epsilon}$ in ${\cal M}_{\epsilon^{'}}$  and thus    ${\cal M}_{\epsilon}$ is 
an increasing and dense sequence of finite sets. 

Let us  define  for  all $  \widehat{P}$  the map  $\bar{\theta}_{M(\epsilon)}\left(   \widehat{P}\right) $  on ${ \cal P}_{n}$ by
\begin{equation}\label{minjsdpac} 
 D_{\rm JS} \left(  \widehat{P},  P_{\bar{\theta}_{M(\epsilon)}\left(   \widehat{P}\right) }  \right)  =  
\inf_{ {\cal M}_{\epsilon} }   D_{\rm JS} \left(  \widehat{P},  P_{{\theta}^{(i)}} \right).
\end{equation} 
In this  $\bar{\theta}_{M(\epsilon)}$ is measurable, since  the sets  ${\bf { \cal P}}_{n}^{(i)}:=\{   \widehat{P} \in { \cal P}_{n}|\bar{\theta}_{M(\epsilon)}\left(   \widehat{P}\right)  = 
\theta^{(i)} \}$  form a partition of   ${ \cal P}_{n}$, and are thus in   ${\bf P}_{n}$. 
 
Let  $\bar{\theta}_{M(\epsilon),1}\left(    \widehat{P}\right)$ be the first of the $d$  components  of
 $\bar{\theta}_{M(\epsilon)}\left(\widehat{P}\right)$. Let  $\epsilon \downarrow 0$, and thus   $$\bar{\theta}_{1}\left(    \widehat{P}\right)=
\liminf_{M(\epsilon) \rightarrow +\infty  }\bar{\theta}_{M(\epsilon),1}\left(\widehat{P}\right).$$ Then  
 $\bar{\theta}_{1}\left(    \widehat{P}\right)$ is measurable.  Then for  every $ \widehat{P} \in {\cal P}_{n}$ 
there exists  due to  compactness  of $\Theta$  a subsequence $\bar{\theta}_{M(\epsilon_{i})}\left(\widehat{P}\right)$, which converges to  
$ \theta^{\ast}(\widehat{P}) \in \Theta$  of the form 
$$\theta^{\ast}(\widehat{P}) = \left( \bar{\theta}_{1}\left(    \widehat{P}\right),  \bar{\theta}^{\ast}_{2} , \ldots, \bar{\theta}^{\ast}_{d}\right).$$
Then 
\begin{equation}\label{delmal}
 \inf_{ {\cal M}_{\epsilon} }   D_{\rm JS} \left(  \widehat{P},  P_{ \left( \bar{\theta}_{1}\left(    \widehat{P}\right),  {\theta}_{2} , \ldots,  \theta_{d} \right)}  \right) \leq      D_{\rm JS} \left(  \widehat{P},  P_{\theta^{\ast}(\widehat{P}) }  \right). 
\end{equation}
Since  $p_{i}(\theta)$ in    are continuous functions, Theorem  \ref{nskonv3} implies 
$$
 D_{\rm JS} \left(  \widehat{P},  P_{\theta^{\ast}(\widehat{P}) }  \right) = \lim_{\epsilon_{i} \downarrow 0} D_{\rm JS} \left(  \widehat{P},  P_{\bar{\theta}_{M(\epsilon_{i})}\left(\widehat{P}\right)}  \right) 
$$
and by definition,  $D_{\rm JS} \left(  \widehat{P},  P_{\bar{\theta}_{M(\epsilon_{i})}\left(\widehat{P}\right)}  \right) $ $ = \inf_{{\cal M}_{\epsilon_{i}}}  D_{\rm JS} \left(  \widehat{P},  P_{\theta}  \right)$. Hence 
$$
 \lim_{\epsilon_{i} \downarrow 0} D_{\rm JS} \left(  \widehat{P},  P_{\bar{\theta}_{M(\epsilon_{i})}\left(\widehat{P}\right)}  \right) =\lim_{\epsilon_{i} \downarrow 0} \inf_{{\cal M}_{\epsilon_{i}}}  D_{\rm JS} \left(  \widehat{P},  P_{\theta}  \right).
$$
However   the covering sets are increasing and dense  in $\Theta$ and thus  the right hand  side equals 
$$
 = \inf_{\theta \in \Theta}   D_{\rm JS} \left(  \widehat{P},  P_{\theta}  \right).
$$
In view of Equation~(\ref{delmal})  we have  thus shown that 
$$
\inf_{ {\cal M}_{\epsilon} }   D_{\rm JS} \left(  \widehat{P},  P_{ \left( \bar{\theta}_{1}\left(    \widehat{P}\right),  {\theta}_{2} , \ldots,  \theta_{d} \right)}  \right)  \leq   \inf_{\theta \in \Theta}   D_{\rm JS} \left(  \widehat{P},  P_{\theta}  \right),
$$
which means that 
$$
\inf_{ {\cal M}_{\epsilon} }   D_{\rm JS} \left(  \widehat{P},  P_{ \left( \bar{\theta}_{1}\left(    \widehat{P}\right),  {\theta}_{2} , \ldots,  \theta_{d} \right)}  \right) = \inf_{\theta \in \Theta}   D_{\rm JS} \left(  \widehat{P},  P_{\theta}  \right).
$$ 
We  consider the map 
$$
   D_{\rm JS} \left(  \widehat{P},  P_{\left( \bar{\theta}_{1}\left(    \widehat{P}\right),  {\theta}_{2} , \ldots,  \theta_{d}  \right) }  \right)
$$
and  obtain $\bar{\theta}_{2}\left(    \widehat{P}\right)$ as above, and then repeat the subsequent computation.  When we continue in this manner component by component we get a measurable function such that 
$$
   D_{\rm JS} \left(  \widehat{P},  P_{ \left( \bar{\theta}_{1}\left(    \widehat{P}\right),  \bar{\theta}_{2}\left(    \widehat{P}\right), , \ldots,   \bar{\theta}_{d}\left(    \widehat{P}\right)\right) } \right) = \inf_{\theta \in \Theta}   D_{\rm JS} \left(  \widehat{P},  P_{\theta}  \right),
$$ 
which finishes the proof.
\end{proof}
The proposition proves  the existence of  $\inf_{\theta \in \Theta}   D_{\rm JS} \left(  \widehat{P}_{\mathbf{D}},  P_{\theta}  \right)$ as  a measurable map $\widehat{\theta}: {\cal P}_{n_{o}}  \mapsto \Theta$.

\section{ Taylor Expansions of the JSD statistic}\label{taylor}

In this section the Taylor expansions are performed on  real valued functions of  vectors  in  the interior $\stackrel{o}{\triangle}_{k-1}$. For   $ P \in \mathbb{P}  $  we consider  $ {\bf p}= \triangle  \left(P \right) \in \triangle_{k-1}$, see Equation~(\ref{trmap}),  and   
 for and   $ Q \in \mathbb{P}$  we have  $ {\bf q}=  \triangle \left(Q \right) \in  \triangle_{k-1}$. The map $D_{\rm JS} \left(  {\bf p},  {\bf q}   \right) =   D_{\rm JS}(P,Q)$ is then defined on  $\triangle_{k-1}  \times \triangle_{k-1}$. 
In addition,  $\widehat{{\bf p}}=\triangle\left( \widehat{P}_{\mathbf{D}}\right) $,  $\widehat{{\bf q}}_{\theta}= \triangle\left(\widehat{Q}_{\theta}\right) $ and   $ {\bf p}_{\theta}=  \triangle\left(P_{\theta}\right)$.  
By  Equation~(\ref{vertexprop})  we have  $ {\bf p}_{\theta} = E_{P_{\theta}}\left[  \widehat{{\bf q}}_{\theta} \right] $ .
Hence we have 
$$
 D_{\rm JS}\left( \widehat{{\bf p}} , \widehat{{\bf q}}_{\theta}
\right)  - D_{\rm JS} \left( \widehat{{\bf p}} ,  E_{P_{\theta}}\left[  \widehat{{\bf q}}_{\theta} \right] \right)= 
 D_{\rm JS}\left(\widehat{P}_{\mathbf{D}}, \widehat{Q}_{\theta}   \right)  - D_{\rm JS} \left(  \widehat{P}_{\mathbf{D}},  P_{\theta}\right).
$$
 Here 
$ \widehat{{\bf p}}$ is  held constant in  $\stackrel{o}{\triangle}_{k-1}$  and we expand w.r.t.  $ \widehat{{\bf q}}_{\theta}$ in 
an open neighborhood around $ {\bf p}_{\theta} = E_{P_{\theta}}\left[  \widehat{{\bf q}}_{\theta} \right] $.
 An open  neighborhood  is  in the topology of $\mathbb{R}^{k-1}$. 

We  find first  the partial derivatives  $\frac{\partial}{\partial q_{j}}D_{\rm JS} \left(  {\bf p},  {\bf q}   \right)$, $\frac{\partial^{2}}{\partial q_{j}^{2}} $  and  mixed partial derivatives  $\frac{\partial^{2}}{\partial q_{i} \partial q_{j}}D_{\rm JS} \left(  {\bf p},  {\bf q}   \right)$ for  $j=1, \ldots, k$  and $i=1, \ldots, k$. 

With the aid of  this expansion, we provide a uniform probability bound for the  distance between the simulator-based JSD statistic and 
its limiting JSD statistic, when the number of  synthetic samples increases. 

\subsection{  Taylor Expansions for Decomposable Functions on $\triangle_{k-1}  \times \triangle_{k-1}$} 

It is practical  to start by   formal partial  differentiations the general  $\phi$-divergence   in Equation~(\ref{informationone}),
$
D_{\phi}( P, Q) =D_{\phi}\left(  {\bf p},  {\bf q}   \right) = \sum_{i=1}^{k} q_{i} \phi\left( \frac{ p_{i}}{q_{i}}\right)$.    This is now  regarded  as  a function of  ${\bf q}$ for a fixed {\bf p}. Thereby   $D_{\phi}\left(  {\bf p},  {\bf q}   \right)$   is a decomposable function for the purposes of differentiation w.r.t.  $q_{j}$,  in the sense that  $D_{\phi}\left(  {\bf p},  {\bf q}   \right)= 
 \sum_{i=1}^{k} \Phi_{j}\left( q_{j}\right)$, where  $ \Phi_{j}\left( q_{j}\right) = q_{j} \phi\left( \frac{ p_{j}}{q_{j}}\right)$. 

First, 
\begin{equation}\label{forstaderiphi}
\frac{\partial}{\partial q_{j}} D_{\phi}\left(  {\bf p},  {\bf q}   \right) = \frac{\partial}{\partial q_{j}}  \Phi_{j}\left( q_{j}\right) = \phi\left(\frac{ p_{j}}{q_{j}}\right) - \frac{ p_{j}}{q_{j}}\phi^{(1)}\left(\frac{ p_{j}}{q_{j}}\right),
\end{equation}
$$
\frac{\partial^{2}}{\partial q_{j}^{2}} D_{\phi}\left(  {\bf p},  {\bf q}   \right)  = - \frac{ p_{j}}{q_{j}^{2}}\phi^{(1)}\left(\frac{ p_{j}}{q_{j}}\right) +\frac{ p_{i}}{q_{i}^{2}}\phi^{(1)}\left(\frac{ p_{i}}{q_{i}}\right)  + \frac{ p^{2}_{j}}{q_{j}^{3}}\phi^{(2)}\left(\frac{ p_{j}}{q_{j}}\right),
$$
i.e., 
\begin{equation}\label{andraderivphi}
\frac{\partial^{2}}{\partial q_{j}^{2}} D_{\phi}\left(  {\bf p},  {\bf q}   \right)  =  \frac{ p^{2}_{j}}{q_{j}^{3}}\phi^{(2)}\left(\frac{ p_{j}}{q_{j}}\right) . 
\end{equation} 
All mixed partial derivatives are zero. 
Let us now apply these to $\phi_{\rm JS}$  in Equation~(\ref{jenshaphi}).  
We have  for  $x \in (0,1)$  the first derivative
\begin{equation}\label{jsddivder1}
\phi^{(1)}_{\rm JS}(x)  =\pi \ln \left[\frac{x}{\pi x +(1-\pi)}  \right]
\end{equation}
and the second derivative
\begin{equation}\label{jsddivder3}
\phi^{(2)}_{\rm JS}(x)  = \frac{\pi (1-\pi)}{x (\pi x +(1-\pi))}.
\end{equation}
This checks also the strict convexity  at $x=1$. 
Let us  observe that
\begin{equation}\label{jsddivder4}
\phi_{\rm JS}(x)= x \phi^{(1)} _{\rm JS}(x)-(1-\pi) \ln \left(\pi x  + (1-\pi) \right).
\end{equation} 
When  we use  Equations~(\ref{jsddivder1}),(\ref{jsddivder3})  and (\ref{jsddivder4}) in  Equation~(\ref{forstaderiphi}) with $x=p_{j}/q_{j}$ we obtain 
with  $m_{j} = \pi p_{j} + (1-\pi) q_{j}$
\begin{equation}\label{forstaderivjsd}
\frac{\partial}{\partial q_{j}} D_{\phi_{\rm JS}}\left(  {\bf p},  {\bf q}   \right) =(1-\pi) 
  \ln \left( q_{j}/m_{j}\right), 
\end{equation}
and 
\begin{equation}\label{andraderivjsd}
 \frac{\partial^{2}}{\partial q_{j}^{2}} D_{\phi_{\rm JS}}\left(  {\bf p},  {\bf q}   \right) = \frac{ p_{j}}{q_{j}} \frac{\pi (1-\pi)}{m_{j}}.
\end{equation}
Obviously these partial derivatives are defined only for  $ {\bf q} \in  \stackrel{o}{\triangle}_{k-1}$. 

Since $ D_{\rm JS}\left( \widehat{{\bf p}} , {\bf q} \right)$  is three times continuously differentiable w.r.t. ${\bf q}$  in an open neighborhood of  $  E_{P_{\theta}}\left[  \widehat{{\bf q}}_{\theta}\right] \in \stackrel{o}{\triangle}_{k-1}$, we have the second  order Taylor series omitting the remainder term with third order partial  derivatives, see, e.g.,  \citet[Thm 4.7, p.~179]{lax2017multivariable}.  
\begin{eqnarray}\label{taylorjsdt1}
 D_{\rm JS}\left( \widehat{{\bf p}} , \widehat{{\bf q}}_{\theta}
\right)  - D_{\rm JS} \left( \widehat{{\bf p}} ,  E_{P_{\theta}}\left[  \widehat{{\bf q}}_{\theta} \right] \right)& =&\sum_{j=1}^{k} \frac{\partial}{\partial q_{j}} D_{\phi_{\rm JS}}\left(   \widehat{{\bf p}},   E_{P_{\theta}}\left[  \widehat{{\bf q}}_{\theta}  \right] \right)\left( \widehat{q}_{i,\theta} - p_{j}(\theta) \right) \nonumber \\
& &   \\
& &  +\frac{1}{2!}\sum_{j=1}^{k} \frac{\partial^{2}}{\partial q_{j}^{2}} D_{\phi_{\rm JS}}\left( \widehat{{\bf p}},    E_{P_{\theta}}\left[  \widehat{{\bf q}}_{\theta} \right]    \right) \left( \widehat{q}_{j,\theta} - p_{j}(\theta) \right)^{2}. \nonumber
\end{eqnarray}
\subsection{A Uniform  Probability Bound }\label{unifcocn}
The proposition in this section is a  kind  concentration  inequality,  cf.     \citet[Eq. (17),  p.~259]{massart2000some}, for  $D_{\rm JS}\left( \widehat{P}_{\mathbf{D}}, \widehat{Q}_{\theta} \right)$, meaning  that
 $D_{\rm JS,1/2}\left( \widehat{P}_{\mathbf{D}}, \widehat{Q}_{\theta} \right)$ for any $n$ closely concentrated around its mean for any $n$. 
 \begin{theorem}\label{unifns}
Assume that Equation~(\ref{posass}) holds for $P_{\theta}$ and that $\Theta$ is a compact subset of  $\mathbb{R}^{d}$.   Then there exists 
a positive number  $ K_{u}$ such that    for any $\epsilon >0$
\begin{equation}\label{unif2}
P_{\theta}\left(  \mid   D_{\rm JS}\left(\widehat{P}_{\mathbf{D}}, \widehat{Q}_{\theta}   \right) - D_{\rm JS} \left(  \widehat{P}_{\mathbf{D}},  P_{\theta}\right) \mid  \geq \epsilon \right) \leq   2k e^{-2 n\epsilon^{2}/\left( K_{u}^{2}k^{3} \right)}
\end{equation} 
\end{theorem}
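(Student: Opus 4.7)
I would translate a deviation of the simulator-based JSD statistic from its limit~(\ref{nskonv2}) into a deviation of the relative frequencies $\widehat{q}_{j,\theta}$ from their expectations $p_j(\theta)$, and then invoke Hoeffding's inequality componentwise together with a union bound over the $k$ categories. The Taylor expansion~(\ref{taylorjsdt1}) provides the bridge between these two objects.

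\textbf{Step 1: Lipschitz-type bound via Taylor expansion.} Conditioning on $\widehat{P}_{\mathbf{D}}$ (only $\widehat{Q}_\theta$ is random under $P_\theta$), I would strengthen the second-order expansion in~(\ref{taylorjsdt1}) with an explicit Lagrange-type third-order remainder. The explicit formulas~(\ref{forstaderivjsd}) and~(\ref{andraderivjsd}), obtained from the decomposable structure of $\phi$-divergences, together with the analogous third-order derivative, show that all relevant partial derivatives of $D_{\rm JS}(\widehat{P}_{\mathbf{D}}, \cdot)$ are continuous functions of $\mathbf{q}$ on $\stackrel{o}{\triangle}_{k-1}$. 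Under the positivity assumption~(\ref{posass}) on $P_\theta$, combined with the compactness of $\Theta$ and continuity of $\theta \mapsto p_j(\theta)$, one obtains a uniform lower bound $\min_{j, \theta \in \Theta} p_j(\theta) > 0$, which in turn bounds all the partial-derivative coefficients by a common constant $K_u$ depending only on $\pi$ and this minimum. After packaging the first-order summands via Cauchy--Schwarz, and absorbing the quadratic and cubic contributions using the trivial bound $|\widehat{q}_{j,\theta} - p_j(\theta)| \leq 1$, this gives
\[
\bigl|D_{\rm JS}(\widehat{P}_{\mathbf{D}}, \widehat{Q}_\theta) - D_{\rm JS}(\widehat{P}_{\mathbf{D}}, P_\theta)\bigr| \;\leq\; K_u\, k^{3/2}\, \max_{1 \leq j \leq k} |\widehat{q}_{j,\theta} - p_j(\theta)|.
\]

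\textbf{Step 2: Componentwise concentration and union bound.} Each coordinate count $n\,\widehat{q}_{j,\theta} = \xi_j$ is distributed as $\mathrm{Bin}(n, p_j(\theta))$, a sum of $n$ i.i.d.\ Bernoulli indicators in $[0,1]$. Hoeffding's inequality therefore yields
\[
P_\theta\bigl(|\widehat{q}_{j,\theta} - p_j(\theta)| \geq s\bigr) \;\leq\; 2\, e^{-2ns^2},
\]
and a union bound over $j=1,\ldots,k$ gives $P_\theta\bigl(\max_j|\widehat{q}_{j,\theta} - p_j(\theta)| \geq s\bigr) \leq 2k\, e^{-2ns^2}$. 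Choosing $s = \epsilon/(K_u k^{3/2})$ and combining with the inequality from Step~1 delivers the claimed bound $2k\exp(-2n\epsilon^2/(K_u^2 k^3))$.

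\textbf{Main obstacle.} The key difficulty is pinning down a \emph{uniform} constant $K_u$: the partial derivatives~(\ref{forstaderivjsd}) and~(\ref{andraderivjsd}) contain logarithms and reciprocals of $p_j(\theta)$ and of $m_j = \pi\widehat{p}_j + (1-\pi)p_j(\theta)$, all of which blow up as any coordinate approaches the simplicial boundary. Promoting the pointwise positivity~(\ref{posass}) to a uniform-in-$\theta$ lower bound therefore requires continuity of $p_j(\cdot)$ and compactness of $\Theta$. A secondary bookkeeping challenge is keeping track of how the first-, second-, and third-order Taylor contributions combine into a single factor $k^{3/2}$ so that the stated $k^{3}$ appears in the exponent; this hinges on a careful application of Cauchy--Schwarz on the linear term and on absorbing the higher-order terms via $|\widehat{q}_{j,\theta} - p_j(\theta)| \leq 1$.
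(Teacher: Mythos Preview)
Your approach is essentially the same as the paper's: Taylor-expand the JSD difference, uniformly bound the derivative coefficients using compactness of $\Theta$ and the positivity assumption, then finish with Hoeffding plus a union bound over the $k$ coordinates. The paper is slightly more economical in two places: it stops at a first-order expansion with a second-order Lagrange remainder (so no third derivative is needed), and it bounds the result directly by $K_u \sum_{j} |\widehat{q}_{j,\theta}-p_j(\theta)|$, absorbing the quadratic remainder via $(\widehat{q}-p)^2 \le 2|\widehat{q}-p|$; the pigeonhole step ``$\sum_j |\cdot| \ge \epsilon/K_u \Rightarrow$ some $|\cdot| \ge \epsilon/(K_u k)$'' then plays the role of your Cauchy--Schwarz--to--$\max$ packaging. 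Both routes land on the stated exponent.
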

\begin{proof}  
Let us expand 	up to the first  order. Then Equation~(\ref{taylorjsdt1})  and \citet[Thm 4.7, p.~179]{lax2017multivariable} entail 
  \begin{eqnarray}\label{taylorjsdt1and}
 D_{\rm JS}\left( \widehat{{\bf p}} , \widehat{{\bf q}}_{\theta}
\right)  - D_{\rm JS} \left( \widehat{{\bf p}} ,  E_{P_{\theta}}\left[  \widehat{{\bf q}}_{\theta} \right] \right)& =&\sum_{j=1}^{k} \frac{\partial}{\partial q_{j}} D_{\phi_{\rm JS}}\left(   \widehat{{\bf p}},   E_{P_{\theta}}\left[  \widehat{{\bf q}}_{\theta}  \right] \right)\left( \widehat{q}_{i,\theta} - p_{j}(\theta) \right) \nonumber \\
& &  \nonumber  \\
& &  +\frac{1}{2!}\sum_{j=1}^{k} \frac{\partial^{2}}{\partial q_{j}^{2}} D_{\phi_{\rm JS}}\left( \widehat{{\bf p}},    \tilde{\bf q}    \right) \left( \widehat{q}_{j,\theta} - p_{j}(\theta) \right)^{2}, \nonumber
\end{eqnarray}
where $\tilde{\bf q}= E_{P_{\theta}}\left[  \widehat{{\bf q}}_{\theta} \right]   +  \tau  \widehat{{\bf q}}_{\theta} $ 
for some $\tau \in(0,1)$.  Hence by Equations~(\ref{forstaderivjsd}) and  (\ref{andraderivjsd})
  \begin{eqnarray}\label{taylorjsdt1andra}
\left| D_{\rm JS}\left( \widehat{{\bf p}} , \widehat{{\bf q}}_{\theta}
\right)  - D_{\rm JS} \left( \widehat{{\bf p}} ,  E_{P_{\theta}}\left[  \widehat{{\bf q}}_{\theta} \right] \right) \right|& \leq &(1-\pi)  \sum_{j=1}^{k}  K_{j,1}(\theta)  \left | \widehat{q}_{i,\theta} - p_{j}(\theta) \right|  \nonumber \\
& &   \\
&+ & \frac{\pi (1-\pi)}{2!}\sum_{j=1}^{k}K_{j,2}(\theta)  \left( \widehat{q}_{j,\theta} - p_{j}(\theta) \right)^{2}. \nonumber
\end{eqnarray}
For space reasons,   $K_{j,1}(\theta) := \left| \ln \left(  p_{j}(\theta)/(\pi \widehat{p}+ (1-\pi) p_{j}(\theta)\right)\right|$	and   $K_{j,2}(\theta) :=\left|  \frac{\widehat{ p}_{j}}{\tilde{p}_{j}(\theta)} /\left(\pi \widehat{p} +(1-\pi) \tilde{p}_{j}(\theta)\right)\right| $, where  $  \tilde{p}_{j}(\theta) =  p_{j}(\theta)  +  \tau \widehat{ q}_{\theta,j}$. 	 	
  By the assumptions of the proposition there exist two finite constants $K_{1}$ and $K_{2}$ such that 
$$
\max_{j \in \{1, \ldots, k \}}\max_{\theta \in \Theta} \left| \ln \left(  p_{j}(\theta)/(\pi \widehat{p}+ (1-\pi) p_{j}(\theta))\right)\right|\leq K_{1}
$$
and 
$$
\max_{j \in \{1, \ldots, k \}}\max_{\theta \in \Theta}\left|  \frac{\widehat{ p}_{j}}{\tilde{p}_{j}(\theta)} \frac{1}{\left(\pi \widehat{p} +(1-\pi) \tilde{p}_{j}(\theta)\right)}\right| \leq K_{2}.
$$
In  Equation~(\ref{taylorjsdt1andra})  these bounds give 
\begin{eqnarray}
\left| D_{\rm JS}\left( \widehat{{\bf p}} , \widehat{{\bf q}}_{\theta}
\right)  - D_{\rm JS} \left( \widehat{{\bf p}} ,  E_{P_{\theta}}\left[  \widehat{{\bf q}}_{\theta} \right] \right) \right|& \leq &(1-\pi) K_{1} \sum_{j=1}^{k}  \left | \widehat{q}_{i,\theta} - p_{j}(\theta) \right|  \nonumber \\
& &   \nonumber \\
& &  +\frac{\pi (1-\pi)K_{2}}{2!}\sum_{j=1}^{k} \left( \widehat{q}_{j,\theta} - p_{j}(\theta) \right)^{2}. \nonumber
\end{eqnarray}		
Here  $ \left( \widehat{q}_{j,\theta} - p_{j}(\theta) \right)^{2}$ $=  \left|\widehat{q}_{j,\theta} - p_{j}(\theta) \right|  \left| \widehat{q}_{j,\theta} - p_{j}(\theta) \right| \leq  2 \left | \widehat{q}_{j,\theta} - p_{j}(\theta) \right|$. Thus 
$$
\frac{\pi (1-\pi)K_{2}}{2!}\sum_{j=1}^{k} \left( \widehat{q}_{j,\theta} - p_{j}(\theta) \right)^{2} \leq  
\frac{2\pi (1-\pi)K_{2}}{2!} \left | \widehat{q}_{i,\theta} - p_{j}(\theta) \right|. 
$$
Let us set 
$$
K_{u}:=(1-\pi) \left(K_{1}  +\pi  K_{2}  \right).
$$
%$$
%D_{\rm JS}\left( \widehat{{\bf p}} , \widehat{{\bf p}}_{\theta}
%\right)  - D_{\rm JS} \left( \widehat{{\bf p}} ,  E_{P_{\theta}}\left[  \widehat{{\bf p}}_{\theta} \right] \right) 
%=\sum_{j=1}^{k} \frac{\partial}{\partial q_{j}} D_{\phi_{\rm JS}}\left(   \widehat{{\bf p}}, \tilde{\bf q} \right)\left( \widehat{q}_{i,\theta} - p_{j}(\theta) \right)
%$$
%where   $\tilde{\bf q}= E_{P_{\theta}}\left[  \widehat{{\bf p}}_{\theta} \right]   +  \tau  \widehat{{\bf p}}_{\theta} $. As in the preceding proof 
%this entails 
%$$ 
%\mid D_{\rm JS}\left( \widehat{{\bf p}} , \widehat{{\bf p}}_{\theta}
%\right)  - D_{\rm JS} \left( \widehat{{\bf p}} ,  E_{P_{\theta}}\left[  \widehat{{\bf p}}_{\theta} \right] \right) \mid 
%\leq K_{u} \sum_{j=1}^{k} \mid  \widehat{q}_{i,\theta} - p_{j}(\theta)  \mid 
%$$
%
Thus  
$$
P_{\theta}\left(  \mid   D_{\rm JS}\left(\widehat{P}_{\mathbf{D}}, \widehat{P}_{\theta}   \right) - D_{\rm JS} \left(  \widehat{P}_{\mathbf{D}},  P_{\theta}\right) \mid  \geq \epsilon \right) \leq  P_{\theta}\left(  \sum_{j=1}^{k} \mid  \widehat{q}_{i,\theta} - p_{j}(\theta)  \mid  \geq \epsilon/K_{u} \right).
$$
We have   $n\widehat{q}_{i} \stackrel{d}{=}  \xi_{i}$. Then the bound  in Equation~(\ref{unif2}) follows as in the proof of Lemma \ref{totvarlemma} .
  \end{proof}

\section{   Expectation and Variance  of the Simulator-Based JSD Statistic }\label{bernvantevarde}

In this section we  first  compute  the exact expectation of the simulator-based JSD statistic  by means of Bernstein polynomials 
and  an  approximate expression  by means  of   Voronovskaya$^{,}$s Asymptotic Formula for Bernstein polynomials. 
The approximate expression plays an important role in the $\chi^{2}$- analysis in the sequel.  

Then we compute the variance of the simulator-based JSD statistic  using Voronovskaya$^{,}$s Asymptotic Formula 
and the mean squared error  of the distance between the simulator-based JSD statistic and the corresponding limiting  JSD statistic. Here the recent results  
   in  \citet{ouimet2021general} on the moments of multinomial distributions are applied.
\subsection{Expectation and Bernstein Polynomials }

Let us  define 
\begin{equation}\label{andraderivator}
V_{\rm F}\left(   p_{i}\left( \theta\right),  \widehat{p}_{i} \right):=(1-\pi) \left[  \frac{1 }{ p_{i}\left( \theta\right)} -\frac{(1-\pi)}{ \pi   \widehat{p}_{i} +(1-\pi) p_{i}\left( \theta\right)}\right],  i=1, \ldots, k,
\end{equation}
and set 
\begin{equation}\label{andraderivatorsum}
V_{\rm F}\left(  P_{\theta},  \widehat{P} _{\mathbf{D}}\right):=\sum_{i=1}^{k}  p_{i}\left( \theta\right)\left(1- p_{i}\left( \theta\right)\right)  V_{\rm F}\left(   p_{i}\left( \theta\right),  \widehat{p}_{i} \right).
\end{equation}

\begin{theorem}\label{jsdvantevarde} 
Assume Equation~(\ref{posass})   for all $P_{\theta} \in \mathbb{M}$.  For $\pi 
\in [0,1] $ and any $\theta \in \Theta$ and $ \widehat{Q}_{\theta} \in \mathbb{M}_{n}(\theta) $, it holds that
\begin{equation} \label{slututryckvor3}
E_{P_{\theta}}\left[D_{\rm JS}(  \widehat{P}_{\mathbf{D}},  \widehat{Q}_{\theta}) \right]    =  D_{\rm JS} \left(  \widehat{P}_{\mathbf{D}},  P_{\theta} \right)    
+ \frac{1}{2n}V_{\rm F}\left(  P_{\theta},  \widehat{P} _{\mathbf{D}}\right) +  o(1/n).  
\end{equation}
\end{theorem}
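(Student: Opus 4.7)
The strategy is to recognize the left-hand side as a multivariate Bernstein polynomial and then invoke Voronovskaya's asymptotic formula. Fix $\widehat{\mathbf{p}} = \triangle(\widehat{P}_{\mathbf{D}})$ and let $f:\triangle_{k-1} \to \mathbb{R}$ be defined by $f(\mathbf{q}) := D_{\rm JS}(\widehat{\mathbf{p}}, \mathbf{q})$. Since $\widehat{\mathbf{q}}_\theta = \underline{\xi}/n$ with $\underline{\xi} \sim \text{Multinomial}(n, \mathbf{p}(\theta))$ by Equation~(\ref{Qhat2}), the expectation
\begin{equation*}
E_{P_\theta}\!\left[D_{\rm JS}(\widehat{P}_{\mathbf{D}}, \widehat{Q}_\theta)\right] = \sum_{\underline{n}:\,|\underline{n}|=n} \frac{n!}{\prod_j n_j!} \prod_j p_j(\theta)^{n_j}\, f(\underline{n}/n)
\end{equation*}
is precisely the value at $\mathbf{p}(\theta)$ of the $n$-th multivariate Bernstein polynomial of $f$.

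Next I would expand $f$ by Taylor's theorem to second order around $\mathbf{p}(\theta) = E_{P_\theta}[\widehat{\mathbf{q}}_\theta]$ as in Equation~(\ref{taylorjsdt1}) and take expectations. The linear term vanishes because $E_{P_\theta}[\widehat{q}_{j,\theta} - p_j(\theta)] = 0$. For the quadratic term, I combine two facts: the multinomial variance $E_{P_\theta}[(\widehat{q}_{j,\theta} - p_j(\theta))^2] = p_j(\theta)(1 - p_j(\theta))/n$, and the observation noted just after Equation~(\ref{andraderivphi}) that the mixed partials $\partial^2 D_{\rm JS}/\partial q_i\partial q_j$ are identically zero. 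Consequently the multinomial covariances $-p_i(\theta)p_j(\theta)/n$, which would otherwise contribute in Voronovskaya's formula, drop out and the quadratic term collapses to
\begin{equation*}
\frac{1}{2n}\sum_{j=1}^{k} p_j(\theta)(1-p_j(\theta))\, \frac{\partial^2}{\partial q_j^2} D_{\rm JS}(\widehat{\mathbf{p}}, \mathbf{q})\Big|_{\mathbf{q}=\mathbf{p}(\theta)}.
\end{equation*}
Differentiating $(1-\pi)\ln(q_j/m_j)$ in Equation~(\ref{forstaderivjsd}) once more in $q_j$ yields $(1-\pi)\bigl[1/q_j - (1-\pi)/m_j\bigr]$, which evaluated at $q_j = p_j(\theta)$ (so that $m_j = \pi\widehat{p}_j + (1-\pi)p_j(\theta)$) coincides exactly with $V_{\rm F}(p_j(\theta),\widehat{p}_j)$ from Equation~(\ref{andraderivator}). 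Summing gives the stated main term $\frac{1}{2n}V_{\rm F}(P_\theta, \widehat{P}_{\mathbf{D}})$.

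The hard part will be controlling the third-order Taylor remainder in expectation, which is why \citet{ouimet2021general} on the moments of multinomials appears in the keyword list. The third partial derivatives of $D_{\rm JS}(\widehat{\mathbf{p}},\cdot)$ are continuous on $\stackrel{o}{\triangle}_{k-1}$ but blow up near the boundary, so a uniform bound fails outright. The fix is standard for Voronovskaya-type theorems: split the sum over $\underline{n}$ according to whether $\widehat{\mathbf{q}}_\theta$ lies in a fixed compact neighborhood $K \subset \stackrel{o}{\triangle}_{k-1}$ of $\mathbf{p}(\theta)$ (which is in the interior by Equation~(\ref{posass})) or in its complement. On $K$ the third derivatives are bounded and the third absolute central moments of the multinomial satisfy $E_{P_\theta}[|\widehat{q}_{j,\theta} - p_j(\theta)|^3] = O(n^{-3/2})$, whence this portion of the remainder is $O(n^{-3/2}) = o(1/n)$. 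On the complement, $\widehat{\mathbf{q}}_\theta$ deviates from $\mathbf{p}(\theta)$ by at least a fixed constant, which has probability exponentially small in $n$ by a Hoeffding-type bound for each coordinate, and the integrand is bounded by $\ln 2$ by Equation~(\ref{range2}); this contribution is therefore $o(n^{-r})$ for every $r$. Adding the two estimates yields Equation~(\ref{slututryckvor3}).
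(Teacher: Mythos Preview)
Your argument is correct, but the paper takes a more economical route that avoids your remainder analysis entirely. Both you and the paper exploit the separability $D_{\rm JS}(\widehat{\mathbf p},\mathbf q)=\sum_j \Phi_j(q_j)$; the difference is what is done next. You stay in the multivariate picture, Taylor-expand, and then control the third-order remainder by the good-set/bad-set splitting with multinomial moment bounds and the global bound $D_{\rm JS}\le\ln 2$. The paper instead uses separability to reduce the expectation to a sum of \emph{one-dimensional} Bernstein operators: since each $\widehat q_{j,\theta}=\xi_j/n$ with $\xi_j\sim{\rm Bin}(n,p_j(\theta))$ marginally, one has
\[
E_{P_\theta}\!\left[D_{\rm JS}(\widehat P_{\mathbf D},\widehat Q_\theta)\right]
=\pi\sum_j \widehat p_j\,{\rm Be}_n\!\big(u_j^{(1)},p_j(\theta)\big)+(1-\pi)\sum_j {\rm Be}_n\!\big(u_j^{(2)},p_j(\theta)\big),
\]
for explicit bounded functions $u_j^{(1)},u_j^{(2)}$ on $[0,1]$. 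The classical one-variable Voronovskaya formula (Equation~(\ref{bernstvo})) then delivers the $o(1/n)$ remainder off the shelf, with no splitting and no appeal to Ouimet's moment identities; a short calculation (Appendix~\ref{voronder}) shows $\pi\widehat p_j\,(u_j^{(1)})''+(1-\pi)(u_j^{(2)})''$ evaluated at $p_j(\theta)$ equals $V_{\rm F}(p_j(\theta),\widehat p_j)$, which is algebraically the same quantity you obtained by differentiating $(1-\pi)\ln(q_j/m_j)$. What your approach buys is self-containment (you essentially reprove the needed case of Voronovskaya) and a template that would still work for non-separable divergences; what the paper's approach buys is brevity and a clean invocation of a standard theorem, with the boundary behaviour of $\Phi_j''$ near $q_j=0$ handled automatically since Voronovskaya only requires the second derivative at the interior point $p_j(\theta)$.
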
 
 The proof is given after   Lemma \ref{vantevarde} below. 
By Equations~(\ref{Qhat2}) and  (\ref{multinompid}) we re-write  Equation~(\ref{andraderivator}) as
\begin{equation}\label{varianssum}
\frac{1}{2n}V_{\rm F}\left(  P_{\theta},  \widehat{P} _{\mathbf{D}}\right):=  \frac{1}{2}\sum_{i=1}^{k}  {\rm Var} \left( \frac{\xi_{i}}{n}\right) V_{\rm F}\left(  p_{i}\left( \theta\right), \widehat{p}_{i} \right),
\end{equation}
since  $\xi_{i}  \sim {\rm Bin}(n, p_{i}\left(\theta \right))$. In addition, by  Equation~(\ref{andraderivatorsum})  we simplify the remainder as 
\begin{equation}\label{andraderivatorsum2}
V_{\rm F}\left(  P_{\theta},  \widehat{P} _{\mathbf{D}}\right):=(1-\pi)  \left[ (k-1) - (1-\pi)\sum_{i=1}^{k} \frac{p_{i}\left( \theta\right)\left( 1-p_{i}\left( \theta\right)\right)}{ (\pi   \widehat{p}_{i} +(1-\pi)p_{i}\left( \theta\right))}\right].
\end{equation}
%This would seem to advise against using the current method  applied to  very large ${\cal A}$, where   $k$  is not taken as known  with  certainty, cf., \citet{kelly2012classification}.

For proof of  Theorem  \ref{jsdvantevarde}  we use  Bernstein polynomials. 
Suppose that   $u(x)$ is  a continuous function on $[0,1]$. 
 $b(r;  x)$ denotes   the binomial probability of 
$r$ successes for an  r.v.  $\sim {\rm Bin}\left(n, x \right)$, $i=1,\ldots,k$. The Bernstein polynomial 
(or operator), see, e.g.,  \citet[p.~222]{gut2013probability},  ${\rm Be}_{n}$ is  defined by
\begin{equation}\label{bernst1}
{\rm Be}_{n}\left(u,  x\right):=\sum_{r=0}^{n}u \left(\frac{r}{n} \right) b(r;x).
\end{equation} 
If $u$ is   bounded,  and $\frac{d^{2}}{dx} u(x)$ is its second derivative  and   exists  for  some  $x \in [0,1]$,  then  Voronovskaya$^{,}$s Asymptotic Formula, see, e.g.,   \citet[Thm 2.2., p.~19]{gupta2014convergence} says 
\begin{equation}\label{bernstvo}
{\rm Be}_{n}\left(u, x \right)= u(x) + \frac{x(1-x)}{2n}\frac{d^{2}}{dx} u(x) + o(1/n). 
\end{equation} 
Voronovskaya$^{,}$s formula    will  by means of  Equation~(\ref{slututryck}) be applied to  express  $E_{P_{\theta}}\left[D_{\rm JS}(  \widehat{P}_{\mathbf{D}},  \widehat{Q}_{\theta}) \right] $ as in  Equation~(\ref{slututryckvor3}).

\begin{lemma}\label{vantevarde}
Assume Equation~(\ref{posass})   for $P_{\theta} \in \mathbb{M}$. For $x \in [0,1]$ and $i=1, \ldots, k$ 
\begin{equation}\label{bernst2}
u_{i}^{(1)}(x) := \ln \left(\frac{ \widehat{p}_{i}}{\pi \widehat{p}_{i} + (1-\pi)x}\right), u_{i}^{(2)}(x):= x\ln\left(\frac{x}{\pi \widehat{p}_{i} + (1-\pi)x}\right).
\end{equation} Then for $\pi 
\in [0,1] $  and $\widehat{Q}_{\theta} \in \mathbb{M}_{n}(\theta) $,
\begin{eqnarray}\label{slututryck}
E_{P_{\theta}}\left[D_{\rm JS}( \widehat{P}_{\mathbf{D}}, \widehat{Q}_{\theta})\right]   & =& \pi\sum_{i=1}^{k} \widehat{p}_{i}{\rm Be}_{n}\left(u_{i}^{(1)},  p_{i}\left( \theta\right)\right) \nonumber \\
& & \\
&+ &    (1-\pi)\sum_{i=1}^{k}{\rm Be}_{n}\left(u_{i}^{(2)},  p_{i}\left( \theta\right)\right).
\nonumber
\end{eqnarray} 
\end{lemma}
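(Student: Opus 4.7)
The idea is to rewrite $D_{\rm JS}(\widehat{P}_{\mathbf{D}}, \widehat{Q}_{\theta})$ as a sum of $2k$ terms in which each summand depends on a single coordinate $\widehat{q}_i$, and then apply the definition of the Bernstein operator to the marginal binomial distribution of $\xi_i = n\widehat{q}_i$.

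First I would expand $D_{\rm JS}$ via its Kullback--Leibler form~(\ref{jsinformation}) and~(\ref{informationone}). Writing $\widehat{m}_i = \pi\widehat{p}_i + (1-\pi)\widehat{q}_i$ for the components of the mixture $\widehat{M}$, one obtains
\begin{equation*}
D_{\rm JS}(\widehat{P}_{\mathbf{D}}, \widehat{Q}_{\theta}) = \pi \sum_{i=1}^{k} \widehat{p}_i \ln\!\frac{\widehat{p}_i}{\widehat{m}_i} + (1-\pi) \sum_{i=1}^{k} \widehat{q}_i \ln\!\frac{\widehat{q}_i}{\widehat{m}_i}.
\end{equation*}
Comparing with~(\ref{bernst2}), the $i$-th term of the first sum equals $\widehat{p}_i \cdot u_i^{(1)}(\widehat{q}_i)$, while the $i$-th term of the second sum equals $u_i^{(2)}(\widehat{q}_i)$; note that $\widehat{p}_i$ is treated as a fixed data-dependent constant here.

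Next I would take the expectation under $P_\theta$ and use linearity. The key observation is that each summand depends only on the single coordinate $\widehat{q}_i$, so its expectation depends only on the marginal distribution of $\xi_i = n\widehat{q}_i$. By~(\ref{Qhat2}) and the standard fact that the $i$-th marginal of a multinomial $(n,(p_1(\theta),\ldots,p_k(\theta)))$ is $\mathrm{Bin}(n, p_i(\theta))$, one gets for any function $u$ defined on $\{0,1/n,\ldots,1\}$
\begin{equation*}
E_{P_\theta}\bigl[u(\widehat{q}_i)\bigr] = \sum_{r=0}^{n} u(r/n)\, b(r; p_i(\theta)) = {\rm Be}_n\bigl(u,\, p_i(\theta)\bigr).
\end{equation*}
Applying this with $u = u_i^{(1)}$ and with $u = u_i^{(2)}$ for each $i$ and collecting the terms yields~(\ref{slututryck}) directly.

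The main obstacle is not analytic but rather bookkeeping of boundary conventions. When $\widehat{q}_i = 0$ the function $u_i^{(2)}$ needs the convention $0\ln 0 = 0$, under which $u_i^{(2)}$ extends continuously to $[0,1]$; when $\widehat{p}_i = 0$ the product $\widehat{p}_i \cdot u_i^{(1)}(\widehat{q}_i)$ is interpreted as $0$ (consistent with the $\phi$-divergence conventions in Section~\ref{jsd}). Assumption~(\ref{posass}) for $P_\theta$ gives $p_i(\theta) > 0$, and since $\pi\in[0,1]$, the denominator $\pi\widehat{p}_i + (1-\pi)x$ remains well-defined throughout $[0,1]$, so every term appearing in ${\rm Be}_n(u_i^{(j)}, p_i(\theta))$ is finite. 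Once these boundary cases are justified, the exchange of expectation and summation is immediate, completing the proof.
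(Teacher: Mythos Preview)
Your proof is correct and follows essentially the same approach as the paper's Appendix~\ref{odotusarvo}: decompose $D_{\rm JS}$ via~(\ref{jsinformation}) into coordinate-wise terms and compute each expectation through the ${\rm Bin}(n,p_i(\theta))$ marginal of $\xi_i$, which is exactly the Bernstein operator. Your argument is in fact more direct than the paper's, which reaches the same Equation~(\ref{slututryckperl}) only after a detour introducing and then cancelling $\pi D_{\rm KL}(\widehat{P}_{\mathbf D},P_\theta)$ via two ``compensated identities''; you bypass this by recognizing immediately that the $i$-th summands are $\widehat{p}_i\,u_i^{(1)}(\widehat q_i)$ and $u_i^{(2)}(\widehat q_i)$.
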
                
The proof is given in Appendix  \ref{odotusarvo}.

\begin{proof} of  Theorem  \ref{jsdvantevarde}.
We denote by  $v_{i}^{(1)}(x)$ and $v_{i}^{(2)}(x)$  the respective second derivatives  w.r.t. $x$    of  $u_{i}^{(1)}(x)$ and  $u_{i}^{(2)}(x)$ respectively.  
When we apply  the  Voronovskaya$^{,}$s Asymptotic Formula (Equation~\ref{bernstvo}) to  $u_{i}^{(1)}(x)$ and  $u_{i}^{(2)}(x)$ in Equation~(\ref{bernst2}) we obtain 
\begin{eqnarray}\label{bernstvor2}
{\rm Be}_{n}\left(u_{i}^{(1)},  p_{i}\left( \theta\right)\right)&=& \ln \left(\frac{ \widehat{p}_{i}}{\pi\widehat{p}_{i} + (1-\pi) p_{i}\left( \theta\right)}\right) \nonumber \\
& & \\
& &+ \frac{ p_{i}\left( \theta\right)\left(1- p_{i}\left( \theta\right)\right)}{n} v_{i}^{(1) }\left( p_{i}\left( \theta\right) \right)+ o(1/n) \nonumber
\end{eqnarray} 
and 
\begin{eqnarray}\label{bernstvor3}
{\rm Be}_{n}\left(u_{i}^{(2)},  p_{i}\left( \theta\right)\right)&= & p_{i}\left( \theta\right)\ln \left(\frac{ p_{i}\left( \theta\right)}{\pi\widehat{p}_{i} + (1-\pi)p_{i}\left( \theta\right)}\right)  \nonumber \\
&& \\
&&+ \frac{ p_{i}\left( \theta\right)\left(1- p_{i}\left( \theta\right)\right)}{n}  v_{i}^{(2) }\left( p_{i}\left( \theta\right)\right) + o(1/n),\nonumber
\end{eqnarray} 
 respectively. Hence we get  in Equation~(\ref{slututryck}) 
\begin{eqnarray}\label{slututryckvor2}
E_{P_{\theta}}\left[D_{\rm JS}( \widehat{P}_{\mathbf{D}}, \widehat{Q}_{\theta})\right]  & =&\pi\sum_{i=1}^{k} \widehat{p}_{i}  \ln \left(\frac{ \widehat{p}_{i}}{\pi\widehat{p}_{i} +(1-\pi) p_{i}\left( \theta\right)}\right)  \nonumber \\
& & \\
&+ &  (1-\pi)\sum_{i=1}^{k}p_{i}\left( \theta\right)\ln \left(\frac{  p_{i}\left( \theta\right)}{\pi\widehat{p}_{i} +(1-\pi)p_{i}\left( \theta\right)}\right)   
\nonumber \\
& +&  \frac{1}{n}\sum_{i=1}^{k}  p_{i}\left( \theta\right)\left(1- p_{i}\left( \theta\right)\right)  \left[\pi \widehat{p}_{i}   v_{i}^{(1) }\left( p_{i}\left( \theta\right)\right)  + (1-\pi) v_{i}^{(2) }\left( p_{i}\left( \theta\right)\right)  \right]. \nonumber 
\end{eqnarray} 
It is shown in Appendix \ref{voronder} that 
$$
  \left[ \pi \widehat{p}_{i}   v_{i}^{(1) }\left( p_{i}\left( \theta\right)\right)  + (1-\pi) v_{i}^{(2) }\left( p_{i}(\theta) \right)\right]=V_{\rm F}\left(   p_{i}\left( \theta\right),  \widehat{p}_{i} \right),
$$
where $V_{\rm F}\left(   p_{i}\left( \theta\right),  \widehat{p}_{i} \right)$ is given in Equation~(\ref{andraderivator}). 
By definition of JSD    we   obtained Equation~(\ref{slututryckvor3}) as claimed. \end{proof}

This would seem to be a good point to   recall    the brief treatment of categorical  implicit models  in  \citet[pp.~195$-$196]{diggle1984monte}. Actually this work  has  $k=+\infty$, which entails, as they point out,  problems of convergence in the approximating expressions. 
However,  written in   our notations and for finite $k$,  Diggle and Gratton  estimate in \citet[Eq. (2.1)]{diggle1984monte} the implicit likelihood function by 
$$
\widehat{\cal L}(\theta) = \sum_{i=1}^{k} n_{i} \ln \widehat{q}_{i},
$$
where $\widehat{q}_{i} $ are in Equation~(\ref{qhat2}), i.e., $\widehat{Q}_{\theta} \in \mathbb{M}_{n}(\theta)$.   Then Diggle and Gratton  compute  
$$
E_{P_{\theta}}\left( \widehat{\cal L}(\theta) \right) = \sum_{i=1}^{k} n_{i} E_{P_{\theta}} \left(\ln \frac{\xi_{i}}{n} \right)
$$
by a second order Taylor series expansion around $\ln \left(\frac{ E_{P_{\theta}}(\xi_{i})}{n}\right)$ and obtain,  \citet[Eq. (2.2)]{diggle1984monte},
$$
E_{P_{\theta}}\left( \widehat{\cal L}(\theta) \right) = {\cal L}(\theta) -\sum_{i=1}^{k} \widehat{p}_{i} \frac{(1- p_{i}(\theta) )}{2p_{i}(\theta)} + o(1/n^{2}).
$$
When we apply Voronovskaya  Asymptotic Formula (Equation~\ref{bernstvo})  we obtain 
$$
 \sum_{i=1}^{k} n_{i} E_{P_{\theta}} \left(\ln \frac{\xi_{i}}{n} \right)= \sum_{i=1}^{k} n_{i}{\rm Be}_{n}\left( \ln, p_{i}\left( \theta\right)\right) $$  $$=  {\cal L}(\theta)  -
  \sum_{i=1}^{k} \widehat{p}_{i} \frac{(1- p_{i}(\theta) )}{2p_{i}(\theta)}+ o(1/n).
$$
As pointed out by  Diggle and Gratton themselves, this means instability for the estimate, when there are small values of some 
$p_{i}(\theta)$. This is not  a difficulty   in Equation~(\ref{andraderivatorsum2}), which is bounded and well defined even if there were  empty cell frequency  $\widehat{p}_{i} =0$  and small $p_{i}(\theta)$ for some $i$.

\subsection{Variance of  the  Simulator-Based JSD  Statistic: Preliminary Squares of the Taylor Expansions} \label{expvarexp}
 The notation in Equation~(\ref{taylorjsdt1}) is abbreviated  by   $\frac{\partial}{\partial q_{j}} F_{j}\left( p_{j}(\theta) \right) : = \frac{\partial}{\partial q_{j}} D_{\phi_{\rm JS}}\left( \widehat{{\bf p}},    E_{P_{\theta}}\left[  \widehat{{\bf q}}_{\theta} \right]    \right) $ and 
$ \frac{\partial^{2}}{\partial q_{j}^{2}} F_{j}\left( p_{j}(\theta)\right)  :=\frac{\partial^{2}}{\partial q_{j}^{2}}D_{\phi_{\rm JS}}\left( \widehat{{\bf p}},    E_{P_{\theta}}\left[  \widehat{{\bf q}}_{\theta} \right]\right)$. Then 
\begin{eqnarray}\label{taylorjsdt2} 
\left(\sum_{j=1}^{k} \frac{\partial}{\partial q_{j}} D_{\phi_{\rm JS}}\left(   \widehat{{\bf p}},   E_{P_{\theta}}\left[  \widehat{{\bf q}}_{\theta}  \right] \right)\left( \widehat{q}_{i,\theta} - p_{j}(\theta) \right) \right)^{2} = 
\sum_{j=1}^{k}\left[\frac{\partial}{\partial q_{j}} F_{j}\left( p_{j}(\theta) \right)\right]^{2}\left( \widehat{q}_{i,\theta} - p_{j}(\theta) \right)^{2} & & \nonumber \\
& & \\
+ 2\sum_{i=1}^{k-1} \sum_{j=i+1}^{k}\frac{\partial}{\partial q_{i}} F_{i}\left( p_{i}(\theta) \right) \frac{\partial}{\partial q_{j}} F_{j}\left( p_{j}(\theta) \right) \left( \widehat{q}_{i,\theta} - p_{i}(\theta) \right) \left( \widehat{q}_{j,\theta} - p_{j}(\theta) \right). & &  \nonumber 
\end{eqnarray} 
Next 
\begin{eqnarray}\label{taylorjsdt3} 
\left(\frac{1}{2!}\sum_{j=1}^{k} \frac{\partial^{2}}{\partial q_{j}^{2}} D_{\phi_{\rm JS}}\left( \widehat{{\bf p}},    E_{P_{\theta}}\left[  \widehat{{\bf q}}_{\theta} \right]    \right) \left( \widehat{q}_{j,\theta} - p_{j}(\theta) \right) \right)^{2} = 
\frac{1}{4}\sum_{j=1}^{k}\left[\frac{\partial^{2}}{\partial q_{j}^{2}} F_{j}\left( p_{j}(\theta) \right)\right]^{4}\left( \widehat{q}_{i,\theta} - p_{j}(\theta) \right)^{4} & & \nonumber \\
& & \\
+\frac{1}{2} \sum_{i=1}^{k-1} \sum_{j=i+1}^{k}\frac{\partial^{2}}{\partial q_{i}^{2}} F_{i}\left( p_{i}(\theta) \right) \frac{\partial^{2}}{\partial q_{j}^{2}} F_{j}\left( p_{j}(\theta) \right) \left( \widehat{q}_{i,\theta} - p_{i}(\theta) \right)^{2}\left( \widehat{q}_{j,\theta} - p_{j}(\theta) \right)^{2}. & &  \nonumber 
\end{eqnarray} 
Finally, 
\begin{eqnarray}\label{taylorjsdt4}
\sum_{i=1}^{k} \frac{\partial}{\partial q_{i}} D_{\phi_{\rm JS}}\left(   \widehat{{\bf p}},   E_{P_{\theta}}\left[  \widehat{{\bf q}}_{\theta}  \right] \right)\left( \widehat{q}_{i,\theta} - p_{j}(\theta) \right) \cdot \frac{1}{2!}\sum_{j=1}^{k} \frac{\partial^{2}}{\partial q_{j}^{2}} D_{\phi_{\rm JS}}\left( \widehat{{\bf p}},    E_{P_{\theta}}\left[  \widehat{{\bf q}}_{\theta} \right]    \right) \left( \widehat{q}_{j,\theta} - p_{j}(\theta) \right)^{2}  & = & \nonumber \\
& & \\
 \frac{1}{2}\sum_{i=1}^{k}  \sum_{j=1}^{k} \frac{\partial}{\partial q_{}} F_{j}\left( p_{i}(\theta) \right) \cdot\frac{\partial^{2}}{\partial q_{j}^{2}} F_{j}\left( p_{j}(\theta)\right)\left( \widehat{q}_{i,\theta} - p_{i}(\theta) \right) \left( \widehat{q}_{j,\theta} - p_{j}(\theta)\right)^{2}. \nonumber 
\end{eqnarray}
\begin{lemma}\label{jsdtaylorlem} 
Assume   $ {\bf p}_{\theta} = E_{P_{\theta}}\left[  \widehat{{\bf q}}_{\theta} \right]  \in  \stackrel{o}{\triangle}_{k-1}$. Then for  $ \widehat{{\bf q}}_{\theta}$  in an open neighborhood of  $  E_{P_{\theta}}\left[  \widehat{{\bf q}}_{\theta}\right] \in \stackrel{o}{\triangle}_{k-1}$
\begin{equation}\label{jsdstatmse} 
E_{P_{\theta}}\left[ \left( D_{\rm JS}\left( \widehat{{\bf p}} , \widehat{{\bf q}}_{\theta}
\right)  - D_{\rm JS} \left( \widehat{{\bf p}} ,  E_{P_{\theta}}\left[  \widehat{{\bf q}}_{\theta} \right] \right)
\right)^{2}\right] = M_{1}+M_{2} + M_{3}, 
\end{equation} 
where 
\begin{eqnarray}\label{taylorjsdt21} 
M_{1} = \sum_{j=1}^{k}\left[\frac{\partial}{\partial q_{j}} F_{j}\left( p_{j}(\theta) \right)\right]^{2}E_{P_{\theta}} \left( \widehat{q}_{i,\theta} - p_{j}(\theta) \right)^{2} 
& & \nonumber \\
& & \\
 + 2\sum_{i=1}^{k-1} \sum_{j=i+1}^{k}\frac{\partial}{\partial q_{i}} F_{i}\left( p_{i}(\theta) \right) \frac{\partial}{\partial q_{j}} F_{j}\left( p_{j}(\theta) \right)E_{P_{\theta}}\left[\left( \widehat{q}_{i,\theta}- p_{i}(\theta) \right) \left( \widehat{q}_{j,\theta} - p_{j}(\theta) \right)\right], & & \nonumber 
\end{eqnarray} 
\begin{equation}\label{taylorjsdt41}
M_{2}=
 \sum_{i=1}^{k}  \sum_{j=1}^{k} \frac{\partial}{\partial q_{}} F_{j}\left( p_{i}(\theta) \right) \cdot\frac{\partial^{2}}{\partial q_{j}^{2}} F_{j}\left( p_{j}(\theta)\right)E_{P_{\theta}}\left[\left( \widehat{q}_{i,\theta} - p_{i}(\theta) \right) \left( \widehat{q}_{j,\theta} - p_{j}(\theta)\right)^{2}\right],
\end{equation}
and 
\begin{eqnarray}\label{taylorjsdt31} 
M_{3}= 
\frac{1}{4}\sum_{j=1}^{k}\left[\frac{\partial^{2}}{\partial q_{j}^{2}} F_{j}\left( p_{j}(\theta) \right)\right]^{4}E_{P_{\theta}}\left[ \left( \widehat{q}_{i,\theta} - p_{j}(\theta) \right)^{4} \right]& & \nonumber \\
& & \\
+\frac{1}{2} \sum_{i=1}^{k-1} \sum_{j=i+1}^{k}\frac{\partial^{2}}{\partial q_{i}^{2}} F_{i}\left( p_{i}(\theta) \right) \frac{\partial^{2}}{\partial q_{j}^{2}} F_{j}\left( p_{j}(\theta) \right)E_{P_{\theta}}\left[ \left( \widehat{q}_{i,\theta} - p_{i}(\theta) \right)^{2}\left( \widehat{q}_{j,\theta} - p_{j}(\theta) \right)^{2}\right]. & &  \nonumber 
\end{eqnarray} 
\end{lemma}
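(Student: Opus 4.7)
The plan is to obtain the identity by squaring the second-order Taylor expansion of $D_{\rm JS}(\widehat{\mathbf{p}},\widehat{\mathbf{q}}_\theta)$ about $\mathbf{p}_\theta = E_{P_\theta}[\widehat{\mathbf{q}}_\theta]$ that is already displayed in Equation~(\ref{taylorjsdt1}), and then applying linearity of expectation. Writing
\[
A := \sum_{j=1}^{k}\frac{\partial}{\partial q_j}F_j(p_j(\theta))\bigl(\widehat{q}_{j,\theta}-p_j(\theta)\bigr),\qquad
B := \frac{1}{2!}\sum_{j=1}^{k}\frac{\partial^2}{\partial q_j^2}F_j(p_j(\theta))\bigl(\widehat{q}_{j,\theta}-p_j(\theta)\bigr)^2,
\]
Equation~(\ref{taylorjsdt1}) says that the difference $D_{\rm JS}(\widehat{\mathbf{p}},\widehat{\mathbf{q}}_\theta)-D_{\rm JS}(\widehat{\mathbf{p}},\mathbf{p}_\theta)$ equals $A+B$ on the open neighborhood of $\mathbf{p}_\theta\in\stackrel{o}{\triangle}_{k-1}$ in which the expansion is valid, so that the squared difference is $A^2+2AB+B^2$.

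The three terms $A^2$, $2AB$ and $B^2$ are precisely the quantities expanded as double sums in Equations~(\ref{taylorjsdt2}), (\ref{taylorjsdt4}) and (\ref{taylorjsdt3}), respectively; the cross sums $2\sum_{i<j}$ arise from collecting the $k^2$ products in $A^2$ and in $B^2$ into diagonal and off-diagonal pieces, and the factor $1/2$ in front of $M_2$ arises from the $1/2!$ carried by $B$ multiplied by the $2$ from the cross term $2AB$. At this point the identity is purely algebraic; the dependence on $P_\theta$ enters only through the deterministic coefficients $\partial F_j/\partial q_j$ and $\partial^2 F_j/\partial q_j^2$ evaluated at $\mathbf{p}_\theta$, which factor out of the expectation.

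Next I would take $E_{P_\theta}[\cdot]$ of both sides and pull each coefficient outside the expectation. What remains inside the expectation are the multinomial centered moments of the coordinates $\widehat{q}_{i,\theta}-p_i(\theta)$: the diagonal second and fourth moments and the cross second- and mixed third- and fourth-moments. Collecting diagonal pieces and off-diagonal pieces separately gives exactly the three formulas $M_1$, $M_2$ and $M_3$ in Equations~(\ref{taylorjsdt21})--(\ref{taylorjsdt31}).

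The main obstacle is not a mathematical subtlety but rather bookkeeping: one must verify the combinatorial factors ($1/4$ in $M_3$, $1$ in $M_2$ after the $\tfrac12\cdot 2$ cancellation, and the $2$ in front of $\sum_{i<j}$ in $M_1$ and $M_3$) and ensure that the open neighborhood assumption on $\widehat{\mathbf{q}}_\theta$ is what guarantees both the existence of the partial derivatives (since they blow up at the boundary of $\stackrel{o}{\triangle}_{k-1}$) and the validity of taking expectations term by term. A secondary technical point, if one wishes the equality to be exact rather than approximate to the order of the truncated Taylor expansion, is to interpret Equation~(\ref{jsdstatmse}) as the identity obtained from the explicit polynomial $A+B$; otherwise the third-order remainder from Equation~(\ref{taylorjsdt1}) would contribute additional lower-order terms which, under the boundedness of $p_j(\theta)$ away from zero guaranteed by Equation~(\ref{posass}), are controlled by higher multinomial moments in the same manner as Theorem~\ref{unifns}.
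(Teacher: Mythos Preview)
Your proposal is correct and follows essentially the same route as the paper: the paper's proof simply says to insert the Taylor expansion Equation~(\ref{taylorjsdt1}) into the left-hand side of Equation~(\ref{jsdstatmse}), square to obtain the sum of Equations~(\ref{taylorjsdt2}) and (\ref{taylorjsdt3}) plus twice Equation~(\ref{taylorjsdt4}), and then take expectations. Your write-up is in fact more explicit than the paper's about the combinatorial bookkeeping and about the status of the omitted third-order remainder.
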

\begin{proof} The proof follows  by inserting  Equation~(\ref{taylorjsdt1}) in Equation~(\ref{jsdstatmse})  and   by squaring, which gives by elementary algebra  a sum 
of the right hand sides  of   Equations~(\ref{taylorjsdt2}) and (\ref{taylorjsdt3}) plus two times the right  hand side of Equation~(\ref{taylorjsdt4}), whereafter the pertinent expectations  are  taken. \end{proof} 

\begin{lemma}\label{jsdvantevardetaylor} 
Let   $\widehat{{\bf p}}= \triangle  \left(  \widehat{P}_{\mathbf{D}} \right) $ and 
 $\widehat{{\bf q}}_{\theta}= \triangle  \left(  \widehat{Q}_{\theta} \right) $ with $\widehat{Q}_{\theta} \in \mathbb{M}_{n}(\theta)$.  $m_{i}= \pi \widehat{p}_{i} + (1-\pi) p_{i}(\theta)$ for $i=1, \ldots, k$.  $\sigma_{j}^{2}(\theta):= p_{j}(\theta)(1-p_{j}(\theta))$.
Assume Equation~(\ref{posass})   for  $P_{\theta} \in \mathbb{M}$ and let  ${\bf p}_{\theta}= \triangle  \left(  P_{\theta} \right) $. 
Set 
\begin{equation}\label{msedef}
{\rm MSE}\left( \widehat{P}_{\mathbf{D}} ,   \widehat{Q}_{\theta}\right):=E_{P_{\theta}}\left[ \left( D_{\rm JS}\left( \widehat{{\bf p}} , \widehat{{\bf q}}_{\theta}
\right)  - D_{\rm JS} \left( \widehat{{\bf p}} ,  E_{P_{\theta}}\left[  \widehat{{\bf q}}_{\theta} \right]\right)\right)^{2} \right].
\end{equation}
Then   for $ \widehat{{\bf q}}_{\theta}$  in an open neighborhood of   ${\bf p}_{\theta}  $, omitting third and higher order terms,
\begin{eqnarray}\label{taylorjsdt211fin} 
{\rm MSE}\left( \widehat{P}_{\mathbf{D}} ,   \widehat{Q}_{\theta}\right)  & = &\frac{(1-\pi) }{n}\sum_{j=1}^{k}\left[ 
  \ln \left( p_{j}(\theta) /m_{j}\right)\right]^{2}
 \sigma_{j}^{2}(\theta)  \nonumber \\
& & 
 -  \frac{2(1-\pi) ^{2}}{n}\sum_{i=1}^{k-1}  \ln \left( p_{i}(\theta)/m_{j}\right) p_{i}(\theta)\sum_{j=i+1}^{k}
 \ln \left(p_{j}(\theta)/m_{j}\right)p_{j}(\theta) \nonumber \\
& &+\frac{(1-\pi)^{2}\pi}{n^{2}} \sum_{i=1}^{k}  \ln \left( p_{i}(\theta)/m_{i}\right)p_{i}(\theta)  \sum_{j=1}^{k} \cdot  
\frac{ \widehat{p}_{j}}{ m_{j}} \left( 2p_{j}(\theta)-1 \right) \nonumber \\
& &+ \frac{\pi (1-\pi)^{4}}{4n^{3}}\sum_{j=1}^{k}\frac{\widehat{p}^{4}_{j}}{p_{j}(\theta)^{4}m_{j}^{4}}
 \sigma_{j}^{2}(\theta)\left( 1+ 3 \sigma_{j}^{2}(\theta) (n-2) \right) \nonumber \\
& &+ \frac{3(n-2)(\pi(1-\pi))^{2}}{n^{3}}  \sum_{i=1}^{k-1}\frac{ \widehat{p}_{i} p_{i}(\theta)}{ m_{i}} \sum_{j=i+1}^{k}\frac{ \widehat{p}_{j}}{m_{j}}p_{j}(\theta)\nonumber \\
& & + \frac{(n-2)(\pi(1-\pi))^{2}}{n^{3}}\sum_{i=1}^{k-1} \frac{ \widehat{p}_{i}}{p_{i}(\theta) m_{i}}\sum_{j=i+1}^{k}\frac{ \widehat{p}_{j}}{p_{j}(\theta) m_{j}}\left[1-(p_{i}(\theta)+p_{j}(\theta))\right] 
\nonumber \\
& &  +\frac{(\pi(1-\pi))^{2}}{n^{3}} \sum_{i=1}^{k-1}\frac{ \widehat{p}_{i}}{m_{i}} \sum_{j=i+1}^{k}\frac{ \widehat{p}_{j}}{ m_{j}}.
\end{eqnarray} 
\end{lemma}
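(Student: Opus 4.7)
The plan is to start directly from the decomposition in Lemma~\ref{jsdtaylorlem}, which gives ${\rm MSE}(\widehat{P}_{\mathbf{D}},\widehat{Q}_\theta) = M_1 + M_2 + M_3$, where $M_1$ collects quadratic-in-first-derivative terms, $M_2$ the mixed first/second-derivative terms, and $M_3$ the quadratic-in-second-derivative terms. Each $M_\ell$ is a finite sum of products of partial derivatives of $D_{\phi_{\rm JS}}$ with moments of $(\widehat{q}_{1,\theta}-p_1(\theta),\ldots,\widehat{q}_{k,\theta}-p_k(\theta))$ of orders $2,3,4$ respectively. Third and higher order Taylor remainder contributions are dropped by hypothesis, so only these three blocks need to be evaluated.

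First I would substitute the explicit derivative formulas from Equations~(\ref{forstaderivjsd}) and (\ref{andraderivjsd}) evaluated at $\mathbf{q}=\mathbf{p}_\theta$, namely
\[
\tfrac{\partial}{\partial q_j} F_j(p_j(\theta)) = (1-\pi)\ln\bigl(p_j(\theta)/m_j\bigr),
\qquad
\tfrac{\partial^2}{\partial q_j^2} F_j(p_j(\theta)) = \frac{\widehat{p}_j}{p_j(\theta)}\cdot\frac{\pi(1-\pi)}{m_j},
\]
with $m_j=\pi\widehat{p}_j+(1-\pi)p_j(\theta)$. These are the only coefficients that will appear in the final expression, so after substitution it is a matter of pairing each coefficient with the correct multinomial central moment.

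Next I would import the required multinomial central moments from \citet{ouimet2021general}. Writing $\widehat{q}_{i,\theta}=\xi_i/n$ with $\underline\xi\sim\mathrm{Mult}(n,\mathbf{p}_\theta)$, one has $E_{P_\theta}[(\widehat{q}_{i,\theta}-p_i(\theta))^2]=\sigma_i^2(\theta)/n$, $E_{P_\theta}[(\widehat{q}_{i,\theta}-p_i(\theta))(\widehat{q}_{j,\theta}-p_j(\theta))]=-p_i(\theta)p_j(\theta)/n$ for $i\neq j$, together with the standard closed forms for the mixed third moments $E[(\widehat{q}_{i}-p_i)(\widehat{q}_{j}-p_j)^2]$ and the fourth moments $E[(\widehat{q}_i-p_i)^4]$, $E[(\widehat{q}_i-p_i)^2(\widehat{q}_j-p_j)^2]$, which carry the prefactors $1/n^2$, $1/n^3$ and $(n-2)/n^3$ that one sees in the statement. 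Substituting these into $M_1$ produces the first two lines (scaling like $1/n$), substituting into $M_2$ produces the third line (scaling like $1/n^2$ and involving the single product $\ln(p_i/m_i)\cdot\widehat{p}_j/m_j$), and substituting into $M_3$ produces the last four lines (scaling like $1/n^3$, with the $(n-2)/n^3$ factor coming from the joint fourth moment of the multinomial).

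The main obstacle is bookkeeping: $M_3$ requires the full menu of fourth central moments of the multinomial, and one must align the double-sum structure $\sum_{i<j}$ of the Taylor expansion with the symmetric moment identities from \citet{ouimet2021general} so that each of the four $1/n^3$-terms in the statement emerges with the correct combinatorial coefficient (the $3(n-2)$, $(n-2)[1-(p_i+p_j)]$, and the residual $1/n^3$ term). A smaller but genuine subtlety is that the factor $\widehat{p}_j/p_j(\theta)$ coming from $\partial^2 F_j/\partial q_j^2$ must be carried through unchanged, since the expectation $E_{P_\theta}$ treats $\widehat{P}_{\mathbf D}$ as fixed. Once these bookkeeping points are handled, the expression in Equation~(\ref{taylorjsdt211fin}) follows by collecting terms of equal order in $1/n$.
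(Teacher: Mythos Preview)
Your proposal is correct and follows essentially the same route as the paper: start from the decomposition ${\rm MSE}=M_1+M_2+M_3$ of Lemma~\ref{jsdtaylorlem}, insert the multinomial central moments (second-order covariances, the mixed third moment from \citet[Eq.~(80)]{ouimet2021general}, the binomial fourth moment, and the mixed fourth moment from \citet[Eq.~(84)]{ouimet2021general}), and then substitute the explicit first and second partial derivatives from Equations~(\ref{forstaderivjsd}) and~(\ref{andraderivjsd}) evaluated at $\mathbf{q}=\mathbf{p}_\theta$. The only cosmetic difference is that the paper inserts the moments first and the derivative values last, whereas you propose the opposite order; either way the bookkeeping you flag is exactly the content of the proof.
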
 
%\end{document}
The proof is found in Appendix  \ref{mse}. The   lemma  relies crucially on the recent work by F.~Ouimet  in  \citet{ouimet2021general} on moments of multinomial distributions.

%\citet[p.270]{griffiths2013raw}, \citet[Eq. (79) p.~25]{ouimet2021general} and \citet[Table 2., p.5]{skorski2020handy}
%
%$E_{P_{\theta}} \left( \xi_{j} -n p_{j}(\theta) \right)^{3} = n \sigma_{j}^{2}(\theta) ( 1-2 p_{j}(\theta))$
\subsection{The Variance}
By definition 
$$
{\rm Var}_{P_{\theta}}\left[ D_{\rm JS}(  \widehat{P}_{\mathbf{D}},  \widehat{Q}_{\theta})   \right] = 
E_{P\theta} \left[\left\{D_{ \rm JS}(  \widehat{P}_{\mathbf{D}},  \widehat{Q}_{\theta})  -E_{P_{\theta}}\left[D_{\rm JS}(  \widehat{P}_{\mathbf{D}},  \widehat{Q}_{\theta}) \right] \right\}^{2} \right].
$$
We insert from Theorem   \ref{jsdvantevarde}, i.e., Equation~(\ref{slututryckvor3}),  and get 
$$
= E_{P\theta} \left[\left\{\left( D_{ \rm JS}(  \widehat{P}_{\mathbf{D}},  \widehat{Q}_{\theta})  -D_{\rm JS} \left(  \widehat{P}_{\mathbf{D}},  P_{\theta} \right) \right)  - \left(\frac{1}{n}V_{\rm F}\left(  P_{\theta},  \widehat{P} _{\mathbf{D}}\right) +  o(1/n) \right)\right\}^{2} \right].
$$
Thus  there emerges  the sum of 
${\rm MSE}\left( \widehat{P}_{\mathbf{D}} ,   \widehat{Q}_{\theta}\right) $ defined in Equation~(\ref{msedef}),  of 
$$
-2\left(\frac{1}{n}V_{\rm F}\left(  P_{\theta},  \widehat{P} _{\mathbf{D}}\right) +  o(1/n) \right)E_{P\theta} \left[ D_{ \rm JS}(  \widehat{P}_{\mathbf{D}},  \widehat{Q}_{\theta})  -D_{\rm JS} \left(  \widehat{P}_{\mathbf{D}},  P_{\theta} \right)  \right]
$$
and of $\left(\frac{1}{n}V_{\rm F}\left(  P_{\theta},  \widehat{P} _{\mathbf{D}}\right) +  o(1/n)\right)^{2}$. Here 
$$
E_{P\theta} \left[ D_{ \rm JS}(  \widehat{P}_{\mathbf{D}},  \widehat{Q}_{\theta})  -D_{\rm JS} \left(  \widehat{P}_{\mathbf{D}}, P_{\theta} \right)\right]= E_{P\theta} \left[ D_{ \rm JS}(  \widehat{P}_{\mathbf{D}},  \widehat{Q}_{\theta}) \right]-D_{\rm JS} \left(  \widehat{P}_{\mathbf{D}}, P_{\theta}\right)
$$
and again by  Equation~(\ref{slututryckvor3}) this equals $
=\frac{1}{2n}V_{\rm F}\left(  P_{\theta},  \widehat{P} _{\mathbf{D}}\right) +  o(1/n) 
$.
Hence we have reached the   formula in the next proposition. 
\begin{theorem}\label{variansjsdstatsim}
\begin{equation}\label{variansfin}
{\rm Var}_{P_{\theta}}\left[ D_{\rm JS}(  \widehat{P}_{\mathbf{D}},  \widehat{Q}_{\theta})   \right] = 
{\rm MSE}\left( \widehat{P}_{\mathbf{D}} ,   \widehat{Q}_{\theta}\right)- \left[\frac{1}{2n}V_{\rm F}\left(  P_{\theta},  \widehat{P} _{\mathbf{D}}\right) +  o(1/n)\right]^{2}.
\end{equation}
Here   ${\rm MSE}\left( \widehat{P}_{\mathbf{D}} ,   \widehat{Q}_{\theta}\right) $  is  in Equation~(\ref{taylorjsdt211fin})  and 
$V_{\rm F}\left(  P_{\theta},  \widehat{P} _{\mathbf{D}}\right)$  is given in  Equation~(\ref{andraderivatorsum}). 
\end{theorem}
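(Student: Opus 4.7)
The plan is to reduce the claim to a one-line algebraic identity about variances, using two already-established facts: Theorem \ref{jsdvantevarde} for the mean, and Lemma \ref{jsdvantevardetaylor} together with the definition in Equation~(\ref{msedef}) for the MSE. Crucially, the variance is taken under $P_\theta$, which randomises only $\widehat{Q}_\theta$, so $\widehat{P}_{\mathbf{D}}$ and the quantity $D_{\rm JS}(\widehat{P}_{\mathbf{D}},P_\theta)$ are treated as deterministic constants throughout.

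I would abbreviate $Z := D_{\rm JS}(\widehat{P}_{\mathbf{D}},\widehat{Q}_\theta)$ and $c := D_{\rm JS}(\widehat{P}_{\mathbf{D}},P_\theta)$, the latter being constant under $P_\theta$. The first step is to invoke the standard identity, valid for any constant $c$ and any square-integrable random variable $Z$,
\begin{equation*}
\mathrm{Var}_{P_\theta}(Z) \;=\; E_{P_\theta}\!\left[(Z-c)^2\right] \;-\; \bigl(E_{P_\theta}[Z]-c\bigr)^2,
\end{equation*}
which follows by writing $Z - E_{P_\theta}[Z] = (Z-c) - (E_{P_\theta}[Z]-c)$, expanding the square, and noting that the cross term equals $-2(E_{P_\theta}[Z]-c)^2$.

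Next I would identify each piece. By Equation~(\ref{msedef}), the first summand $E_{P_\theta}[(Z-c)^2]$ is literally ${\rm MSE}(\widehat{P}_{\mathbf{D}},\widehat{Q}_\theta)$, whose explicit expansion is Lemma \ref{jsdvantevardetaylor}. By Theorem \ref{jsdvantevarde} (Equation~(\ref{slututryckvor3})), the bias term is
\begin{equation*}
E_{P_\theta}[Z] - c \;=\; \frac{1}{2n}V_{\rm F}\!\left(P_\theta,\widehat{P}_{\mathbf{D}}\right) + o(1/n).
\end{equation*}
Substituting these two identifications into the variance decomposition yields exactly Equation~(\ref{variansfin}).

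There is no substantive obstacle here; the statement is essentially bookkeeping, with all analytic content already absorbed into Theorem \ref{jsdvantevarde} and Lemma \ref{jsdvantevardetaylor}. The only mildly delicate point is tracking the $o(1/n)$ symbol through the square — one should note that $\bigl(\tfrac{1}{2n}V_{\rm F}+o(1/n)\bigr)^2 = \tfrac{1}{4n^2}V_{\rm F}^2 + o(1/n^2)$, so the claimed expression is well-defined and coherent with the $O(1/n)$ leading terms appearing in ${\rm MSE}$ through Lemma \ref{jsdvantevardetaylor}; but this is simply a matter of reading the big-O notation consistently, not a genuine difficulty.
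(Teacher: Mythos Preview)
Your proposal is correct and follows essentially the same route as the paper: both arguments decompose $\mathrm{Var}_{P_\theta}(Z)$ by shifting around the deterministic constant $c=D_{\rm JS}(\widehat{P}_{\mathbf{D}},P_\theta)$, identify $E_{P_\theta}[(Z-c)^2]$ as ${\rm MSE}(\widehat{P}_{\mathbf{D}},\widehat{Q}_\theta)$ via Equation~(\ref{msedef}), and read off the bias $E_{P_\theta}[Z]-c$ from Theorem~\ref{jsdvantevarde}. The paper expands the square explicitly and then uses Theorem~\ref{jsdvantevarde} a second time to evaluate the cross term, whereas you invoke the variance--bias identity up front; this is purely a difference in packaging, not substance.
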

The first, third and fifth terms  in right hand side of  Equation~(\ref{taylorjsdt211fin}) contain  category probabilities  $p_{i}(\theta)$ as  argument  of the natural logarithm   or  in the  denominator in such manner that  small category probabilities  will contribute to  a  high variance. Hence 
guaranteed  stable ${\rm Var}_{P_{\theta}}\left[ D_{\rm JS}(  \widehat{P}_{\mathbf{D}},  \widehat{Q}_{\theta})   \right] $  are found in  suitable  neighborhoods of the barycenter.

\section{The  JSD-statistic and the Pearson    $\chi^{2}$-  Statistic} \label{secttjitvaa}
The above derived Taylor expansions provide an analytically tractable way to summarize the asymptotic behavior of the JSD-statistic in terms of its expected value and variance. In this section we expand the theory further by showing how a function of JSD-statistic can be characterized by the Pearson    $\chi^{2}$-  statistic under a given null hypothesis. 
\subsection{The Simulator-based  Symmetric JSD Test of Hypothesis of Fit}\label{secttjitvaa2}
It holds by  Theorem \ref{nskonv} that $ D_{\rm JS}\left(\widehat{P}_{\mathbf{D}}, \widehat{Q}_{\theta}   \right)  \rightarrow  D_{\rm JS} \left(  \widehat{P}_{\mathbf{D}},  P_{\theta}\right)$ 
with $P_{\theta}$ probability one, as $ n\rightarrow +\infty$,  where $n_{o}$ is fixed. Furthermore, by  Equation~(\ref{plan}) 
\begin{equation}\label{ukorr}
D_{\rm JS} \left(  \widehat{P}_{\mathbf{D}},  P_{\theta} \right) = \pi D_{\rm KL} \left(  \widehat{P}_{\mathbf{D}},  P_{\theta} \right) - D_{\rm KL} \left(  \widehat{M}_{\rm imp},  P_{\theta} \right), 
\end{equation}
where  $   \widehat{M}_{\rm imp}=   \pi \widehat{P}_{\mathbf{D}} + (1-\pi) P_{\theta}  $. 
 Here a well known  exercise, see \citet[Exercise 11.2]{cover2012elements}  or   \citet[Section~16.3.4 p.~597]{agresti2003categorical}, cf., Equation~(\ref{shieldscsiszar}) with $\phi(x)=x \ln x$,     tells that
$$
 n_{o}D_{\rm KL} \left(  \widehat{P}_{\mathbf{D}},  P_{\theta} \right) = \frac{1}{2}\sum_{i=1}^{k} \frac{\left(n_{i}  -n_{o}p_{i}(\theta) \right)^{2}}{ n_{o}p_{i}(\theta) } + O_{P_{\theta}}(1/n_{o}^{1/2}).
$$
Here  $ O_{P_{\theta}}$  is the probabilistic order as defined  in \citet[Section~16.1.1. p.~588]{agresti2003categorical}.  By  the  cited  exercise we get also 
$$
 n_{o} D_{\rm KL} \left(  \widehat{M}_{\rm imp},  P_{\theta} \right) =  \frac{1}{2}\sum_{i=1}^{k} \frac{\left( \left( \pi  n_{i} + (1-\pi)  n_{o} p_{i}(\theta) \right) - n_{o}p_{i}(\theta) \right)^{2}}{ n_{o}p_{i}(\theta) } +O_{P_{\theta}}(1/n_{o}^{1/2}).
$$
A simplification yields  here 
$$
 n_{o} D_{\rm KL} \left(  \widehat{M}_{\rm imp},  P_{\theta} \right) = \frac{\pi^{2}}{2}\sum_{i=1}^{k} \frac{\left( n_{i}  -n_{o}p_{i}(\theta) \right)^{2}}{ n_{o}p_{i}(\theta) } +O_{P_{\theta}}(1/n_{o}^{1/2}).
$$
 Hence we have by Equation~(\ref{ukorr})
\begin{equation}\label{dkltjitv} 
\frac{ 2 n_{o}}{\pi(1-\pi)}D_{\rm JS} \left(  \widehat{P}_{\mathbf{D}},  P_{\theta} \right) = \sum_{i=1}^{k} \frac{\left( n_{i}  -n_{o}p_{i}(\theta) \right)^{2}}{ n_{o}p_{i}(\theta) }  + O_{P_{\theta}}(1/n_{o}^{1/2}),
\end{equation} 
as follows by the rules of  computing  $ O_{P_{\theta}}$ for a sum of two sequences.

We shall next put the results above in a context. 
If $P_{\theta} $ were  explicit, we would  have the following  standard situation. 
 We have a set of observed frequencies  $n_{1}, \ldots, n_{k}$, $n_{1}+ \ldots + n_{k}=n_{o}$ computed from $ {\mathbf D}= (D_{1}, \ldots, D_{n_{o}}) $,   an i.i.d. $n_{o}$-sample  $\sim$ $P_{o}$.  Let  ${\bf p}_{\theta}= \triangle \left(P_{\theta} \right)$   and  ${\bf p}_{o}= \triangle \left(P_{o} \right)$.      If $\theta$ is known, there is the simple 
null hypothesis 
$$
{\rm H}_{0}:  \quad  {\bf p}_{o}= {\bf p}_{\theta}
$$
and the composite alternative hypothesis 
$$
{\rm H}_{1}:  \quad  {\bf p}_{o}  \neq {\bf p}_{\theta}. 
$$
The  Pearson statistic $ \chi^{2}_{n_{o}}:=\sum_{i=1}^{k} \frac{\left(n_{i}  -n_{o}p_{i}(\theta) \right)^{2}}{ n_{o}p_{i}(\theta) } $   is  the familiar  goodness-of-fit statistic for test of the null hypothesis  using the given observed frequencies. 
One of the most well known pioneering results of statistics tells that,  as $n_{o}$  becomes large, $ \chi^{2}_{n_{o}}$   follows  asymptotically,  under the null hypothesis,  the 
$ \chi^{2}$  distribution with $k -1$ degrees of freedom, $  \chi^{2}(k-1)$, for a proof  see, e.g.,  \citet[Section1.5.7 p.~22]{agresti2003categorical}. The limiting distribution   $\chi^{2}(k-1)$  does not depend on the explicit  form or  complexity of  the  functions $p_{i}(\theta) $. 

  By simulator-modeling we have chosen from  $\Theta$ a value $\theta$, which is    thus  known to us.  This defines  a program function $\mathbf{M}_{C}(\theta)$, which induces the implicit  $P_{\theta} \in \mathbb{P}$. 
The Pearson statistic   $ \chi^{2}_{n_{o}}$  in Equation~(\ref{dkltjitv}) contains  implicit model category probabilities, but the quoted limit theorem is valid  mathematically under the simulator-model null hypothesis, and limiting  distribution $\chi^{2}(k-1)$  does not depend on  these implicit functions. 

% We have shown that  $ D_{\rm JS}\left(\widehat{P}_{\mathbf{D}}, \widehat{Q}_{\theta}   \right)$ converges to   $D_{\rm JS} \left(  \widehat{P}_{\mathbf{D}},  P_{\theta}\right)$  almost surely w.r.t. $P_{\theta}$, it converges also in distribution, as  $n$  grows to infinity. Thereby $n_{o}$ is   the given number of  non-synthetic samples in $\mathbf{D}$.  % ulpu: commented out because rather than assume that the simulation count approaches infinite so that we can calculate JSD between observed and simulated data based on one simulation, we use its expected value over simulations

%By the preceding, i.e., proposition  \ref{jsdvantevarde},  Equation~(\ref{andraderivatorsum2}), and  Equation~(\ref{dkltjitv}), omitting remainder terms
%\begin{equation}\label{jsdtjitvaae}
% \frac{2 n_{o}}{\pi (1-\pi)} D_{\rm JS}\left(\widehat{P}_{\mathbf{D}}, \widehat{Q}_{\theta}   \right)  -  \frac{n_{o}(k-1) }{2n \pi(1-\pi)}  \stackrel{.}{=} \sum_{i=1}^{k} \frac{\left( n_{i}  -n_{o}p_{i}(\theta) \right)^{2}}{ n_{o}p_{i}(\theta) }.
%\end{equation} % ulpu: commented out because this does not match the expression in equation 10.1 ie when we substitute pi=1/2 here the second term would be 2 x n_o/n x (k-1) whereas in equation 10.1 we have n_o/n x (k-1)

% ulpu: added this ->

To summarize, we have shown that the Pearson statistic can be approximated as
\begin{equation}
\chi^2_{n_o} \approx \frac{ 2 n_{o}}{\pi(1-\pi)}D_{\rm JS} \left(  \widehat{P}_{\mathbf{D}},  P_{\theta} \right).
\end{equation}
However this is not computable when the mapping between model parameters and category probabilities is unknown.
To arrive at a computable approximation, we propose to approximate $D_{\rm JS} \left(  \widehat{P}_{\mathbf{D}},  P_{\theta} \right)$ using $E_{P_{\theta}}\left[D_{\rm JS}(  \widehat{P}_{\mathbf{D}},  \widehat{Q}_{\theta}) \right]$ which can be computed based on repeated simulations.
Proposition \ref{jsdvantevarde} (Equation \ref{slututryckvor3}) suggests
\begin{equation}
\chi^2_{n_o}\approx \frac{ 2 n_{o}}{\pi(1-\pi)}\left[E_{P_{\theta}}\left[D_{\rm JS}(  \widehat{P}_{\mathbf{D}},  \widehat{Q}_{\theta}) \right] - \frac{1}{2n}V_{\rm F}\left(  P_{\theta},  \widehat{P} _{\mathbf{D}}\right)\right],
\end{equation}
where $n$ is the simulated data set size.
$V_{\rm F}\left(P_{\theta}, \widehat{P} _{\mathbf{D}}\right)$  cannot be computed when the mapping between model parameters and category probabilities is unknown, but we observe based on Equation (\ref{andraderivatorsum2})  that when $P_\theta\approx \widehat P_{\mathbf{D}}$, $V_{\rm F}\left(P_{\theta}, \widehat{P} _{\mathbf{D}}\right)\approx \pi(1-\pi)(k-1)$.
Hence we approximate the test statistic as
\begin{equation}\label{jsdtjitvaae}
\chi^2_{n_o} \approx \frac{2 n_{o}}{\pi (1-\pi)} E_{P_{\theta}}\left[D_{\rm JS}(  \widehat{P}_{\mathbf{D}},  \widehat{Q}_{\theta}) \right]  
 -  \frac{n_{o}}{n}(k-1).
\end{equation}

\subsection{ The $\chi^{2}$-divergence Options}

The $\phi$-divergence with divergence function  $\phi(x)=(x-1)^{2}$ is called   $\chi^{2}$-divergence    and is denoted by $\chi^{2}( P, Q)$     see,   e.g.  \citet{osterreicher2002csiszar}.  Let $P \in \mathbb{P}$ and $Q$ $\in \mathbb{P}$, and  assume that ${\rm supp}(Q)={\cal A}$.  By  
Equation~(\ref{informationone})  
\begin{equation}\label{tjitvaa}
\chi^{2}( P,Q):= \sum_{i=1}^{k} \frac{\left( p_{i} -q_{i}   \right)^{2}}{q_{i}}.
\end{equation}
There is the universal  "asymptotic equivalence'' of
$\phi$-divergences subject to the differentiability hypotheses, see \citet[Thm 4.1., pp.~448$-$449]{csiszar2004information}  meaning that  if $\phi(x)$ is  twice differentiable at $x$ = 1 and  the second derivative  $\phi^{(2)}(1) > 0$,  then   for $P$ close to $Q$   
\begin{equation}\label{shieldscsiszar}  
D_{\phi}(P,Q) \approx \frac{\phi^{(2)}(1)}{2}\chi^{2}( P,Q). 
\end{equation} 
From  Equation~(\ref{jsddivder3})  we have $\phi^{(2)}_{\rm JS}(1)  = \pi (1-\pi)$, and hence Equation~(\ref{shieldscsiszar}) holds for the JSD.   
Let us set 
$ \widehat{\varepsilon} = \max_{1 \leq i \leq k}\mid \frac{p_{i}}{q_{i}} -1  \mid$. 
 Then   \citet[Thm 3]{pardo2003asymptotic} says that 
\begin{equation}\label{vajdaparduasymp22}
\left |  \frac{2}{\phi^{(2)}(1)} D_{\phi}(P, Q)  -  \chi^{2}(P, Q) \right | \leq c \widehat{\varepsilon}  \chi^{2}(P,Q).
\end{equation}
Let $\widehat{Q}_{\theta} \in \mathbb{M}_{n}(\theta)$.  Then  $\widehat{q}_{i} = \frac{\xi_{i}}{n}$ and  in view of  Equation~(\ref{tjitvaa}) we obtain , if  $n=n_{o}$,  the  Pearson
$\chi^{2}$-statistic
\begin{equation}\label{neymandiv3}
n \chi^{2}(\widehat{P}_{\mathbf{D}},\widehat{Q}_{\theta})=\sum_{i=1}^{k} \frac{\left(n_{i} -n\widehat{q}_{i} \right)^{2}}{ n\widehat{q}_{i}}, 
\end{equation}
which  is computable from observed and simulated data but requires $n=n_{o}$, and  yields an awkward goodness-of-fit 
thinking.

 Due to  the  symmetry, 
$ D_{\rm JS, 1/2}\left(\widehat{P}_{\mathbf{D}},\widehat{Q}_{\theta}\right)$ can also be approximated by 
the Neyman modified $\chi^{2}$-statistic, i.e., 
\begin{equation}\label{neymandiv4}
n \chi^{2}(\widehat{Q}_{\theta}, \widehat{P}_{\mathbf{D}})=\sum_{i=1}^{k} \frac{\left(\xi_{i}  -n\widehat{p}_{i} \right)^{2}}{ n\widehat{p}_{i}}, 
\end{equation}
which  requires $n=n_{o}$   and  can be seen as  situation of  a  misspecified null hypothesis, cf.,  \citet[Section16.3.5 p.~598]{agresti2003categorical} or  \citet[Thm 3.1, p. 446]{cressie1984multinomial} requiring  further   assumptions.   \\

 \citet[pp.~364$-$365]{pardo2003asymptotic}  derive  a  non-central $\chi^{2}$-distribution for  $D_{\phi}\left( \widehat{P}_{\mathbf{D}},\widehat{Q}_{\mathbf{D}^{\ast} }\right)$ in  an explicit model  $\mathbb{M}= \{P_{\theta} |  \theta \in \Theta \}$, when   $\mathbf{D}=(X_{1}, \ldots, X_{n_{o}} ) \sim P_{\theta_{o}}$  and there  are  $n^{\ast}$ additional data $\mathbf{D}^{\ast}=  \left(X_{n_{o}+1}, \ldots, X_{n_{o}+n^{\ast}} \right) \sim P_{\theta_{o}}$.  
$\widehat{Q}_{\mathbf{D}^{\ast}} $  is given by $p_{j}\left(  \widehat{\theta} \right)$, where  $ \widehat{\theta}$ is   an estimate of  based on $\mathbf{D}^{\ast}$, i.e.,  $\widehat{\theta}= \widehat{\theta}\left( \mathbf{D}^{\ast} \right)$.

\subsection{Assumptions}

\newcommand{\param}{\theta}
\newcommand{\ncls}{k} % num classes
\newcommand{\icls}{i} % class index
\newcommand{\nobs}{n_o} % num observations
\newcommand{\nsms}{n} % num simulations
\newcommand{\nrep}{m} % num repeated simulations
\newcommand{\pcls}{p_\icls}

% multinomial distribution

The statistical test in which $\chi^2_{\nobs}$ is compared to $\chi^2(\ncls-1)$ distribution is considered reliable when (1) the observed data can be modeled as a random sample from a fixed multinomial distribution and (2) the sample size is large and all categories are associated with adequate observation counts. 
The same applies when the test statistic is calculated based on simulated data as proposed in Equation (\ref{jsdtjitvaae}).
However the multinomial distribution assumption can be a particular concern because we assume that the dependencies between the model parameters and observed data cannot be captured with an explicit or even  tractable likelihood function.
This means that the observed and simulated data are not necessarily expected to strictly follow a multinomial distribution, and as a consequence, we cannot assume that the null distribution associated with the proposed test statistic is $\chi^2(\ncls-1)$.
However we demonstrate in the simulation experiments (Section \ref{sec:experiments}) how effective sample size (ESS) can be used to correct the test statistic distribution.

% sample size

The minimum observation count and compensation for small sample size have received more attention in the past. For studies and discussion, see for example \cite{yates1934contingency} and \cite{yarnold1970minimum}.
In practice it is common to assume that the test is reliable when most categories have expected observation counts above 5 and all categories have expected observation count above 1.

\section{Confidence Intervals for $\theta$ with Test Inversion and the Symmetric JSD-statistic}\label{testinv}

From here on we shall  specialize to   $\pi=1/2$, unless otherwise stated   and  use  the notation 
$D_{\rm JS,1/2}( P, Q) $ for the corresponding JSD.  This is  a  symmetrized and smoothed version of KLD,  since  $D_{\rm JS,1/2}( P, Q)= D_{\rm JS,1/2}( Q,P).$  
The special case  $D_{\rm JS,1/2}\left( P, Q \right)$ is often  used in applying the JSD in machine learning. By Equation~(\ref{range2}) we get  
  $ 0 \leq D_{\rm JS,1/2}\left( P, Q \right) \leq \ln (2)$. 
    The paper \citet{topsoe2000some} provides several   expressions (e.g., in terms of infinite series)  and  bounds for  $ D_{\rm JS, 1/2}\left( P, Q \right)$.  \citet[Lemma 4, p.~787]{kelly2012classification} derives 
$D_{\rm JS,1/2}( P, Q) $ by a version of  Equation~(\ref{jsinformationsibs}) starting from a generalized likelihood ratio. 
 It is shown in \citet{endres2003new}, that the symmetric $  \sqrt{D_{\rm JS,1/2}\left( P,Q \right)}$ satisfies the triangle inequality  and is thus a metric  on $\mathbb{P} $. 
 By Equations~(\ref{andraderivatorsum2})  and (\ref{jsdtjitvaae}) we get for $\pi=1/2$
\begin{equation}\label{jsdtjitvaae2}
   8 n_{o} E_{P_{\theta}}\left[D_{\rm JS, 1/2}\left(\widehat{P}_{\mathbf{D}}, \widehat{Q}_{\theta}   \right)\right] -  \frac{n_{o} (k-1)}{n} 
\stackrel{d}{\approx}  \chi^{2}(k-1).
\end{equation}
 Note that $\phi^{(2)}_{\rm JS, 1/2}(1)  = \frac{1}{4}$, so this agrees in the pertinent term  with Equations~(\ref{shieldscsiszar}) and   (\ref{vajdaparduasymp22}).  Of course, the issue of how small  $n_{o}p_{j}(\theta)$ can be for the approximation by 
$\chi(k-1)$ to  be valid, arises here.

% ulpu: added this->
 
Let us denote the test statistic in Equation (\ref{jsdtjitvaae2}) as $T(\theta)$.
The test statistic and its asymptotic distribution under the null hypothesis are compared to statistically assess the observed support to candidate parameters $\theta$.
In practice we can use the chi-squared distribution to determine critical values $h(\alpha)$ such that
\begin{equation}
P(T(\theta_o) < h(\alpha)) = 1-\alpha, 
\end{equation}
where $\theta_o$ are parameter values that mimic the true distribution $P_{o}$ that produced the observed data.
This means that when $T(\theta)$ exceeds $h(\alpha)$, the hypothesis that parameters $\theta$ mimic the true category probabilities can be rejected at significance level $\alpha$.
In addition the parameter values for which this hypothesis cannot be rejected at significance level $\alpha$ constitute a $100(1-\alpha)$~\% confidence interval or confidence set,
\begin{equation}
A(\mathbf{D}) = \left\{\theta: T(\theta)\leq h(\alpha) \right\}.
\end{equation}
While a maximum likelihood or minimum JSD estimate indicates the model parameters that best replicate the observed data,
a confidence set takes into account the expected random variation between observations and provides information about alternative explanations to the observed data.
Since the test statistic studied in this work measures absolute rather than relative support to the candidate parameters, the confidence sets can also be empty.
This is expected when the simulator model cannot replicate the observed frequencies due to model mismatch or when the observed frequencies represent an outcome that is overall rare compared to the selected significance level $\alpha$.

\section{Simulation Experiments}\label{sec:experiments}

\newcommand*{\vcenteredhbox}[1]{\begingroup
\setbox0=\hbox{#1}\parbox{\wd0}{\box0}\endgroup}

This section describes simulation experiments carried out to evaluate the test statistic proposed in this work.
The section is organized as follows.
Section~\ref{sec:methods} describes how the test statistic values are calculated based on observed and simulated data while Section~\ref{sec:evaluation} introduces the evaluation measure used in the experiments.
Each experiment is then described in more detail and the evaluation results are reported in Sections \ref{sec:experiment1}-\ref{sec:experiment3}.

\subsection{Methods}\label{sec:methods}

\newcommand{\avejsd}{E_{P_{\theta}}\left[D_{\rm JS,1/2}( \widehat{P}_{\mathbf{D}}, \widehat{Q}_{\theta})\right] }

The proposed test statistic is calculated based on the expected JSD between observed and simulated data.
The expected value $\avejsd$ can be estimated as average calculated over JSD between the observed data and $\nrep$ simulated data sets generated with parameters $\param$,
\begin{equation}
\avejsd=\frac{1}{m}\sum_{j=1}^{m} D_{\rm JS,1/2}( \widehat{P}_{\mathbf{D}}, \widehat{Q}_{\theta}^{(j)}),
\end{equation}
where $\widehat{Q}_{\theta}^{(j)}$ denotes the category proportions in simulated set $j$.
This method is expected work well in hypothesis testing, since the estimates are accurate when $\nrep$ is large.
However when we want to estimate a confidence set over several candidate parameters, running $\nrep$ simulations with each candidate is not practical in case the candidate set is large and individual simulations are computationally expensive.
Hence we also examine using Bayesian optimization (BO) as proposed by \citet{gutmann2016bayesian}.

% idea overview

BO is a sequential optimization method that utilizes a model fitted on the previous evaluations to predict evaluation outcomes and decide the next parameter values to evaluate \citep{shahriari2015taking}.
In simulator-based inference, the optimization task is to find parameter values that minimize the expected distance or discrepancy between observed and simulated data \citep{gutmann2016bayesian}.
We measure the distance using JSD and use the surrogate model fitted in optimization to calculate $\avejsd$ at selected candidate parameter values.
%
% setup
%
The experiments presented in this work were carried out with the BOLFI implementation available in ELFI \citep{lintusaari2018elfi}.
Gaussian process regression with a normal likelihood and squared exponential kernel was used to model the dependencies between simulator parameters and JSD between observed and simulated data.
The kernel associates each parameter dimension with a lengthscale estimated based on the available data, and we used gamma prior distributions to bias the estimates towards reasonable values in the initial iterations when observed data is scarce.
%To bias the estimates towards reasonable values in the initial iterations when observed data is scarce, we used gamma prior distributions with shape $a=2$ and rate $b=10/h$, where $h$ denotes the parameter optimization range.
%
In addition we used the theoretical bounds in Equation~(\ref{range2}) to normalize the output range to $[-1,1]$, and assumed zero mean in the regression model.
The parameter values evaluated in each iteration were selected based on the lower confidence bound acquisition rule.

% ess

When we cannot assume that the observed and simulated data are sampled from a fixed multinomial distribution, we also use simulated data to estimate an effective sample size.
The estimates used in this work are calculated based on $\nrep$ simulated data sets generated with parameters $\param$ as 
\begin{equation}
\textrm{ESS}=\frac{\sum_{\icls=1}^\ncls \bar q_\icls(1-\bar q_\icls)}{\frac{1}{m}\sum_{\icls=1}^\ncls\sum_{j=1}^\nrep (\hat q_\icls^{(j)}-\bar q_\icls)^2},
\end{equation}
where $\hat q_\icls^{(j)}$ denotes the observed category proportion in simulated set $j$ and $\bar q_\icls=\frac{1}{\nrep}\sum_{j=1}^\nrep \hat q_\icls^{(j)}$.
The idea is that the expected variance calculated based on the estimated ESS and multinomial distribution assumption should match with the variance observed between simulated samples.
For alternative approaches, see for example \cite{candy2008estimation}.

\subsection{Evaluation}\label{sec:evaluation}

% models

We carried out experiments with three simulator models that produce categorical observation data as discussed in Section \ref{sec:phiestimat11}.
The present work focuses on hypothesis testing and confidence set estimation in simulator-based models with intractable likelihoods and unknown mapping between parameter values and category probabilities.
However, to evaluate the proposed test statistic, we ran experiments with two simulator models where the mapping between parameter values and category probabilities is known (Sections~\ref{sec:experiment1}--\ref{sec:experiment2}).
This allows comparison between the proposed simulator-based approximation and the standard Pearson statistic discussed in Section \ref{secttjitvaa2}.
In addition we %compare test statistics calculated based on the BOLFI model to test statistics calculated based on $\nrep=1000$ simulations, and
run experiments with the simulator model studied by \citet{corander2017frequency} which has an intractable likelihood.

% simulations

We used the simulator models to generate 1000 observation sets with fixed parameter values and calculated the proposed test statistic based on each observation set as discussed in the previous section.
The proportion of experiments where true parameter values are rejected should approach $\alpha$.
Hence to evaluate the proposed test statistic, we count the experiments where the true parameter value is accepted and included in the confidence set at significance levels $\alpha=\{0.50, 0.10, 0.05, 0.01\}$ that correspond to confidence levels $\{50, 90, 95, 99\}$~\%.
The observed proportions, referred to  as coverage probabilities, are compared to the expected level $1-\alpha$.

\subsection{Experiment 1}\label{sec:experiment1}
 
% simulator model

The simulator studied in this experiment is a multinomial distribution with $\ncls$ categories and observation probabilities $\pcls(\param)$ calculated as 
\begin{equation}
\pcls(\param)=\sigma(-\param(\icls-1)), \quad i = 1,\ldots,\ncls,
\end{equation}
where $\sigma(z_\icls)$ denotes the normalized exponential function output $\sigma(z_\icls)=\exp(z_\icls)/\sum_\icls\exp(z_\icls)$.
The expected observation counts either decrease with the category index $\icls$ when $\param>0$ or increase when $\param<0$.
The setup used in this example is $\ncls=5$ and $\param=0.2$.
The corresponding category probabilities are visualized in Figure~\ref{fig:event_probabilities_1}.
\begin{figure}
    \centering
    \includegraphics[width=0.35\textwidth]{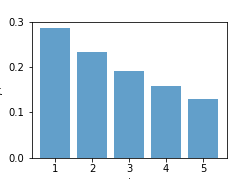}
    \caption{$P_o$ in the experiments carried out with the 1-parameter example model.}
    \label{fig:event_probabilities_1}
\end{figure}
We used the model to simulate 1000 observation sets with $\nobs=\{50, 100, 500, 1000\}$ samples and calculated the proposed test statistics based on the observation sets and simulations with $\nsms=\nobs$ and $\nsms=1000\nobs$ samples.

We start with experiments where the proposed test statistic values are calculated with $\avejsd$ estimated based on $\nrep=1000$ simulations carried out with the true parameter value.
A visual comparison between the expected null distribution and observed test statistic values indicates a reasonable fit in all test conditions (Figure~\ref{fig:chi_squared_visual_1}).
\begin{figure}[p]
    \centering
    \makebox[\textwidth][c]{(a)\vcenteredhbox{\includegraphics[width=\textwidth]{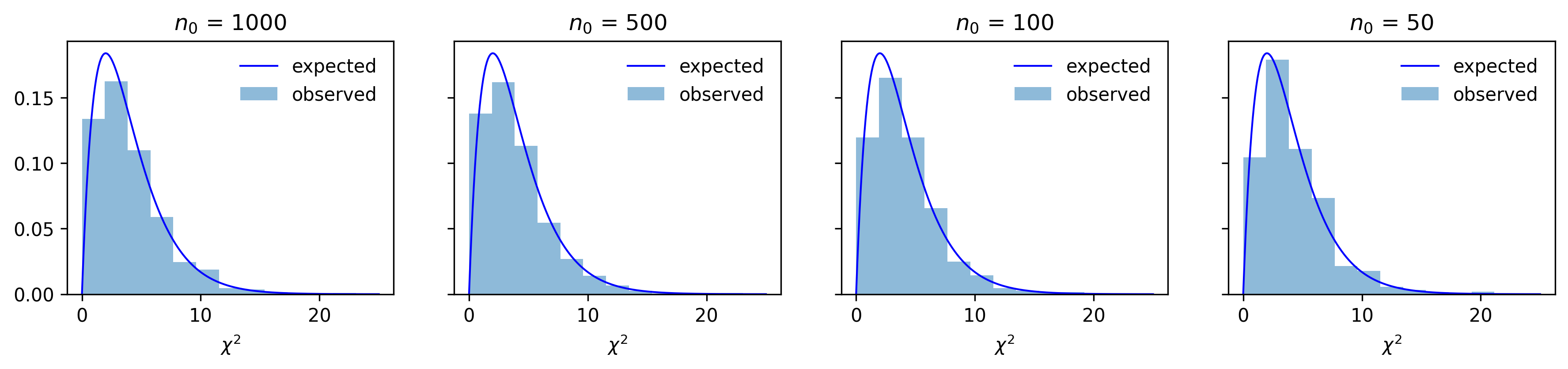}}} \-
    \makebox[\textwidth][c]{(b)\vcenteredhbox{\includegraphics[width=\textwidth]{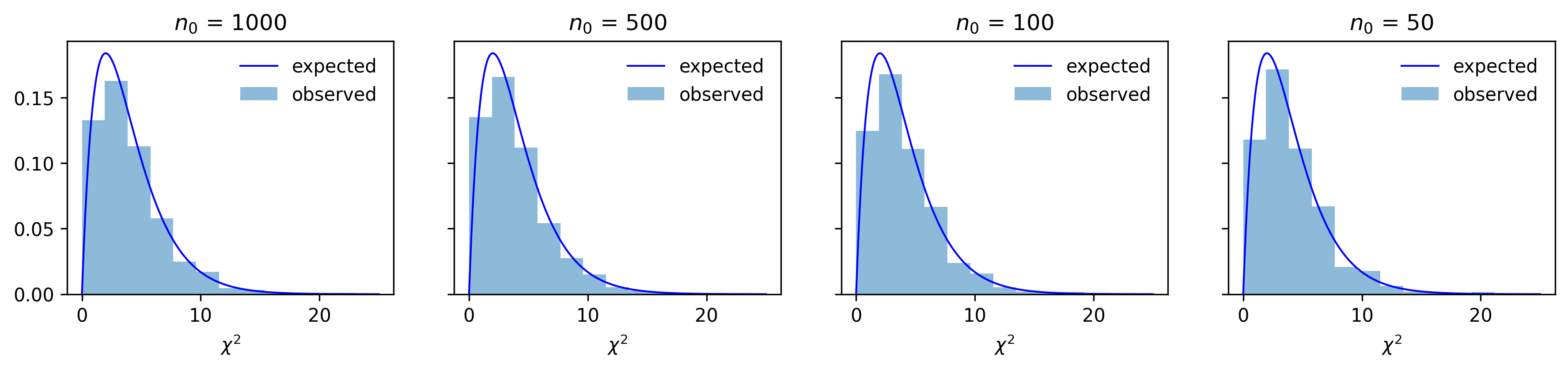}}}
    \makebox[\textwidth][c]{(c)\vcenteredhbox{\includegraphics[width=\textwidth]{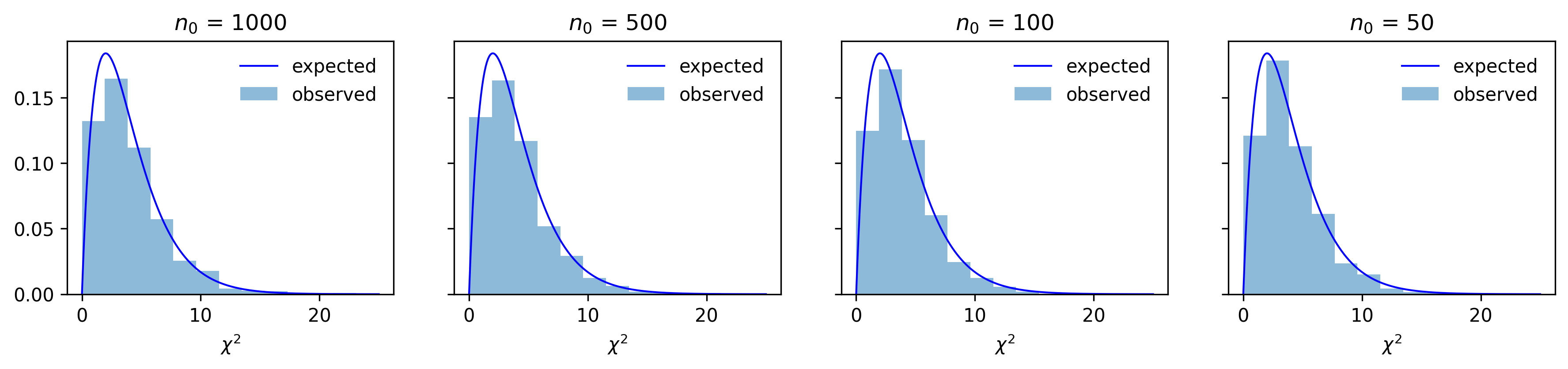}}}
    \caption{The expected asymptotic distribution and test statistics calculated based on observations simulated with 1-parameter example model. The proposed test statistic values were calculated based on $\nrep=1000$ simulations with (a) $\nsms=\nobs$ or (b) $\nsms=1000\nobs$ samples.
    For comparison, see (c) Pearson statistics calculated based on the true category probabilities.}
    \label{fig:chi_squared_visual_1}
\end{figure}
This is confirmed when we examine the coverage probabilities presented in Figure~\ref{fig:chi_squared_cov_probs_1}.
\begin{figure}[p]
    \centering
    \makebox[\textwidth][c]{\includegraphics[width=0.95\textwidth]{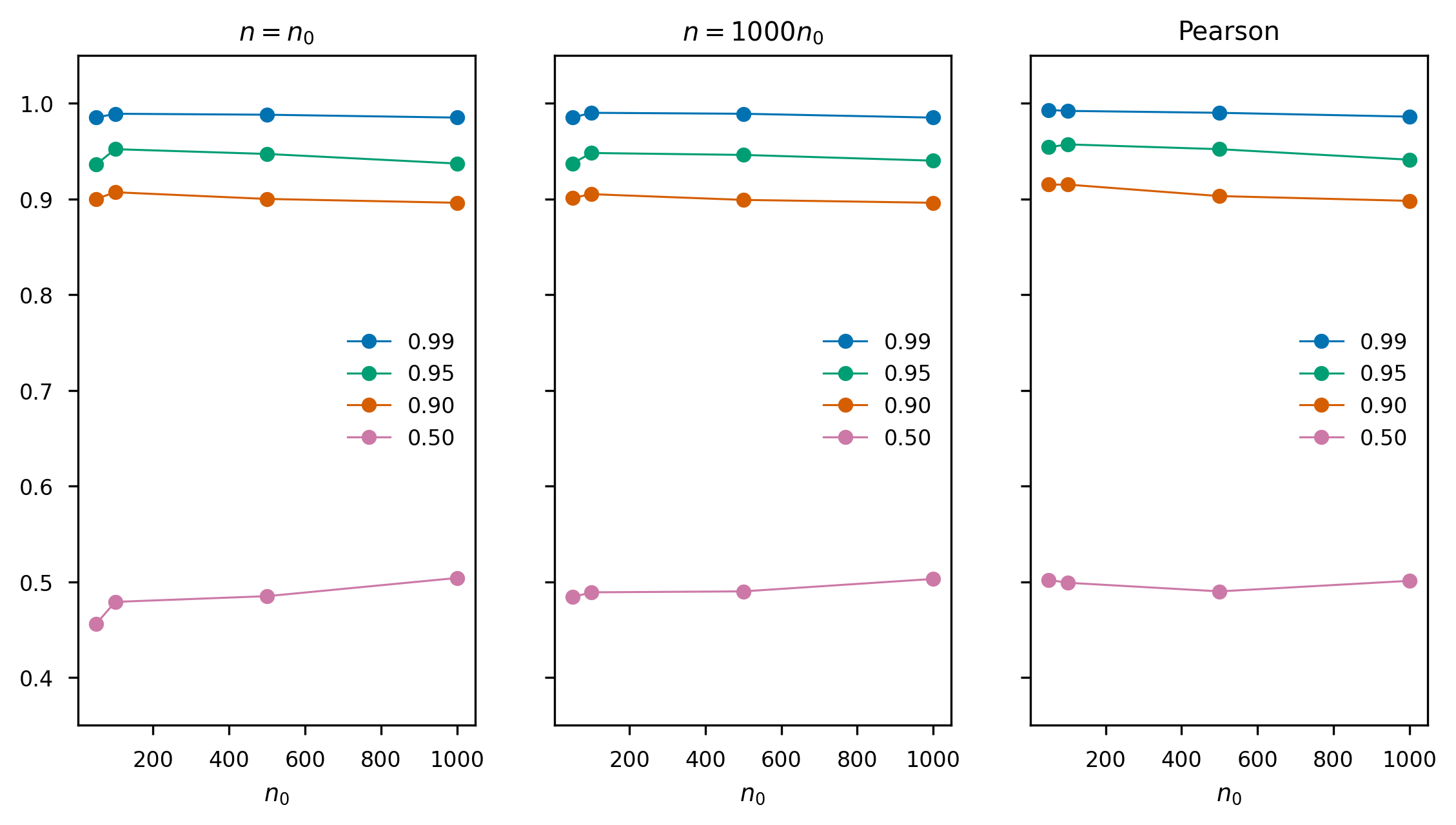}}
    \caption{Coverage probabilities calculated based on observations simulated with the 1-parameter example model. The proposed test statistic values were calculated based $\nrep=1000$ simulations with $\nsms=\nobs$ or $\nsms=1000\nobs$ samples. Coverage probabilities calculated with the Pearson statistic are presented for comparison.}
    \label{fig:chi_squared_cov_probs_1}
\end{figure}
We observe that the coverage probabilities are close to the nominal value $1-\alpha$ in all test conditions, and while the coverage probabilities are lower than the nominal value in some conditions when the observed and simulated sample are small, the coverage probabilities in all test conditions converge to the nominal values when $\nobs$ increases.

% test 2

We then proceed to evaluate test statistics calculated based on $\avejsd$ estimated using the surrogate model in BO.
BO was initialized with 20 simulations and the total simulation count was set to 1000.
The parameter values included in the initialization set were selected at random within $[-1, 1]$ and optimization was carried out in this range.
The coverage probabilities calculated based on the proposed test statistic are presented in Figure~\ref{fig:chi_squared_cov_probs_1_BOLFI}.
\begin{figure}
    \centering
    \makebox[\textwidth][c]{\includegraphics[width=0.65\textwidth]{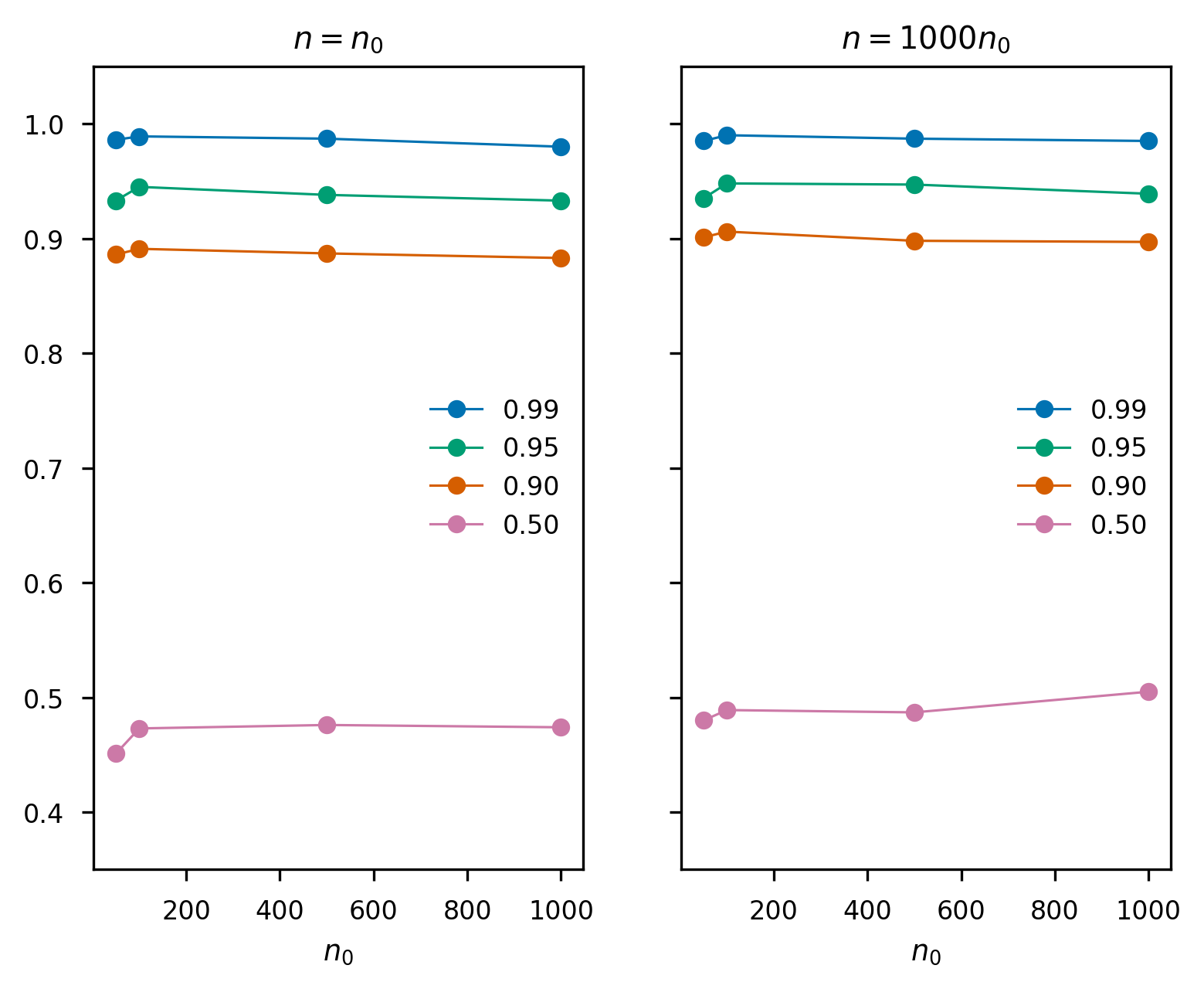}}
    \caption{Coverage probabilities calculated based on observations simulated with the 1-parameter example model when the test statistics were calculated based on the BOLFI model using simulations with $\nsms=\nobs$ or $\nsms=1000\nobs$ samples.}
    \label{fig:chi_squared_cov_probs_1_BOLFI}
\end{figure}
The coverage probabilities are close to $1-\alpha$ in all test conditions, but
in contrast to the results presented in Figure~\ref{fig:chi_squared_cov_probs_1}, %the coverage probabilities in Figure~\ref{fig:chi_squared_cov_probs_1_BOLFI} do not converge to the nominal value $1-\alpha$ when $\nobs$ increases if $\nsms=\nobs$.
do not converge to the nominal value when $\nobs$ increases if $\nsms=\nobs$.
This because the $\avejsd$ calculated based on the surrogate model is not as accurate as the estimate calculated based on $\nrep=1000$ simulations, and while the estimates become more accurate when $\nsms$ increases and there is less variation between simulations, the estimation errors are multiplied with $\nobs$ when we calculate the test statistic, meaning that an increase in sample size may not decrease the error in test statistic values when $\nsms=\nobs$ (Figure~\ref{fig:BOLFI_error_1}).
\begin{figure}
    \centering
    \makebox[\textwidth][c]{(a)\vcenteredhbox{\includegraphics[width=0.9\textwidth]{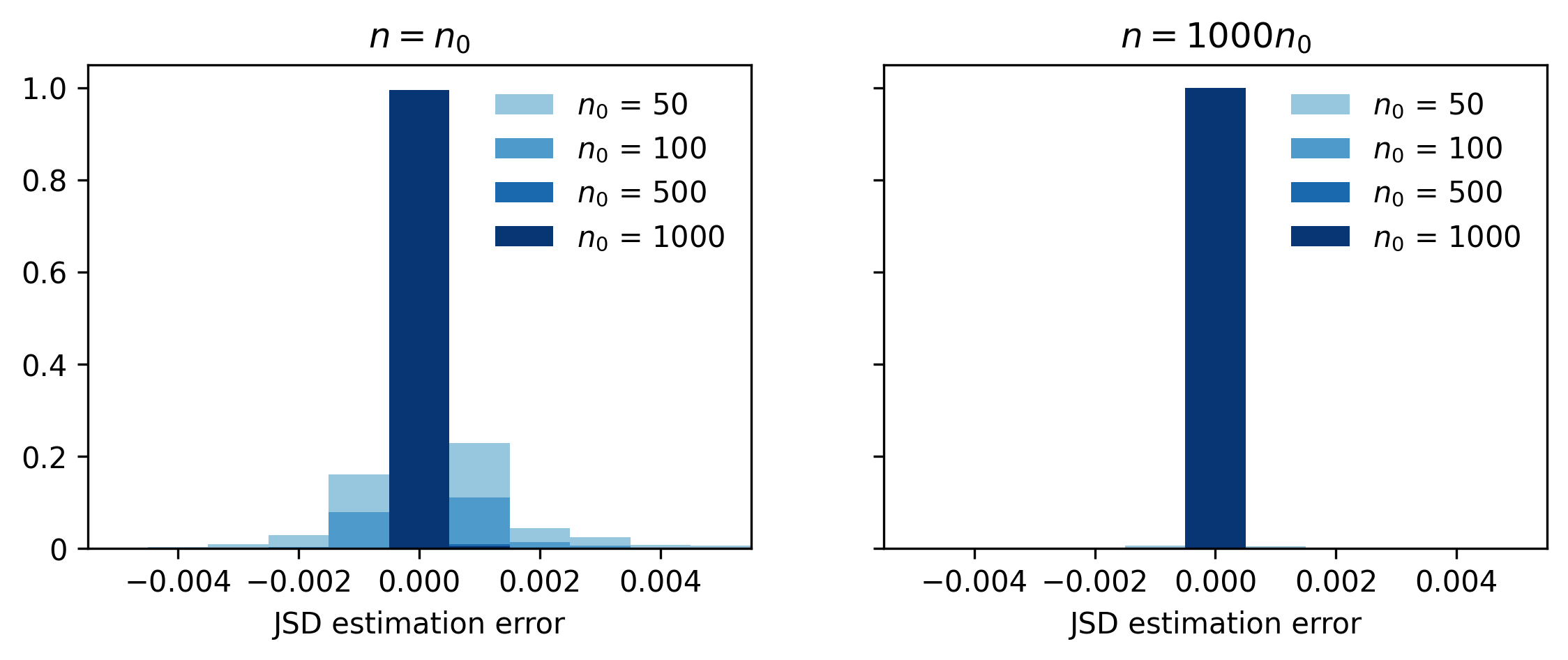}}} \-
    \makebox[\textwidth][c]{(b)\vcenteredhbox{\includegraphics[width=0.9\textwidth]{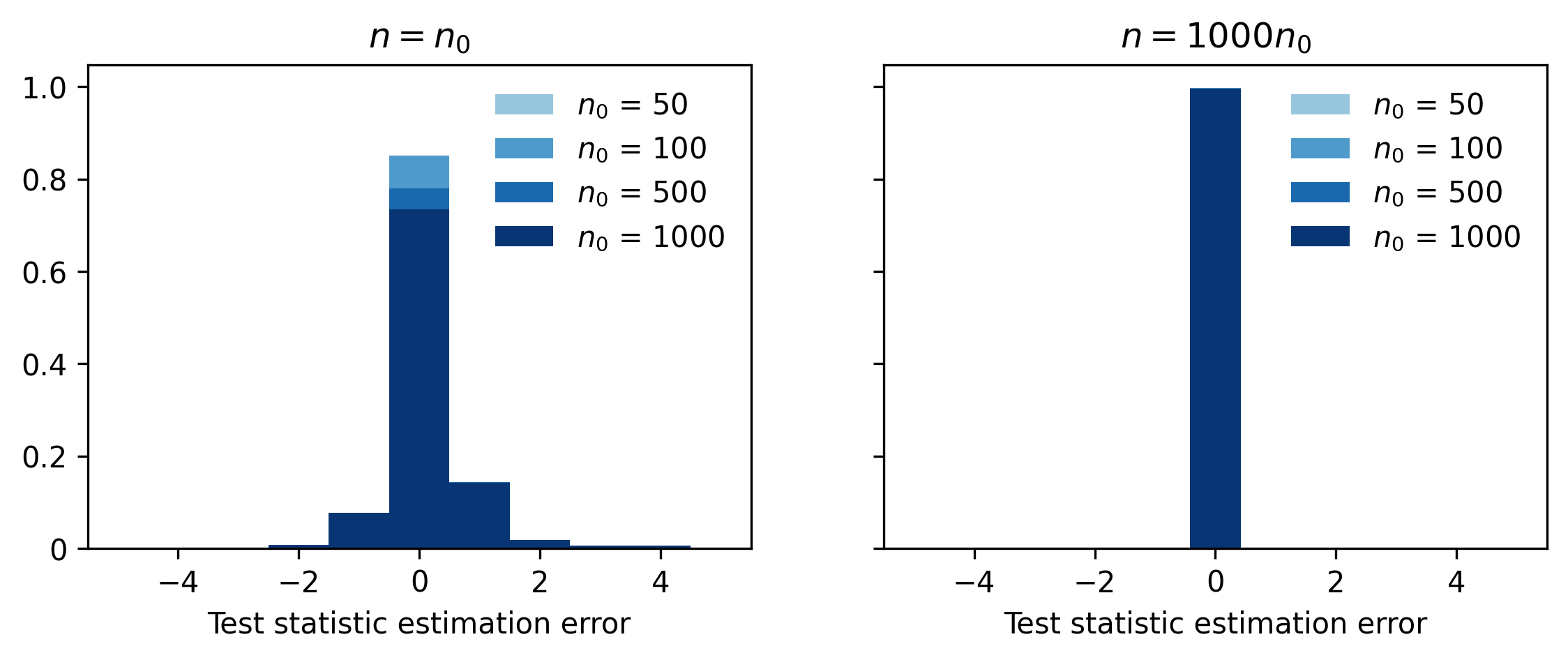}}}
    \caption{Estimation errors calculated as difference between (a)~simulator-based JSD and (b)~test statistic values calculated based on the BOLFI model or $\nrep=1000$ simulations.}
    \label{fig:BOLFI_error_1}
\end{figure}

While using a surrogate model can introduce estimation error in the test statistic values, it has the benefit that we can test large candidate sets without additional simulation cost.
%
%This would be the case when we want to estimate a confidence interval or confidence set over parameter values.
%
Figure~\ref{fig:confidence_intervals_1} shows coverage probabilities calculated with respect to the model parameter $\param$.
\begin{figure}
    \centering
    \makebox[\textwidth][c]{(a)\vcenteredhbox{\includegraphics[width=\textwidth]{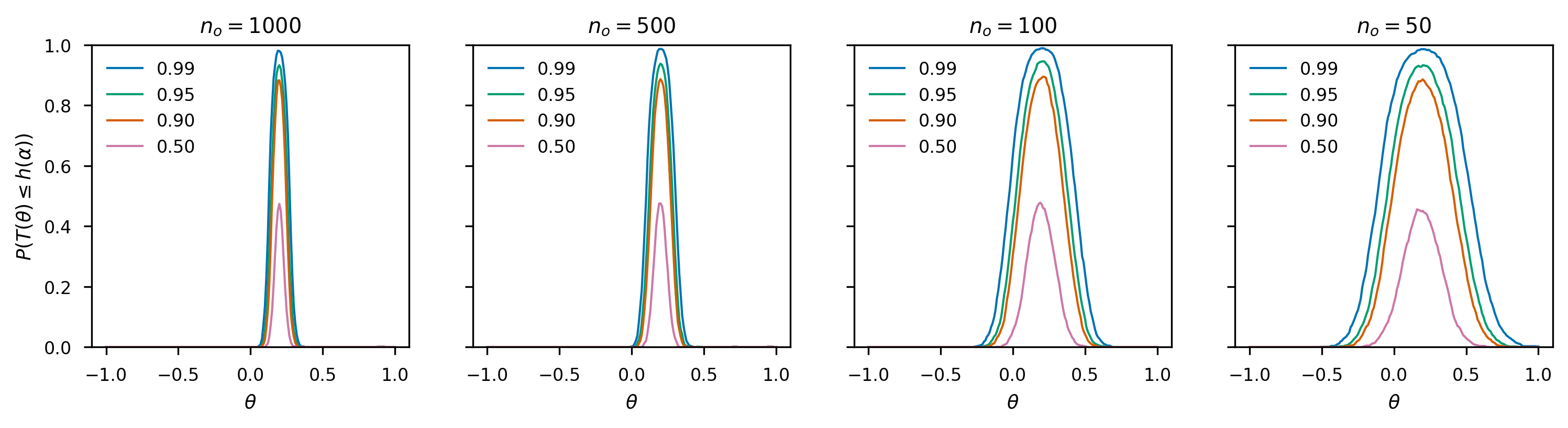}}} \-
    \makebox[\textwidth][c]{(b)\vcenteredhbox{\includegraphics[width=\textwidth]{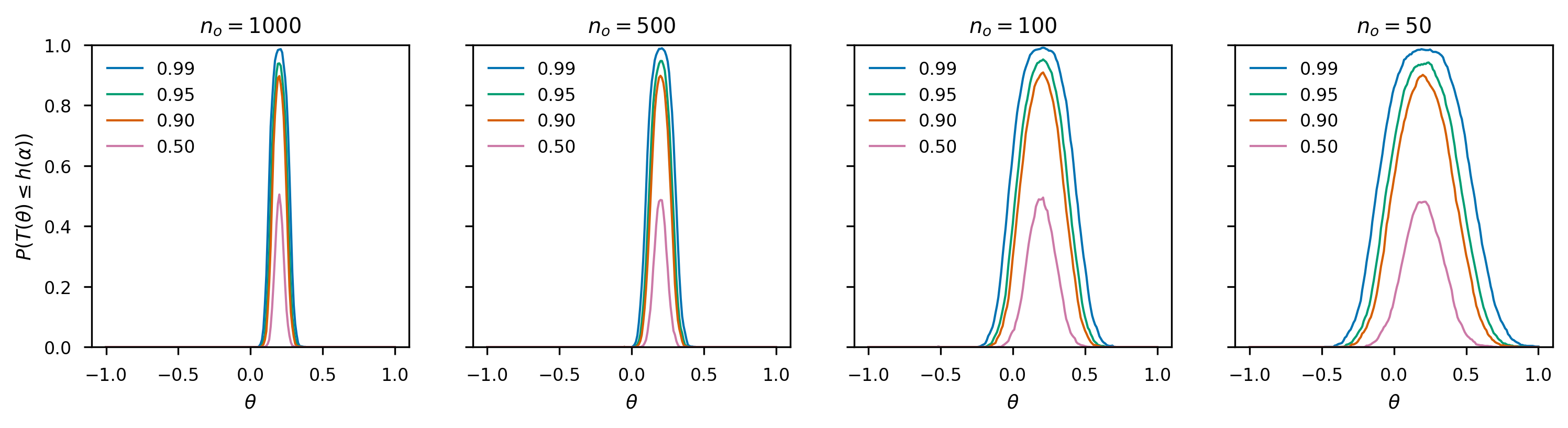}}}
    \caption{Proportion of confidence intervals that include the parameter value $\theta$ when observations were generated with the 1-parameter example model using $\theta=0.2$ and the proposed test statistic values were calculated based on the BOLFI model using simulations with (a) $\nsms=\nobs$ or (b) $\nsms=1000\nobs$ samples.}
    \label{fig:confidence_intervals_1}
\end{figure}
We see that the confidence intervals estimated based on the proposed test statistic are expected to cover a range around the true parameter value, and that the range becomes wider when we reduce the observed data set size $\nobs$.

\subsection{Experiment 2}\label{sec:experiment2}

\newcommand{\lx}{\lambda^X} 
\newcommand{\ly}{\lambda^Y} 
\newcommand{\lxy}{\lambda^{XY}} 

% simulator model

The second experiments are carried out with the standard log-linear model that is used to describe association and interaction patterns between two categorical random variables.
Here we model the counts in a two-way table as a sample from a multinomial distribution with $k=4$ categories.
The observation probabilities are calculated based on the normalized exponential function as
\begin{equation}
\pcls(\param)=\sigma(\log(\mu_i)) = \sigma(\lambda+ X_i\lx+ Y_i\ly+X_iY_i\lxy), \quad i = 1,2,3,4,
\end{equation}
where $\sigma(z_\icls)$ denotes the normalized exponential function output $\sigma(z_\icls)=\exp(z_\icls)/\sum_\icls\exp(z_\icls)$,
and $X_i$ and $Y_i$ are coded variable values and $\mu_i$ denote the expected observation counts.
We assume effect-coded variables that take values 1 or -1 as indicated in Table~\ref{tab:effectcode}.
\begin{table}
    \centering
    \begin{tabular}{ccccc}
    $\icls$ & 1 & 2 & 3 & 4  \\\hline
    $X_i$ & 1 & 1 & -1 & -1 \\ 
    $Y_i$ & 1 & -1 & 1 & -1 \\
    \end{tabular}
    \caption{\label{tab:effectcode}Effect coding in the log-linear example.}
\end{table}
The model parameters $\lx$ and $\ly$ then encode expected difference in the proportion between 1 and -1 values in variables $X$ and $Y$, and the parameter $\lxy$ encodes possible association between the two variable values.
Finally the constant $\lambda$ is calculated based on the other parameter values and total count $n$ so that the sum over expected counts equals $n$.

% setup

We run experiments with two model versions.
We use a two-parameter model where $\lxy=0$ and the model parameters $\param=(\lx,\ly)$ and a saturated model where the model parameters $\param=(\lx,\ly,\lxy)$.
The true parameter values are set to $\lx=-0.25$ and $\ly=0.15$ in the two-parameter version and to $\lx=-0.20$, $\ly=0.10$, and $\lxy=0.40$ in the saturated three-parameter version.
The corresponding category probabilities are visualized in Figure \ref{fig:event_probabilities_2}.
\begin{figure}
    \centering
    \begin{tabular}{cc}
    (a) & (b) \\
    \includegraphics[width=0.35\textwidth]{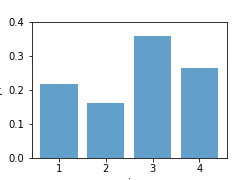} &
    \includegraphics[width=0.35\textwidth]{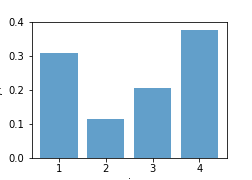}
    \end{tabular}
    \caption{$P_o$ in the experiments carried out with the (a) 2-parameter and (b) 3-parameter log-linear model.}
    \label{fig:event_probabilities_2}
\end{figure}
Both model versions were used to simulate 1000 observation sets with $\nobs=\{50,100,500,1000\}$ samples.

% test 1

We run the same experiments that were carried out with the 1-parameter example model studied in the previous section.
We start with the proposed test statistics calculated based on $\nrep=1000$ simulations carried out with the true parameter values with sample size $\nsms=\nobs$ and $\nsms=1000\nobs$.
Figure \ref{fig:chi_squared_cov_probs_2} shows coverage probabilities calculated based on comparison between the observed test statistic values and the expected null distribution.
\begin{figure}
    \centering
    \makebox[\textwidth][c]{(a)\vcenteredhbox{\includegraphics[width=0.95\textwidth]{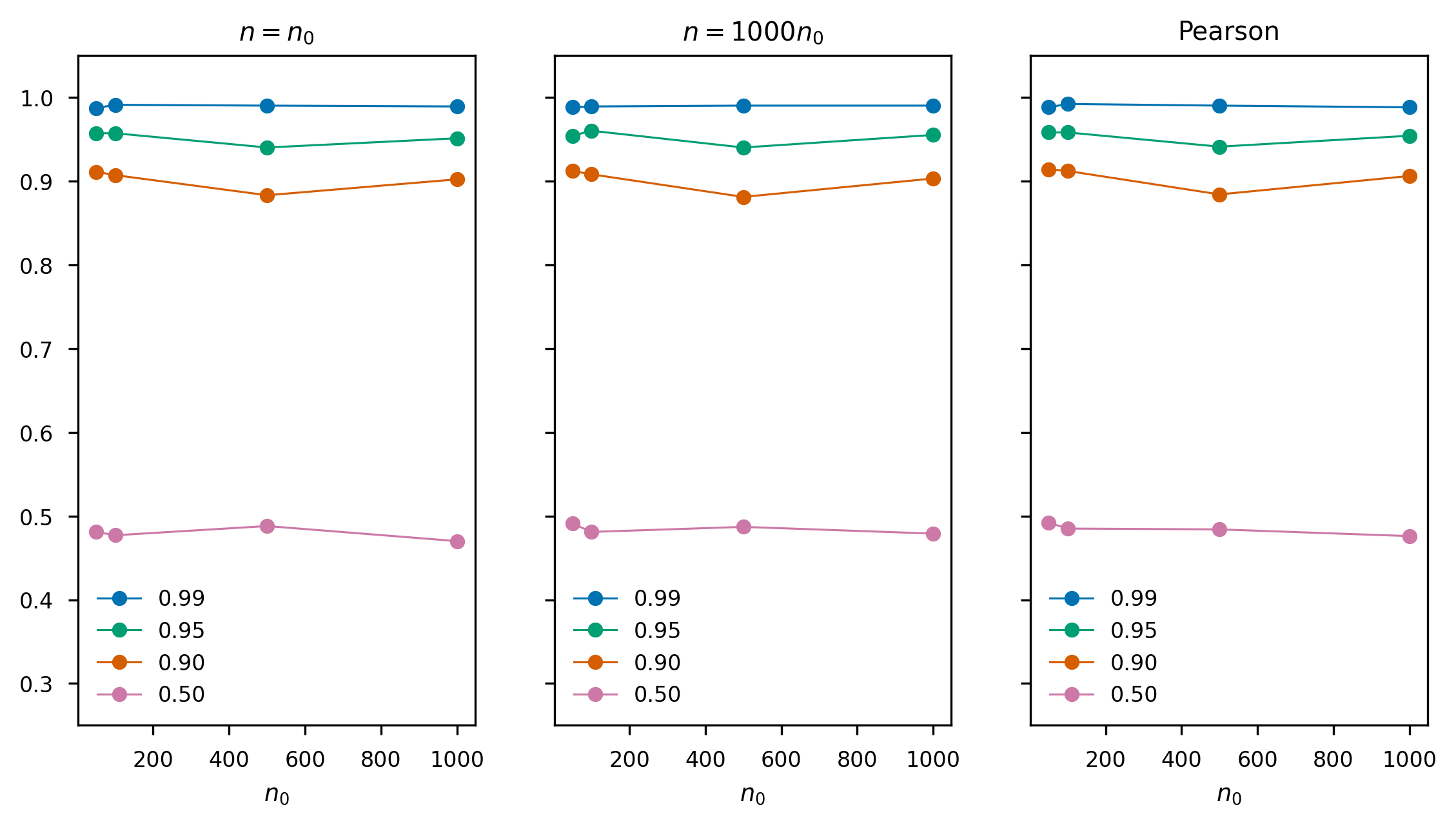}}}
    \makebox[\textwidth][c]{(b)\vcenteredhbox{\includegraphics[width=0.95\textwidth]{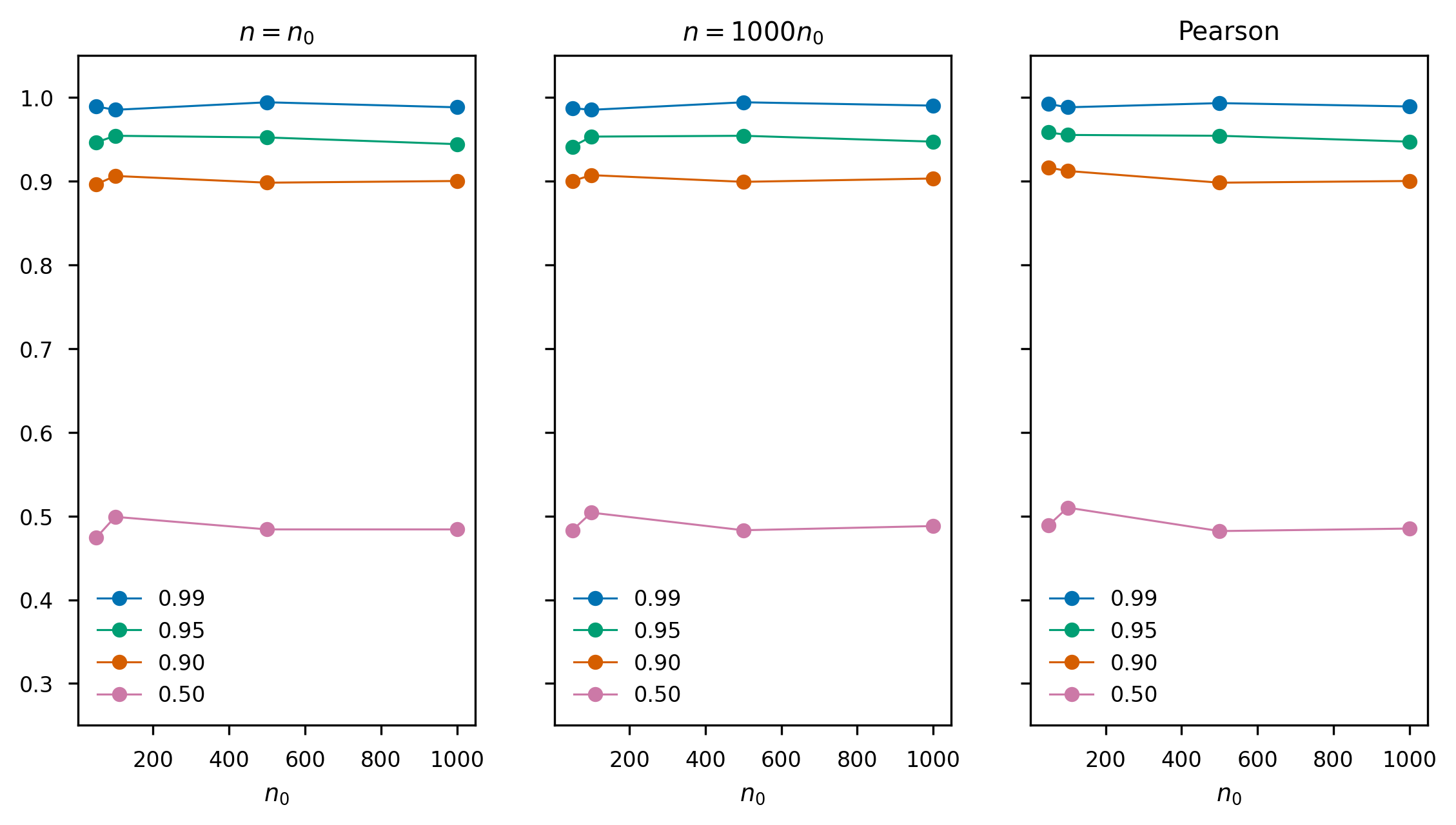}}}
    \caption{Coverage probabilities calculated based on observations simulated with the (a) 2-parameter or (b) 3-parameter log-linear model. The proposed test statistic values were calculated based $\nrep=1000$ simulations with $\nsms=\nobs$ or $\nsms=1000\nobs$ samples. Coverage probabilities calculated with the Pearson statistic are presented for comparison.
    }
    \label{fig:chi_squared_cov_probs_2}
\end{figure}
The coverage probabilities calculated based on the proposed test statistic follow the coverage probabilities calculated with the Pearson statistic and are close to the nominal value $1-\alpha$ in all test conditions.

% test 2

We also evaluate the proposed test statistic values calculated based on the surrogate model in BO.
In this experiment we initialized BO with 50 simulations and set the total simulation count to 2000. 
The parameter values included in the initialization set were selected at random within $[-1, 1]$ and optimization was carried out in this range.
Coverage probabilities calculated based on the proposed test statistic are visualized in Figure~\ref{fig:chi_squared_cov_probs_2_BOLFI}.
\begin{figure}
\centering
    (a) \vcenteredhbox{\includegraphics[width=0.65\textwidth]{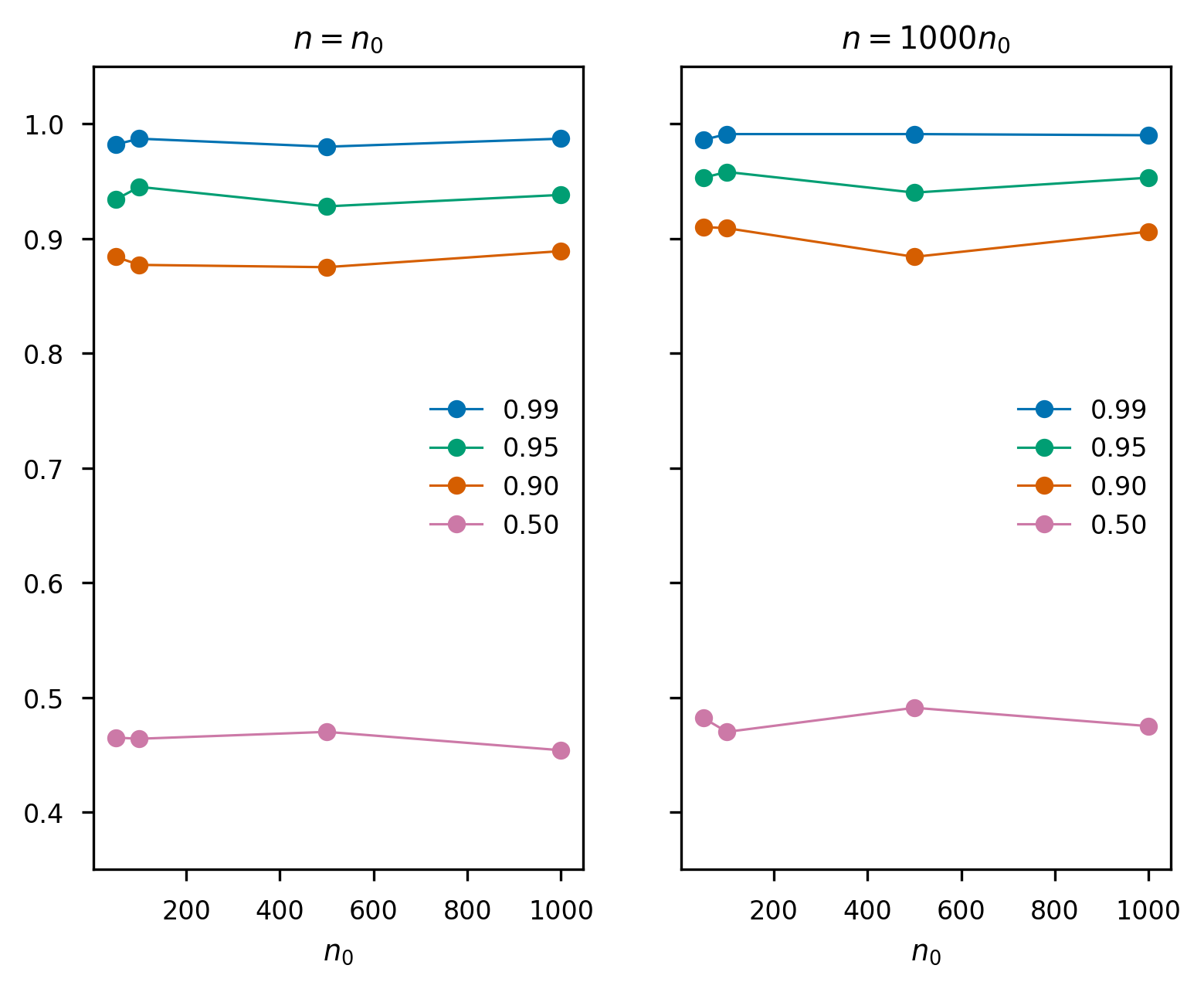}}
    
    (b) \vcenteredhbox{\includegraphics[width=0.65\textwidth]{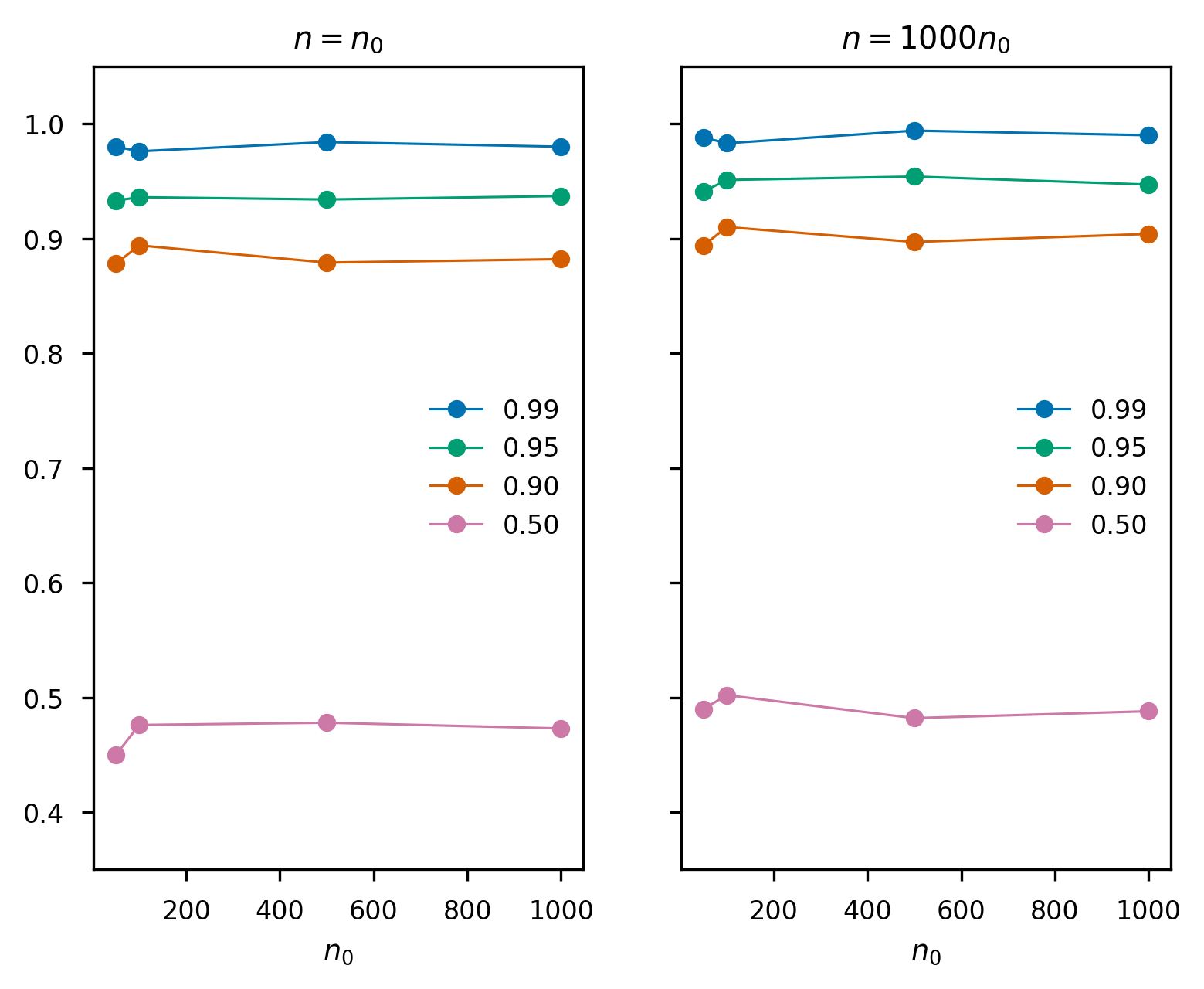}}
    \caption{Coverage probabilities calculated based on observations simulated with the (a) 2-parameter or (b) 3-parameter log-linear model when the test statistics were calculated based on the BOLFI model using simulations with $\nsms=\nobs$ or $\nsms=1000\nobs$ samples.}
    \label{fig:chi_squared_cov_probs_2_BOLFI}
\end{figure}
We observe that the coverage are close to the nominal value in all test conditions, but do not converge to $1-\alpha$ when $\nobs$ increases if $\nsms=\nobs$.

\subsection{Experiment 3}\label{sec:experiment3}

The last experiment is carried out with a model that simulates the evolution of genotype frequencies.
\citet{corander2017frequency} used the model to capture non-negative frequency dependent selection (NFDS) in the post-vaccine evolution of pneumococcal populations.
The simulator code is available online and the present experiments are carried out with the model version that simulates homogeneous-rate multilocus NFDS.
Evolution is modeled as a discrete-time process where the population at time $t+1$ is sampled with replacement from the population at time $t$ with observation counts
\begin{equation}
X_{i,t} \sim \mathrm{Poisson}\left(\frac{\kappa}{N_t}(1-m)(1-v_i)(1+\sigma_f)^{\pi_{i,t}} \right),
\end{equation}
where $i$ indexes the isolates in population at time $t$.
The first term accounts for general density-dependent selection where $\kappa$ denotes the carrying capacity and $N_t$ is the population size at time $t$. 
In the current experiment we assumed $\kappa=10^5$. 
The second term with migration rate $m$ models the pressure from migration into the population.
The third term describes negative selection pressure due to the vaccine: $v_i=v$ if the isolate $i$ has vaccine serotype and zero otherwise. 
Finally the last term describes the positive selection pressure associated with rare alleles under NFDS: $\pi_{i,t}$ measures the deviation between isolate $i$ and the equilibrium genotype at time $t$, and the selection pressure is  modeled with parameter $\sigma_f$.
Parameter estimation is carried out in the log-compressed parameter domain with $\param=(\ln(m), \ln(v), \ln(\sigma_f))$.

% observations

We use the model to simulate 1000 observation sets that are modeled on pneumococcal data studied by \citet{corander2017frequency}.
The data set studied in previous work includes a pre-vaccination ($t=0$) sample with 133 isolates and two post-vaccination samples with 203 isolates collected at $t=36$ and 280 isolates collected at $t=72$ \citep{mass}.
We use this data to create the simulated observation sets as follows.
We sample the pre-vaccination data to initialize the simulated population at $t=0$, simulate how the population evolves under selected model parameters, and then sample the simulated population at $t=36$ and $t=72$ to create simulated post-vaccination samples with $\nobs=250$ or $\nobs=1000$ isolates.
The parameter values used in the simulations were $\ln(m)=-5.3$, $\ln(v)=-2.5$, and $\ln(\sigma_f)=-5.3$.

% observation categories

The isolates in the simulated observation sets are arranged into 41 sequence clusters based on genetic content and typed as vaccine type (VT) or non-vaccine type (NVT) \citep{corander2017frequency}.
This creates $\ncls=82$ potential observation categories.
However some sequence clusters are exclusive to vaccine or non-vaccine types, which means that some categories are never observed.
To remove these categories, and to ensure that all categories have adequate observation counts, we collapsed the data into $\ncls=4$ observation categories as follows:
$\icls=1$ includes all VT isolates while $\icls=2$ includes NVT isolates in sequence clusters that do not include VT isolates at $t=0$, $\icls=3$ includes NVT isolates in sequence clusters that include VT isolates at $t=0$, and $\icls=4$ includes NVT isolates in sequence clusters that are not present in the observed data at $t=0$.
The negative selection pressure due to vaccine should then be observed as a decrease in category 1, while migration and NFDS control the balance between categories 2--4.
This is observed in the average category proportions calculated based on the simulated observations sets used in this experiment (Figure~\ref{fig:observed_nfds}).
\begin{figure}
    \centering
    \includegraphics[width=0.7\textwidth]{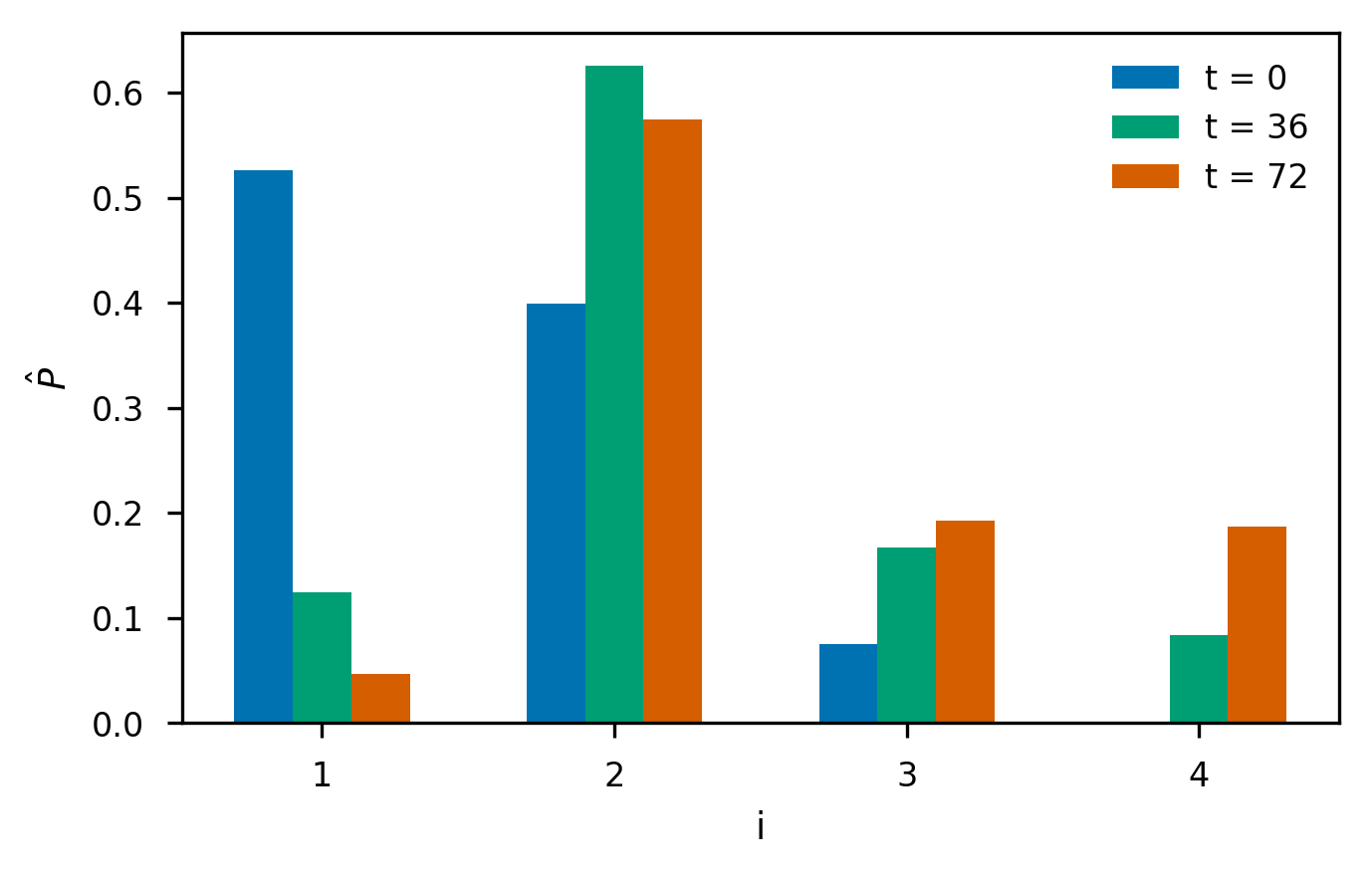}
    \caption{Average category proportions calculated based on 1000 observation sets simulated with the NFDS model.}
    \label{fig:observed_nfds}
\end{figure}

% comparison between (simulated) observed data and simulated data

To summarize, we simulated 1000 observation sets by sampling simulated populations at $t=36$ and $t=72$ and divided the isolates in each sample into $\ncls=4$ observation categories.
These are compared to simulated data based on a discrepancy measure calculated as the sum over JSD between the data collected at $t=36$ and JSD between the data collected at $t=72$.
We calculate the expected discrepancy based on repeated simulations or a BOLFI model, and use it to calculate test statistic values that correspond to the sum over proposed test statistic values calculated based on expected JSD at $t=36$ and $t=72$.

% 1) ave

We start with the proposed test statistic values calculated based on $\nrep=1000$ simulations carried out with the true parameter values and $\nsms=\nobs$.
%
%The observed and simulated data are compared based on the post-vaccination samples collected at $t=36$ and $t=72$, and the test statistic used in this experiment is calculated as a sum over test statistic values calculated at $t=36$ and $t=72$.
%
%This means that the expected distribution is a sum over two $\chi^2(k-1)$ distributions.
%
The expected and observed distributions are compared in Figure \ref{fig:chi_squared_visual_3}~(a) and coverage probabilities reported in Table \ref{tab:chi_squared_cov_probs_3}~(a).
\begin{figure}
    \centering
    (a) \vcenteredhbox{
    \includegraphics[width=0.35\textwidth]{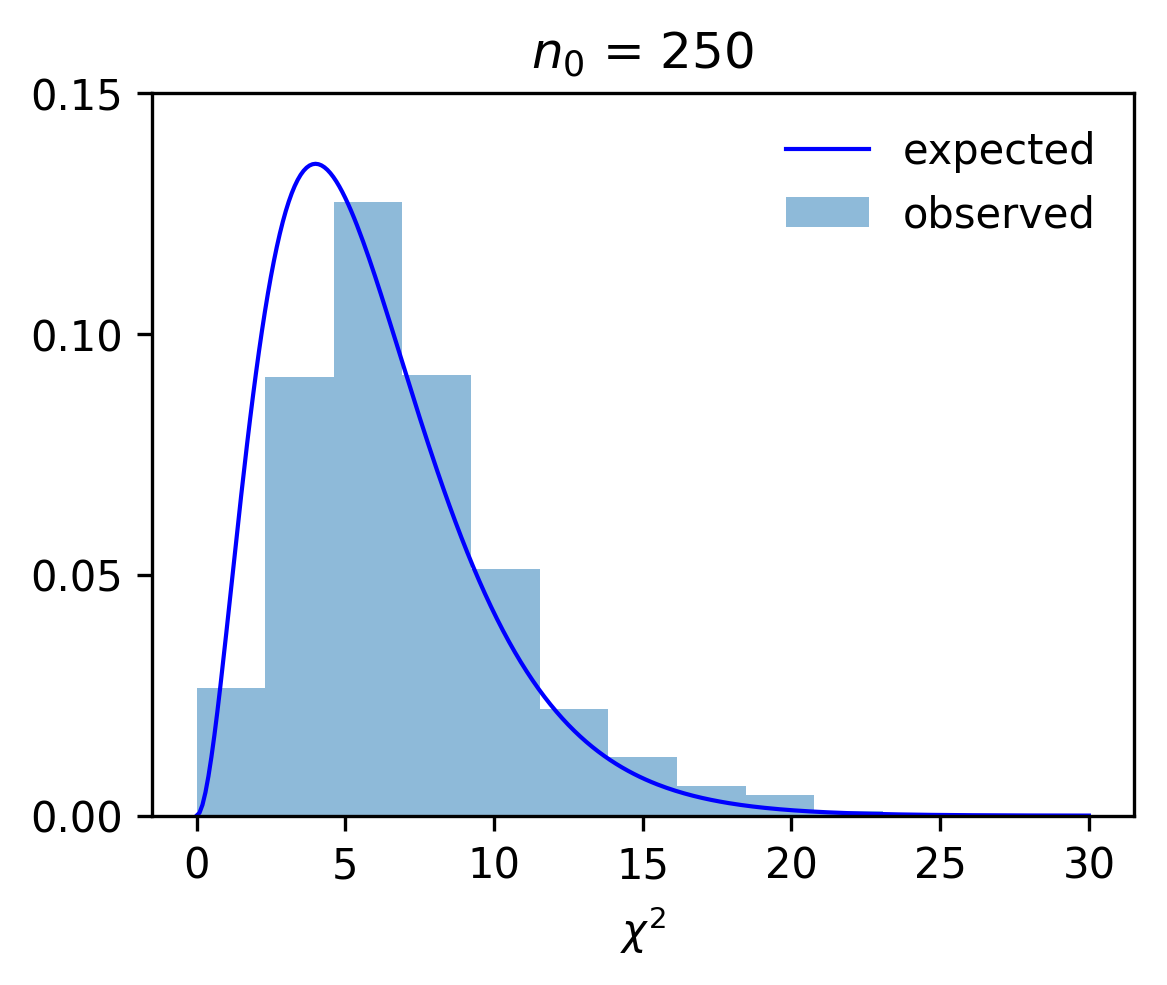}
    \includegraphics[width=0.35\textwidth]{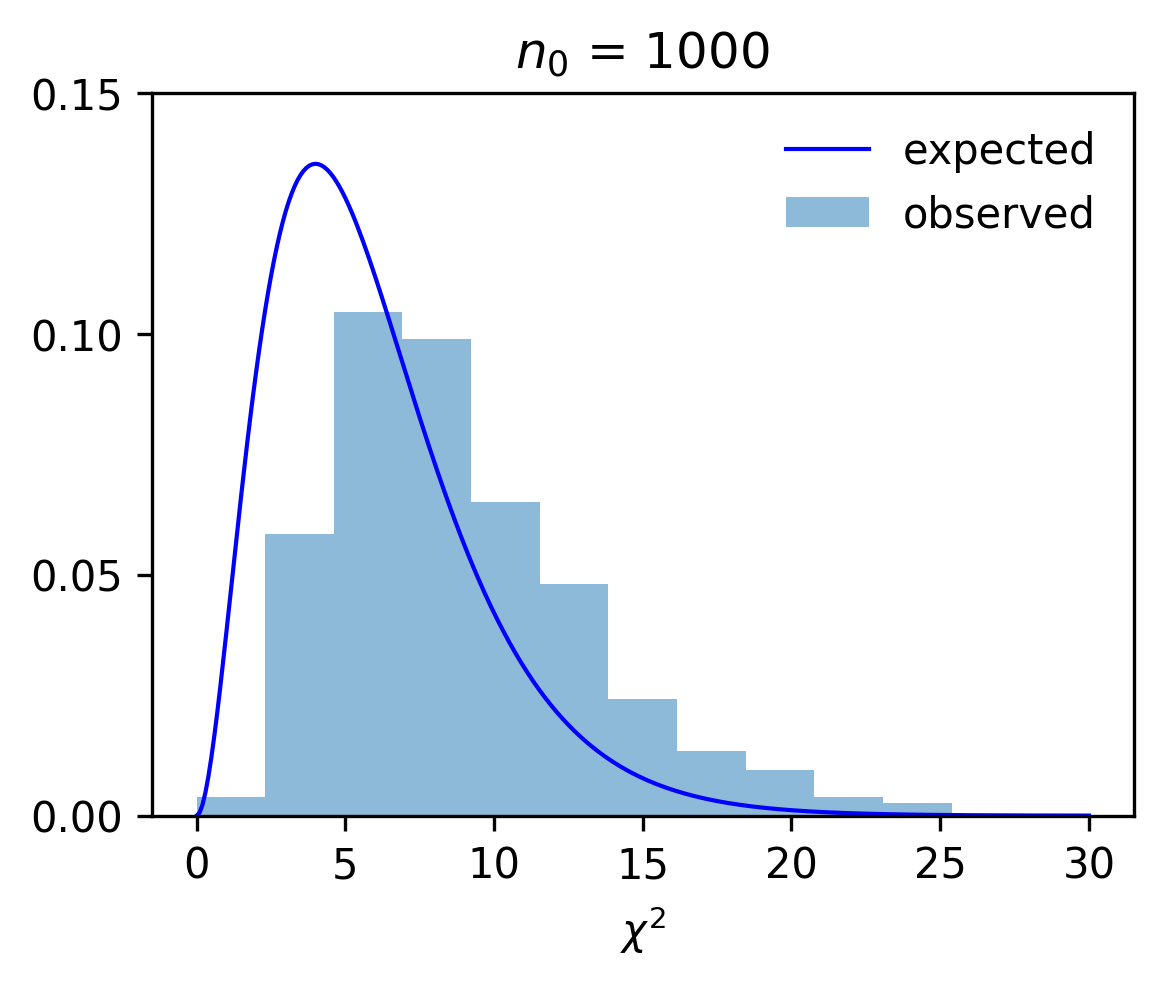}}
    
    (b) \vcenteredhbox{\includegraphics[width=0.35\textwidth]{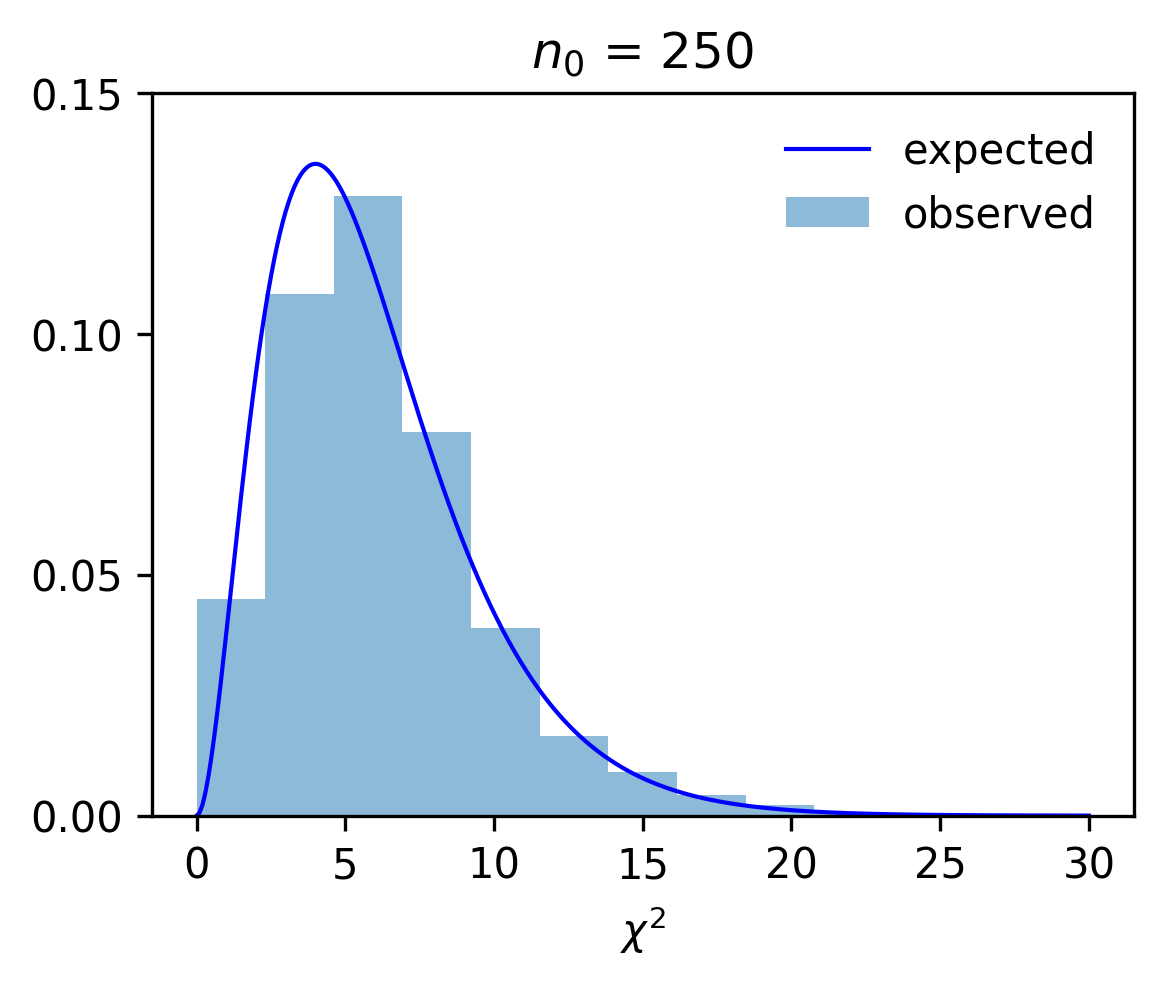}
    \includegraphics[width=0.35\textwidth]{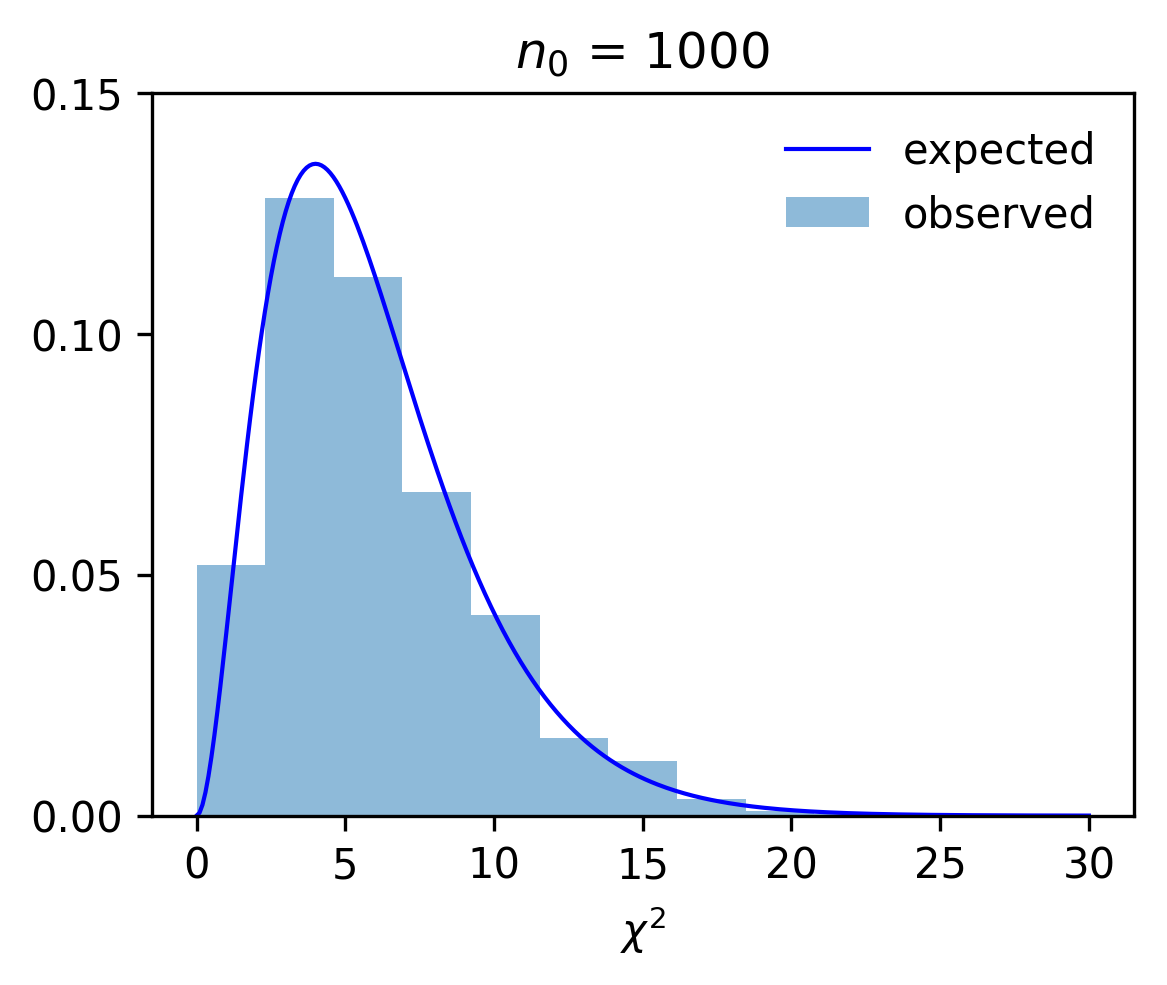}}
    \caption{The expected asymptotic distribution and test statistics calculated based on observations simulated with the NFDS model. The proposed test statistic values were calculated based on $\nrep=1000$ simulations with $\nsms=\nobs$ and based on sample size (a)~$\nobs$ or (b)~ESS.}
    \label{fig:chi_squared_visual_3}
\end{figure}
\begin{table}
    \centering
    \begin{tabular}{cccc}
    (a) &
    \begin{tabular}{ccc}
         $1-\alpha$ &  $\nobs=250$ & $\nobs=1000$ \\\hline
         0.99 & 0.98 & 0.94\\
         0.95 & 0.93 & 0.82\\
         0.90 & 0.86 & 0.71\\
         0.50 & 0.36 & 0.22\\
    \end{tabular} &
    (b) &
    \begin{tabular}{ccc}
         $1-\alpha$ &  $\nobs=250$ & $\nobs=1000$ \\\hline
         0.99 & 0.99 & 0.99\\
         0.95 & 0.94 & 0.94\\
         0.90 & 0.90 & 0.90\\
         0.50 & 0.46 & 0.51\\
    \end{tabular}
    \end{tabular}
    \caption{Coverage probabilities calculated based on observations simulated with the NFDS model. The proposed test statistic values were calculated based on $\nrep=1000$ simulations with $\nsms=\nobs$ and based on sample size (a)~$\nobs$ or (b)~ESS.}
    \label{tab:chi_squared_cov_probs_3}
\end{table}
We observe that the test statistic values are overestimated compared to the expected null distribution, and the coverage probabilities are lower than the nominal value.
This is because the observations in this example exhibit more variation than a multinomial sample with the same size.
The additional variation also does not depend on the sample size, which means that the error between observed and expected test statistic distribution increases when the sample size increases and less variation is expected.
However we can use the simulated data to estimate ESS that compensates for the overdispersion.
When the proposed test statistic is calculated based on ESS, the observed distribution follows the expected distribution well (Figure \ref{fig:chi_squared_visual_3}~(b)) and the coverage probabilities are close to the nominal value $1-\alpha$ in all test conditions (Table \ref{tab:chi_squared_cov_probs_3}~(b)).
The average ESS was 236 when $\nobs=250$ and 806 when $\nobs=1000$ (Figure~\ref{fig:ess_ave}).
\begin{figure}
    \centering
    \includegraphics[width=0.35\textwidth]{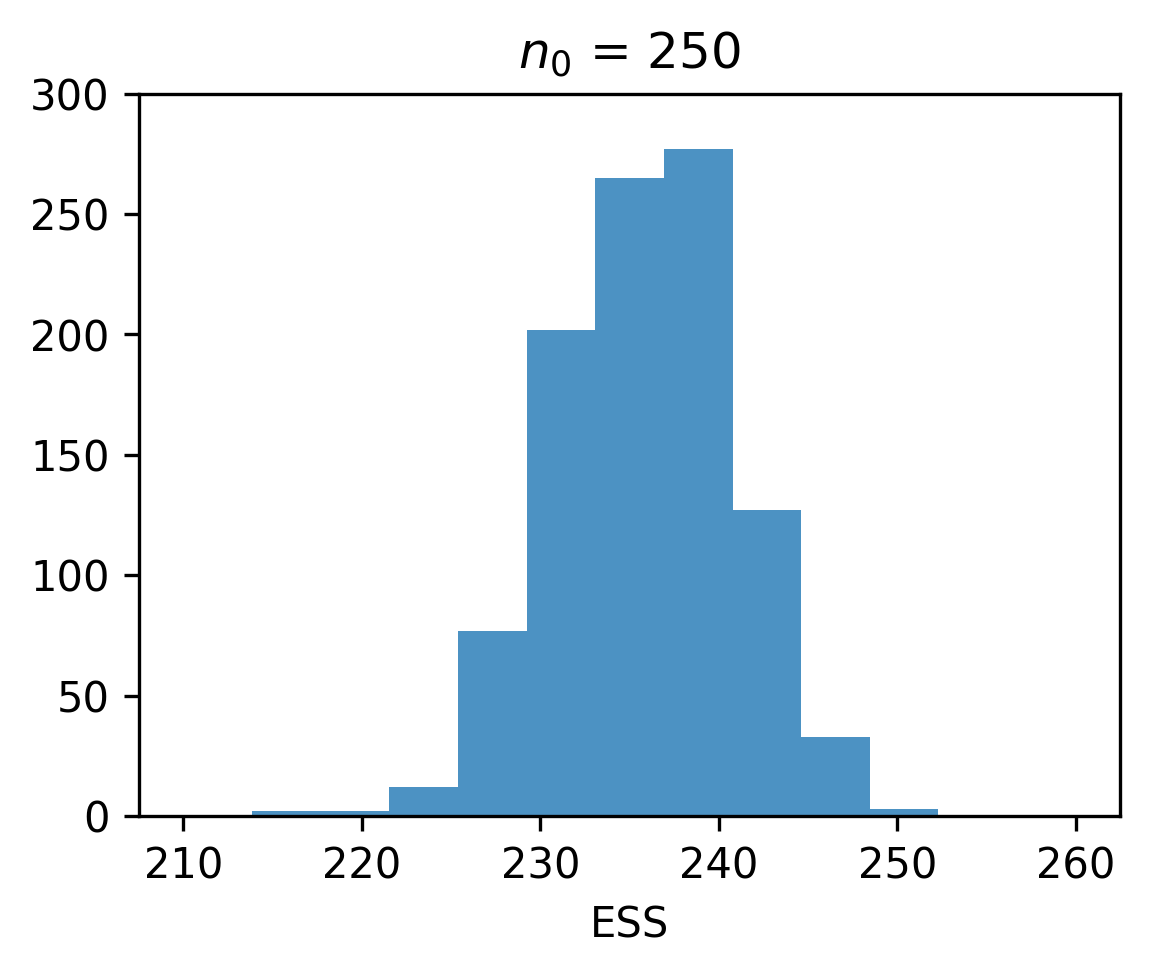}
    \includegraphics[width=0.35\textwidth]{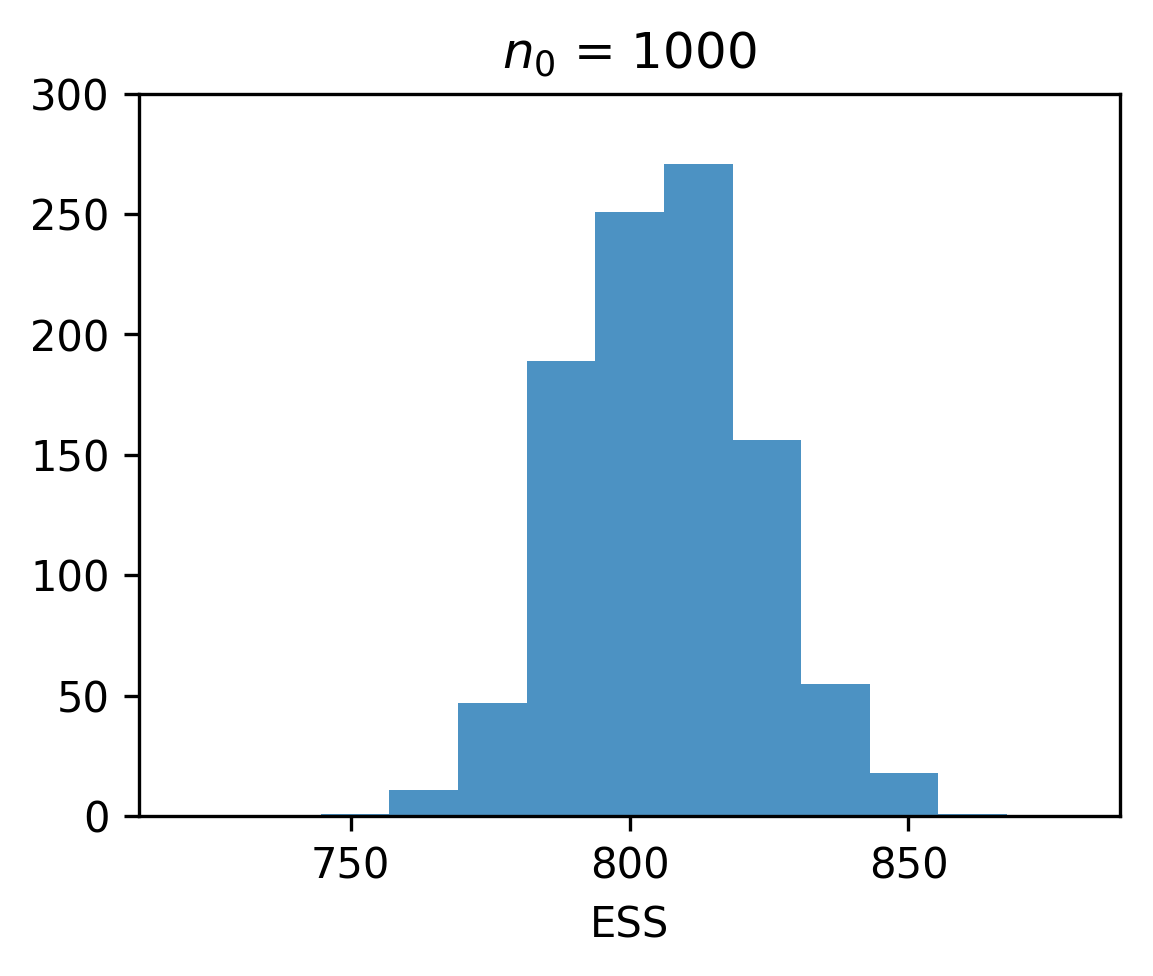}
    \caption{Effective sample size calculated based on the $\nrep=1000$ simulations carried out with the true simulator parameters.}
    \label{fig:ess_ave}
\end{figure}

% 2) BO

We also evaluate the proposed test statistic calculated based on the surrogate model in BO.
The parameter ranges considered in optimization were $\ln(m)\in [-7, -1.6]$, $\ln(v) \in [-7, -0.7]$, and $\ln(\sigma_f)\in [-7, -1.6]$.
We initialized optimization with 50 simulations and set the total simulation count to 2000.
The additional variation observed in this example is expected to have some dependence on the model parameters, but since the parameters that could be included in a confidence set are close to the parameters $\hat\param$ that minimize the expected simulator-based JSD, we approximate ESS at all parameter values with an estimate calculated based on $\nrep=1000$ simulations carried out with $\hat\param$.
The average ESS was 235 when $\nobs=250$ and 800 when $\nobs=1000$ (Figure~\ref{fig:ess_BOLFI}).
\begin{figure}
    \centering
    \includegraphics[width=0.35\textwidth]{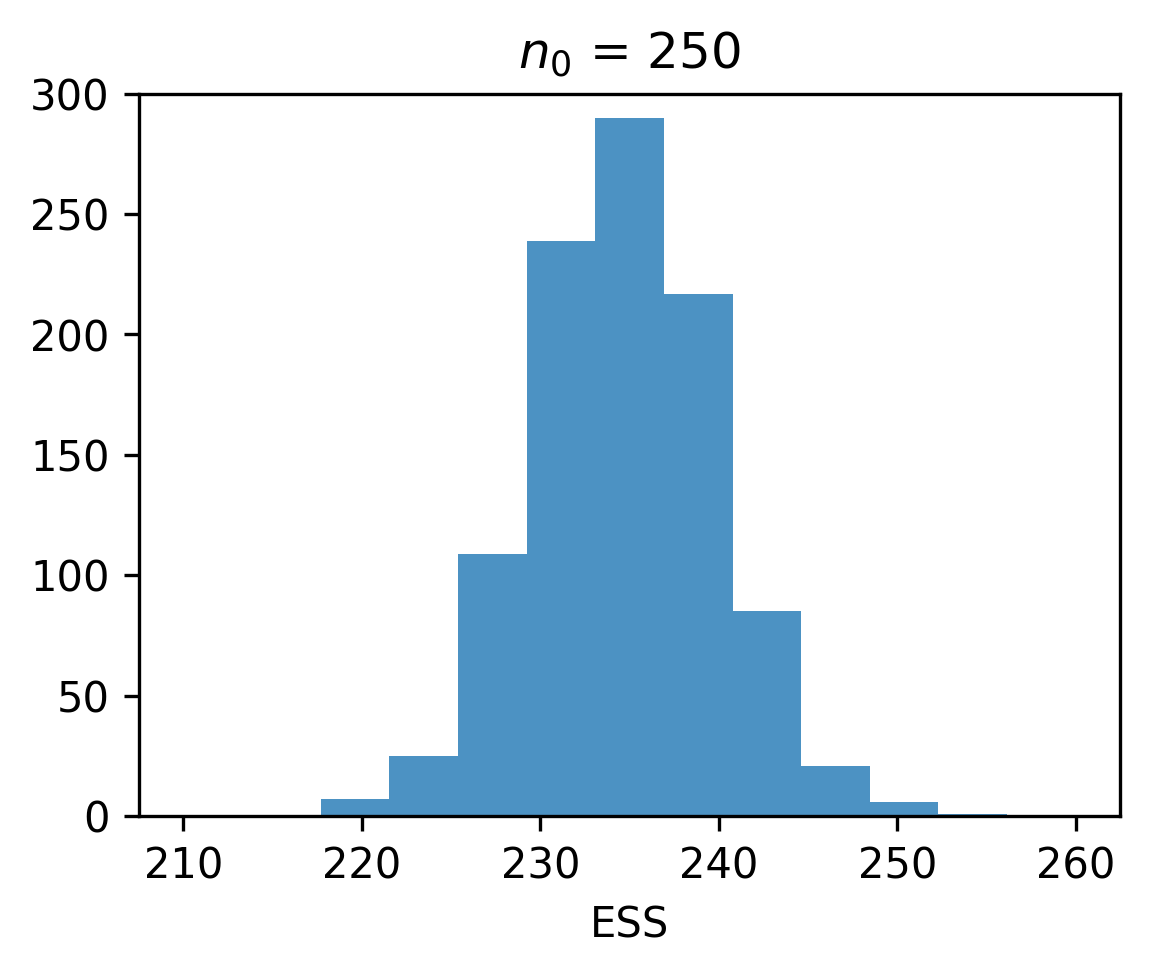}
    \includegraphics[width=0.35\textwidth]{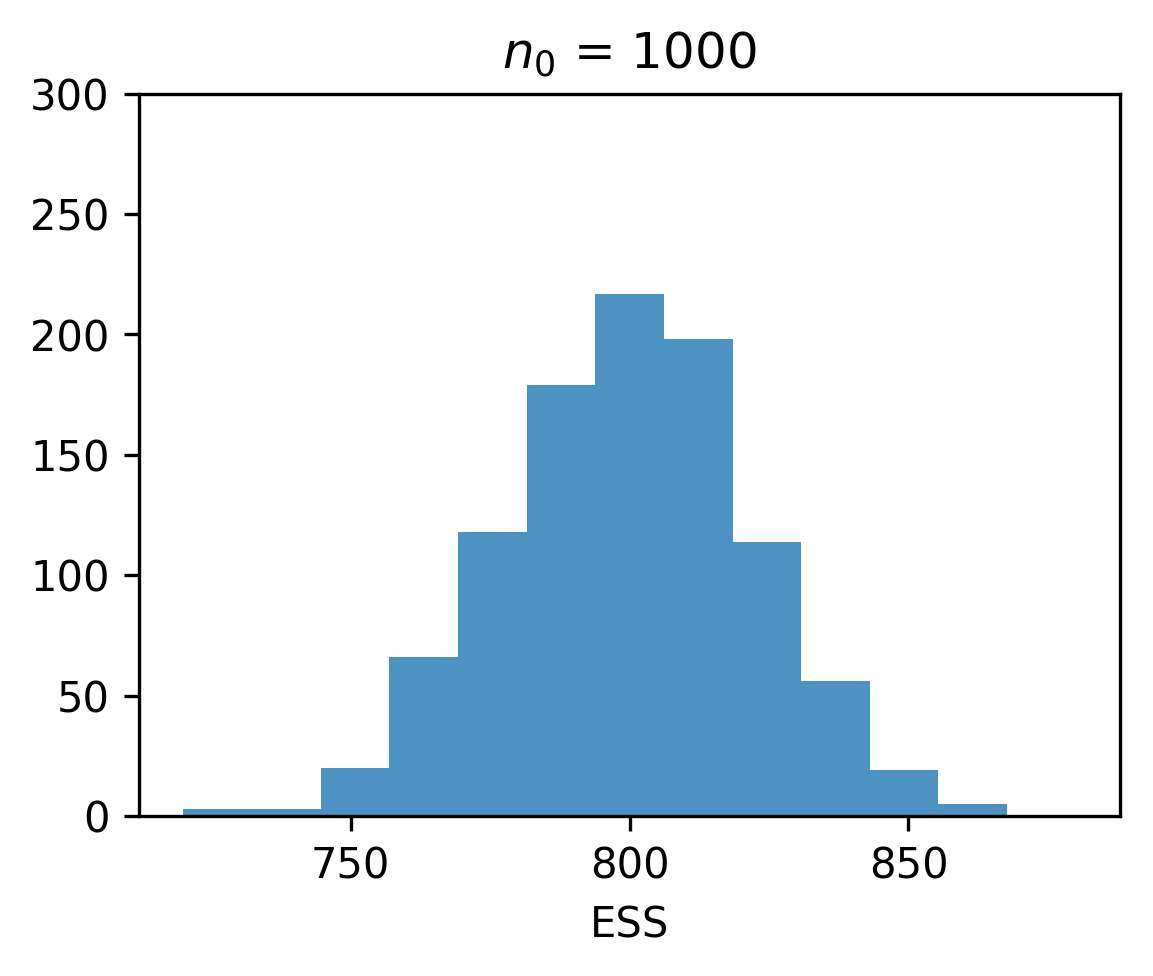}
    \caption{Effective sample size calculated based on $\nrep=1000$ simulations carried out with the parameters that minimize the expected JSD between observed and simulated data.}
    \label{fig:ess_BOLFI}
\end{figure}
Coverage probabilities calculated based on the proposed test statistic are reported in Table~\ref{tab:chi_squared_cov_probs_3_BOLFI}.
\begin{table}
    \centering
    \begin{tabular}{cccc}
    (a) &
    \begin{tabular}{ccc}
         $1-\alpha$ &  $\nobs=250$ & $\nobs=1000$ \\\hline
         0.99 & 0.96 & 0.73\\
         0.95 & 0.86 & 0.58\\
         0.90 & 0.75 & 0.48\\
         0.50 & 0.28 & 0.16\\
    \end{tabular} &
    (b) &
    \begin{tabular}{ccc}
         $1-\alpha$ &  $\nobs=250$ & $\nobs=1000$ \\\hline
         0.99 & 0.98 & 0.86\\
         0.95 & 0.89 & 0.75\\
         0.90 & 0.82 & 0.67\\
         0.50 & 0.35 & 0.32\\
    \end{tabular}
    \end{tabular}
    \caption{Coverage probabilities calculated based on observations simulated with the NFDS model. The proposed test statistic values were calculated based on the BOLFI model using simulations with $\nsms=\nobs$ and based on sample size (a)~$\nobs$ or (b)~ESS.}
    \label{tab:chi_squared_cov_probs_3_BOLFI}
\end{table}
We observe that the coverage probabilities are lower than the corresponding coverage probabilities reported in Table~\ref{tab:chi_squared_cov_probs_3}.
This is due to overestimation errors introduced in BOLFI (Figure~\ref{fig:BOLFI_error_3}).
\begin{figure}
    \centering
    \includegraphics[width=0.45\textwidth]{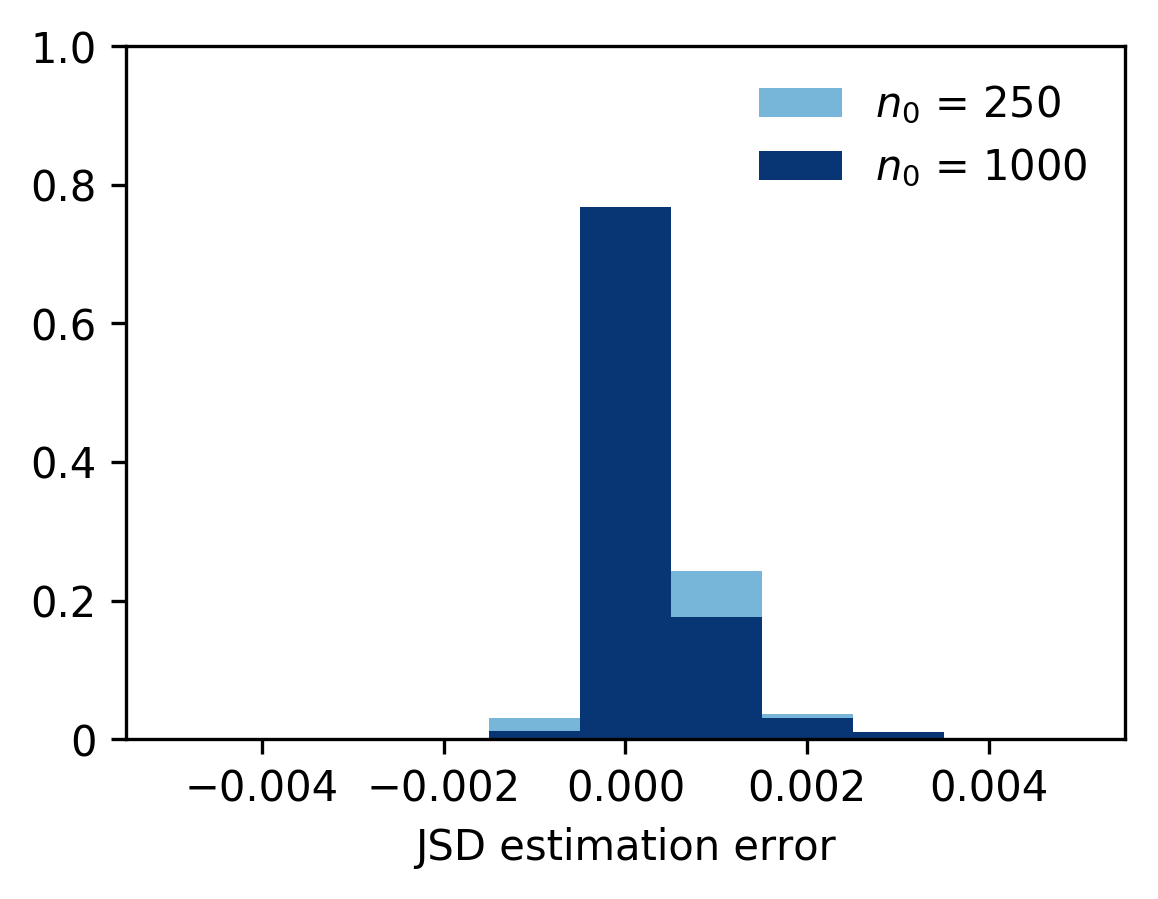}
    \caption{Estimation errors calculated as difference between the average over simulator-based JSD evaluated at $t=36$ and $t=72$ calculated based on the BOLFI model or $\nrep=1000$ simulations.}
    \label{fig:BOLFI_error_3}
\end{figure}

\section{Discussion and conclusions}

% overview and theoretical contributions TODO

Despite some early interest in the frequentist likelihood-free inference, more recently the primary focus of the research has been on approximating the posterior distribution of parameters with various sampling and surrogate model based approaches. As these methods can be very computation intensive and/or do require considerable expertise about training surrogate models, there is a clear need for further research on alternative approaches that would exploit asymptotic behavior of the approximate implicit likelihood. In the present work we considered this objective for the situation where the Jensen--Shannon divergence between observed and simulated data is used to measure the model fit and find model parameters that best explain the observed data.
We derived explicit expressions for the expectation and variance of the simulator-based JSD statistic (Section \ref{bernvantevarde}) and proposed a test statistic that can be used in hypothesis testing or confidence set estimation (Section \ref{secttjitvaa}--\ref{testinv}).

% about simulation results

We carried out simulation experiments to evaluate the proposed test statistic and to examine whether test statistic values can be calculated based on the surrogate model in BOLFI.
The coverage probabilities reported in Sections \ref{sec:experiment1} and \ref{sec:experiment2} indicate that the test statistic values follow the expected asymptotic distribution when the sample size is large and the observed and simulated data follow a multinomial distribution.
The experiments thus confirmed that the proposed approach can be used to evaluate parameter fit and calculate confidence sets in such conditions.
Meanwhile experiments carried out with the NFDS model (Section~\ref{sec:experiment3}) showed that the test statistic can be sensitive to the multinomial distribution assumption.
While this is a concern when we want to use the test statistic with simulator-based models, the experiments also showed that substituting sample size with ESS can compensate for deviations caused by extra heterogeneity in data.
In addition the example demonstrated collapsing observation categories to ensure adequate observation count in all categories.

% BOLFI experiments

%The test statistic studied in this work measures the fit between observed data and a hypothesized model represented as parameters $\param$.
%
%This is in contrast to test statistics that measure relative fit compared to parameters $\hat\param$ that best explain the observed data.
%
%In practice we do not need to solve $\hat\param$ and can calculate the proposed test statistic based on simulations carried out with the selected $\param$.
%
%However we also carried out experiments using BOLFI to find $\hat\param$.
%
%The motivation was that the surrogate model fitted in the optimization process could be used to calculate the proposed test statistic values for large candidate sets, for example when we want to estimate a confidence set over parameter values.

The BOLFI experiments indicated that the proposed test statistic is sensitive to errors in the estimated JSD value.
Experiments carried out with the multinomial example models suggested that the estimation error is related to variation between simulated data sets.
Since the variation in this case depended on the simulated sample size, using simulated samples that are larger than the observed sample allowed accurate test statistic calculation in these examples.
However increasing the simulated sample size is not expected work when the variation has other sources.
Hence rather than assume that the variation can be removed, we need to improve how variation between simulator-based JSD values is modelled in BOLFI.

While the current work is solely focused on the properties of JSD-based inference for models where the output can be summarized in terms of categorical distributions, there are multiple possible venues of future research to expand applicability of the theory. For example, quantization of continuous distributions can be used to enable application of JSD to more general model classes. An interesting question is then how the loss of information will depend on the chosen quantization and how accurate inference statements could be made about the model parameters as the sample size grows. In summary, we anticipate that a rich theory and applications can be found for more analytically oriented likelihood-free inference, where approximate likelihood quantities can be combined with efficient optimization algorithms.

% conclusions and future work

\section*{Acknowledgments}

The authors wish to acknowledge CSC – IT Center for Science, Finland, for computational resources.
 J.C. and U.R. are supported by  ERC grant 742158 and T.K.  is supported by  FCAI (=Finnish Center for Artificial Intelligence).

%\appendix

%\section*{Appendix }
%\label{app:theorem}

% Note: in this sample, the section number is hard-coded in. Following
% proper LaTeX conventions, it should properly be coded as a reference:

%In this appendix we prove the following theorem from
%Section~\ref{sec:textree-generalization}:

%In this appendix we prove the following theorem from

%\noindent
%{\bf Theorem} {\it  haa } \hfill\BlackBox

\appendix

\numberwithin{equation}{section}
\numberwithin{theorem}{section}
\numberwithin{example}{section}

\section{Asymptotics}\label{nasymptotiksec}

 \subsection{An Auxiliary Bound for   $D_{\rm JS}( P, Q)$}
The following bound is included  to make  the argument below more self-contained.  
\begin{lemma}\label{lemmakontbd2}
$P \in \mathbb{P}$, $Q \in \mathbb{P}$. Assume  $V(P,Q) < 1/2$.  Then 
\begin{equation}\label{cskorner2}
D_{\rm JS}(P,Q) \leq  -\pi  (1-\pi) \ln \left(  \pi (1-\pi) \right)    V(P,Q) -  2\pi  (1-\pi)  V(P,Q) \ln \left( \frac{ V(P,Q)}{k}\right). 
\end{equation}
\end{lemma}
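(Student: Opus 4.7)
The plan is to reduce the JSD bound to two applications of the entropy-continuity bound in Lemma~\ref{lemmakontbd}, one for each factor of the mixture. Concretely, by the Shannon-entropy representation of JSD in Equation~(\ref{klinformationiudnet}), I would write
$$D_{\rm JS}(P,Q) = \pi\bigl(H(M)-H(P)\bigr) + (1-\pi)\bigl(H(M)-H(Q)\bigr), \qquad M = \pi P + (1-\pi)Q,$$
thereby expressing JSD as a convex combination of two entropy differences against the mixture.

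Next, a direct computation from the definition of total variation gives $V(M,P) = (1-\pi)V(P,Q)$ and $V(M,Q) = \pi V(P,Q)$. Since $\pi, 1-\pi \in (0,1)$ and $V(P,Q) < 1/2$ by hypothesis, both mixture distances lie strictly below $1/2$, so Lemma~\ref{lemmakontbd} applies and yields
$$|H(M)-H(P)| \leq -(1-\pi)V(P,Q)\ln\!\left(\tfrac{(1-\pi)V(P,Q)}{k}\right),$$
together with the analogous estimate for $|H(M)-H(Q)|$ obtained by swapping the roles of $\pi$ and $1-\pi$. Both right-hand sides are nonnegative, since the arguments of the logarithms satisfy $(1-\pi)V(P,Q)/k < 1/2 < 1$ and $\pi V(P,Q)/k < 1/2 < 1$ for $k \geq 2$.

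Substituting these two bounds into the decomposition of $D_{\rm JS}(P,Q)$, passing from the signed entropy differences to their absolute values, and pulling out the common factor $\pi(1-\pi)V(P,Q)$ gives
$$D_{\rm JS}(P,Q) \leq -\pi(1-\pi)V(P,Q)\,\left[\ln\!\tfrac{(1-\pi)V(P,Q)}{k} + \ln\!\tfrac{\pi V(P,Q)}{k}\right] = -\pi(1-\pi)V(P,Q)\,\ln\!\tfrac{\pi(1-\pi)V(P,Q)^2}{k^2}.$$
Finally, splitting this single logarithm as $\ln(\pi(1-\pi)) + 2\ln(V(P,Q)/k)$ reproduces Equation~(\ref{cskorner2}) exactly.

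There is no substantive obstacle: the argument is mechanical once one recognizes that Lemma~\ref{lemmakontbd} must be invoked for the pairs $(M,P)$ and $(M,Q)$ rather than for $(P,Q)$ directly. The only care required is checking that the hypothesis $V(\cdot,\cdot) < 1/2$ of the lemma is inherited by these mixture pairs, which is immediate because both $\pi V(P,Q)$ and $(1-\pi)V(P,Q)$ are bounded above by $V(P,Q) < 1/2$.
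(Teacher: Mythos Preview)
Your proposal is correct and follows essentially the same route as the paper's own proof: both use the entropy representation in Equation~(\ref{klinformationiudnet}), compute $V(M,P)=(1-\pi)V(P,Q)$ and $V(M,Q)=\pi V(P,Q)$, apply Lemma~\ref{lemmakontbd} to each pair, and then combine and split the resulting logarithm. There is no meaningful difference in strategy or execution.
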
\begin{proof}: 
We have from Equation~(\ref{klinformationiudnet}) 
$$
 D_{\rm JS}( P, Q) =\pi \left(H(M) -  H(P) \right)  +(1-\pi)\left(H(M)- H(Q)\right)
$$
 and  then, since  $  D_{\rm JS}( P, Q) = \mid D_{\rm JS}( P, Q) \mid$,  
$$
 D_{\rm JS}( P, Q) \leq   \pi \left| H(M) -  H(P) \right|   +(1-\pi) \left|H(M)- H(Q)\right|.
$$
By definition 
$$
 V(M,P) =  \sum_{i=1}^{k} | (\pi p_{i} + (1-\pi)q_{i} ) - p_{i}|$$
$$
 =  \sum_{i=1}^{k} |  (1-\pi)q_{i}  -  (1-\pi) p_{i}| = (1-\pi) V(Q,P)=(1-\pi) V(P,Q),
$$
and  similarly $ V(M,Q) = \pi V(P,Q)$.  For $\pi \in [0,1]$ 
$(1-\pi) V(P,Q) \leq  V(P,Q)$  and $(1-\pi) V(P,Q) \leq  V(P,Q)$,    we get by assumption that  $V(M,P) < \frac{1}{2}$ and $V(M,Q) < \frac{1}{2}$.    Hence by Lemma \ref{lemmakontbd} 
$$
 \pi \left| H(M) -  H(P) \right|  \leq  - \pi (1-\pi)   V(P,Q) \cdot  \ln\left( \frac{(1-\pi) V(P,Q)}{k}\right)
$$
and 
$$
 (1-\pi) \left| H(M) -  H(Q) \right|  \leq  -\pi  (1-\pi)   V(P,Q) \cdot  \ln\left( \frac{ \pi V(P,Q)}{k}\right). 
$$ 
This gives 
$$
 \pi \left| H(M) -  H(P) \right|   +(1-\pi) \left|H(M)- H(Q)\right| \leq  
-\pi  (1-\pi)   V(P,Q) \ln \left(  \frac{\pi (1-\pi) V(P,Q)^{2}}{k^2}\right).  
$$
Hence 
the bound in  Equation~(\ref{cskorner}) follows, as claimed.
\end{proof}
For  $\pi= \frac{1}{2}$  the upper  bound   in  Equation~(\ref{cskorner})  assumes its smallest value and  the bound becomes 
\begin{equation}\label{cskorner3}
D_{\rm JS}(P,Q) \leq   \frac{1 }{2} V(P,Q) \left[\ln 2 -  \ln\left( \frac{V(P,Q)}{k}\right)\right].
\end{equation}
 As  $k  \geq 2$ and  $V(P,Q) \leq 2$, we have   $ 1 \geq V(P,Q)/k $.  Hence the bound in  
 Equation~(\ref{cskorner2}) is less sharp than   the bound   $D_{\rm JS}(P,Q) \leq   \frac{\ln 2 }{2} V(P,Q) $ in  \cite[Equation~(20), p.885]{corremkos} valid for any $V$  and $\pi \in (0,1)$, but suffices  for  the present purposes.

 \subsection{ $V\left(\widehat{Q}_{\theta}, P_{\theta}\right) \rightarrow 0$}\label{nasymptotik}

\begin{lemma}\label{totvarlemma}
 $\widehat{Q_{\theta}} \in \mathbb{M}_{n}(\theta)$. 
 \begin{equation}\label{totvarber1}
V\left(\widehat{Q}_{\theta}, P_{\theta}\right)   \rightarrow 0, \mbox{as $n \rightarrow +\infty$}   
 \end{equation}
 $P_{\theta}$-a.s..
\end{lemma}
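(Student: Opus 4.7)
The plan is to reduce this to the strong law of large numbers applied coordinate by coordinate, exploiting the fact that $k$ is finite so that a finite union bound preserves the a.s. convergence.

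By definition, $V(\widehat{Q}_\theta, P_\theta) = \sum_{i=1}^k |\widehat{q}_i - p_i(\theta)|$ where, by Equation~(\ref{Qhat2}), $\widehat{q}_i = \xi_i/n$ with $\underline{\xi}$ multinomial with parameters $n$ and $(p_1(\theta),\ldots,p_k(\theta))$. In particular, each marginal $\xi_i \sim \mathrm{Bin}(n, p_i(\theta))$, and if we write the underlying i.i.d.\ sample as $X_1,\ldots,X_n \sim P_\theta$, then $\xi_i = \sum_{j=1}^n [X_j = a_i]$ is a sum of i.i.d.\ Bernoulli random variables with mean $p_i(\theta)$.

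First I would apply the SLLN to each coordinate separately: for each $i \in \{1,\ldots,k\}$, there is an event $A_i$ with $P_\theta(A_i) = 1$ on which $\xi_i/n \to p_i(\theta)$. Since $k$ is finite, $A := \cap_{i=1}^k A_i$ also satisfies $P_\theta(A)=1$, and on $A$ every coordinate $|\widehat{q}_i - p_i(\theta)|$ tends to $0$. Summing the $k$ terms then yields $V(\widehat{Q}_\theta, P_\theta)\to 0$ $P_\theta$-a.s.

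If a quantitative version is wanted (as seems to be the case given its reuse in the proof of Theorem~\ref{unifns}), I would replace the SLLN step by Hoeffding's inequality applied to each coordinate: $P_\theta(|\xi_i/n - p_i(\theta)| \geq \epsilon/k) \leq 2 e^{-2n\epsilon^2/k^2}$, and a union bound over the $k$ coordinates gives the tail estimate
\[
P_\theta\bigl(V(\widehat{Q}_\theta, P_\theta) \geq \epsilon\bigr) \leq 2k\, e^{-2n\epsilon^2/k^2}.
\]
Borel--Cantelli applied along, e.g., $\epsilon_n = n^{-1/3}$ then recovers the almost sure convergence. There is no real obstacle here; the only thing to be careful about is that the coordinates of $\underline{\xi}$ are dependent, which is why I handle each marginal separately and combine with a union bound rather than trying to control the $\ell^1$ deviation directly.
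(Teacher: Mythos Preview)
Your proposal is correct and, in its quantitative form, follows essentially the same route as the paper: coordinate-wise Hoeffding bounds on the binomial marginals, combined via a union bound over the $k$ coordinates, followed by Borel--Cantelli. Your tail exponent $k^{2}$ is in fact the sharper (and correct) application of Hoeffding; the paper records the slightly weaker $k^{3}$, which is the form later quoted in Theorem~\ref{unifns}, but either suffices for summability.
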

\begin{proof}  By definition,  $V\left(\widehat{Q}_{\theta}, P_{\theta}\right) = \sum_{i=1}^{k} \left|  \frac{\xi_{i}}{n} - p_{i}(\theta) \right|$.  
We bound  for any $\epsilon >0$
$$
P_{\theta} \left( \sum_{i=1}^{k} \left|  \frac{\xi_{i}}{n} - p_{i}(\theta) \right|\geq  \epsilon \right) =
P_{\theta} \left( \sum_{i=1}^{k} \left|  \xi_{i} -n  p_{i}(\theta) \right|\geq  n \epsilon \right).
$$

If   the event $  n \epsilon\leq \sum_{i=1}^{k} \left| \xi_{i} -n p_{i}(\theta) \right|$ occurs, there is  
at least one $|\xi_{i} -n p_{i}(\theta)|  \geq  \frac{n\epsilon}{k}$. Thus  we bound upwards by the union bound
$$
\leq P_{\theta} \left( \cup_{i=1}^{k}\left\{ \left| \xi_{i}- n p_{i}(\theta) \right|\geq \frac{n\epsilon}{k}\right\} \right) 
\leq   \sum_{i=1}^{k} P_{\theta} \left( \left|  \xi_{i} - n p_{i}(\theta) \right|\geq  \frac{n\epsilon}{k}\right).
$$
Here $  \xi_{i} \sim {\rm Bin}(n, p_{i}(\theta))$, $  \xi_{i}$  is a sum of $n$ i.i.d. Bernoulli r.v.s  and the mean of the sum is  $ n p_{i}(\theta)$.  The Hoeffding inequality \citep[Thm 8.1]{devroye2013probabilistic} gives 
$$
 P_{\theta} \left( \left|   \xi_{i}   - np_{i}(\theta) \right|\geq  \frac{n\epsilon}{k} \right) \leq  2
e^{-2 n\epsilon^{2}/k^{3}}.
$$
Hence  
$$
 \sum_{i=1}^{k} P_{\theta} \left( \left|\xi_{i} - np_{i}(\theta) \right|\geq  \frac{n \epsilon}{k} \right) \leq   2k e^{-2 n\epsilon^{2}/k^{3}}
$$
and by the preceding chain of inequalities 
$$
\sum_{n=1}^{+\infty}P_{\theta} \left( \sum_{i=1}^{k} \left| \frac{\xi_{i}}{n} - p_{i}(\theta) \right|\geq   \epsilon\right)  < +\infty,
$$
and  Equation~(\ref{totvarber1}) follows by the  Borel-Cantelli lemma. \end{proof}
The proof  above  implements the  hint of proof for the case at hand   in  \citet{berend2012convergence}. 
The   proof  shows in fact complete convergence, see e.g., \citet[p.203]{gut2013probability}.
 In addition,   $ {\mathbf D}= (D_{1}, \ldots, D_{n_{o}}) $,  an i.i.d. $n_{o}$-sample  $\sim$ $P_{o}$ with $ \widehat{P}_{\mathbf D} $ that   
  \begin{equation}\label{totvarber2}
V\left(\widehat{P}_{\mathbf{D}}, P_{o}\right) \rightarrow 0, \mbox{as $n_{o} \rightarrow +\infty$}   
 \end{equation}
 $P_{o}$-a.s..
We get the following statement.  
\begin{theorem}
 $\widehat{Q_{\theta}} \in \mathbb{M}_{n}(\theta)$. 
 \begin{equation}\label{totvarber11}
D_{\rm JS}\left(\widehat{Q}_{\theta}, P_{\theta}\right)  \rightarrow 0, \mbox{as $n \rightarrow +\infty$}   
 \end{equation}
 $P_{\theta}$-a.s..
\end{theorem}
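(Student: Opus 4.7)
The plan is to combine the auxiliary bound in Lemma \ref{lemmakontbd2} with the almost-sure total variation convergence in Lemma \ref{totvarlemma}. The statement is essentially a corollary of these two results applied with $P = \widehat{Q}_\theta$ and $Q = P_\theta$: once the total variation distance is small enough, JSD is controlled by a simple continuous function of the variation distance that vanishes at $0$.

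First I would invoke Lemma \ref{totvarlemma}, which asserts that $V(\widehat{Q}_\theta, P_\theta)\to 0$ $P_\theta$-almost surely as $n\to+\infty$. Consequently, on a set of full $P_\theta$-probability, there exists a (sample-path dependent) integer $N$ such that $V(\widehat{Q}_\theta, P_\theta) < 1/2$ for every $n > N$. On this good event, the hypothesis of Lemma \ref{lemmakontbd2} is satisfied, so for $n > N$ we obtain
\begin{equation}
D_{\rm JS}(\widehat{Q}_\theta, P_\theta) \leq -\pi(1-\pi)\ln\bigl(\pi(1-\pi)\bigr)\, V(\widehat{Q}_\theta, P_\theta) - 2\pi(1-\pi)\, V(\widehat{Q}_\theta, P_\theta)\,\ln\!\left(\frac{V(\widehat{Q}_\theta, P_\theta)}{k}\right).
\end{equation}

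Next, I would let $n\to+\infty$ in this bound. The first term on the right is linear in $V$ and therefore goes to $0$. For the second term, I use the elementary fact that $v\ln v\to 0$ as $v\downarrow 0$, together with $\ln(V/k) = \ln V - \ln k$, so that $V\ln(V/k)\to 0$ as $V\downarrow 0$. Since $D_{\rm JS}(\widehat{Q}_\theta, P_\theta)\geq 0$ by the range property (Equation~\ref{range2}), the asserted almost-sure convergence follows by the squeeze argument.

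I do not expect any serious obstacle here: the only subtlety is that $N$ depends on the sample path, but since the bound in Lemma \ref{lemmakontbd2} is only invoked path-by-path after the random threshold $N$ is exceeded, this causes no difficulty. Everything reduces to continuity of the upper bound as $V\downarrow 0$ and the already-established Borel–Cantelli argument behind Lemma \ref{totvarlemma}.
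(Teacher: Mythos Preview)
Your proposal is correct and matches the paper's own proof essentially line for line: invoke Lemma~\ref{totvarlemma} to get $V(\widehat{Q}_\theta,P_\theta)\to 0$ almost surely, pass the random threshold where $V<1/2$, apply the bound of Lemma~\ref{lemmakontbd2}, and conclude via $V\ln V\to 0$. There is nothing to add.
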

\begin{proof}: By  Equation~(\ref{totvarber2})  there exists $n_{1/2}$ such that $V\left(\widehat{Q}_{\theta}, P_{\theta}\right)  < 1/2$ 
for all $n > n_{1/2}$ with  probability one. Hence Lemma  \ref{lemmakontbd2}, i.e., Equation~(\ref{cskorner2}), entails  that 
$D_{\rm JS}\left(\widehat{Q}_{\theta}, P_{\theta}\right)  \rightarrow 0$, as $n \rightarrow +\infty$, since  $ V\ln V \rightarrow 0$, when $V\rightarrow  0$. 
\end{proof} 

 \subsection{Further  Asymptotics of  KLD  and Variation Distance}

\begin{lemma}\label{dklkonv}
\begin{description}
\item[(i)] Assume ${\rm supp}\left(P_{\theta} \right)= {\cal A}$ and  $\widehat{Q} _{\theta} \in \mathbb{M}_{n}(\theta)$. Then, as  $n \rightarrow +\infty$   
\begin{equation} \label{dklkonv1}
D_{\rm KL} \left( \widehat{Q}_{\theta}, P_{\theta} \right) \rightarrow 0, 
\end{equation}
$P_{\theta}$- a.s.. 
\item[(ii)] Assume ${\rm supp}\left(P_{\theta_{o}} \right)= {\cal A}$.  $ {\mathbf D}= (D_{1}, \ldots, D_{n_{o}}) $,  is  an i.i.d. $n_{o}$-sample  $\sim$ $P_{\theta_{o}}$. Then, as  $n_{o} \rightarrow +\infty$   
\begin{equation} \label{dklkonv2}
D_{\rm KL} \left( \widehat{P}_{\mathbf{D}}, P_{\theta_{o} }\right) \rightarrow 0, 
\end{equation}
$P_{o}$-a.s.. 
\end{description}
\end{lemma}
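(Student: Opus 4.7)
The plan is to reduce both statements to the already-established total-variation convergence by invoking the reverse Pinsker inequality, which was quoted earlier in the paper (in the discussion of Equation~(\ref{sasonpinsker}) and used explicitly in Equation~(\ref{sasonverdu})): for any $P,Q \in \mathbb{P}$ with ${\rm supp}(Q) = {\cal A}$,
$$
D_{\rm KL}(P,Q) \;\leq\; \frac{1}{\min_{1 \leq j \leq k} q_j} \, V(P,Q)^{2}.
$$

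For part (i), I apply this with $P = \widehat{Q}_\theta$ and $Q = P_\theta$. The hypothesis ${\rm supp}(P_\theta) = {\cal A}$ guarantees $\min_j p_j(\theta) > 0$, so the constant on the right-hand side is finite. Hence
$$
0 \;\leq\; D_{\rm KL}\!\left(\widehat{Q}_\theta, P_\theta\right) \;\leq\; \frac{1}{\min_{j} p_j(\theta)} \, V\!\left(\widehat{Q}_\theta, P_\theta\right)^{2}.
$$
Lemma~\ref{totvarlemma} already shows $V(\widehat{Q}_\theta, P_\theta) \to 0$ $P_\theta$-a.s.\ as $n \to +\infty$, so the upper bound tends to zero and Equation~(\ref{dklkonv1}) follows.

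For part (ii), the argument is identical in structure: apply the reverse Pinsker inequality with $P = \widehat{P}_{\mathbf{D}}$ and $Q = P_{\theta_o}$, where the assumption ${\rm supp}(P_{\theta_o}) = {\cal A}$ again makes $\min_j p_j(\theta_o) > 0$. Then
$$
D_{\rm KL}\!\left(\widehat{P}_{\mathbf{D}}, P_{\theta_o}\right) \;\leq\; \frac{1}{\min_{j} p_j(\theta_o)} \, V\!\left(\widehat{P}_{\mathbf{D}}, P_{\theta_o}\right)^{2}.
$$
The almost-sure convergence $V(\widehat{P}_{\mathbf{D}}, P_{\theta_o}) \to 0$, $P_{\theta_o}$-a.s., is Equation~(\ref{totvarber2}); this is obtained by repeating verbatim the Hoeffding-plus-Borel--Cantelli argument of Lemma~\ref{totvarlemma}, since $\widehat{P}_{\mathbf{D}}$ is the empirical distribution of $n_o$ i.i.d.\ draws from $P_{\theta_o}$ and the proof never used the simulator structure, only the i.i.d.\ categorical sampling and the Hoeffding bound for binomial deviations. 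Hence the right-hand side vanishes $P_{\theta_o}$-a.s.\ as $n_o \to +\infty$, giving Equation~(\ref{dklkonv2}).

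There is no real obstacle here: both parts are a one-line squeeze once the reverse Pinsker bound and the total-variation limits are in place. The only point to watch is bookkeeping of the direction of KL — the reverse Pinsker inequality requires the \emph{second} argument to have full support, which is precisely the hypothesis in each case (${\rm supp}(P_\theta) = {\cal A}$ in (i), ${\rm supp}(P_{\theta_o}) = {\cal A}$ in (ii)), so the assumption in Equation~(\ref{posass}) is used in exactly the right place.
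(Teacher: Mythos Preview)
Your proof is correct and follows essentially the same approach as the paper: both use the reverse Pinsker inequality $D_{\rm KL}(P,Q) \leq \frac{1}{Q_{\min}} V(P,Q)^2$ together with the total-variation convergences of Equation~(\ref{totvarber1}) and Equation~(\ref{totvarber2}) to squeeze the KL divergences to zero. Your write-up is more detailed, but the underlying argument is identical.
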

\begin{proof}  The assertions $\bf(i)$ and $\bf(ii)$  follow respectively from Equations~(\ref{totvarber1})  and (\ref{totvarber2}) by the reverse Pinsker inequality for $P \in \mathbb{P}$ and  $Q \in \mathbb{P}$, and $Q_{min}= \min_{1\leq  j \leq k}q_{j}$,
\begin{equation}\label{sasonpinsker}
D_{\rm KL} (P,Q) \leq \frac{1}{Q_{min}}V( P, Q)^{2}. 
\end{equation}
In  $\bf(i)$   take   $P= \widehat{Q}_{\theta}$ and  $ Q=P_{\theta} $ and analogously for  $\bf(ii)$. Note that 
$V(P,Q)=V(Q,P)$. 
 \end{proof}
For the reverse Pinsker inequality, see \citet{sason2015reverse},  and  
\citet{sason2016f}.  The results in the corollary are known, see \citet[Thm 11.2.1]{cover2012elements}, but are here
justified by  
 the reverse Pinsker inequality. 
 
\subsection{Asymptotics for  Increasing   $ {\mathbf D}$}\label{stokka}

The convergence in Equation~(\ref{dklkonv1}) of  Lemma \ref{dklkonv} requires that  there is a $\theta_{o}$  in the parameter space of the model such that observed data $\mathbf{D}$  are   $\sim  P_{\theta_{o}}$. Next we find a convergence that  does not require  this assumption, but reduces to
 Equation~(\ref{dklkonv1}),  if $P_{o}  \in \mathbb{M}$.
\begin{lemma}\label{misspec}
 Assume that Equation~(\ref{posass}) holds for
$P_{o} \in  \mathbb{P}$ and for  any $P_{\theta} \in \mathbb{M}$.  Let  $ {\mathbf D}= (D_{1}, \ldots, D_{n_{o}}) $ be    an i.i.d. $n_{o}$-sample  $\sim$ $P_{o}$. Then  it holds that 
\begin{equation}\label{misspec13}
\lim_{n_{o} \rightarrow +\infty} D_{\rm KL} \left( \widehat{P}_{\mathbf D}, P_{\theta}\right) =  D_{\rm KL} \left( P_{o}, P_{\theta}\right). 
\end{equation}
$P_{o}$-a.s.. 
\end{lemma}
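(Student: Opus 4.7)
The plan is to decompose the divergence as
\begin{equation*}
D_{\rm KL}(\widehat{P}_{\mathbf{D}}, P_\theta) = -H(\widehat{P}_{\mathbf{D}}) + \sum_{i=1}^{k} \widehat{p}_i \ln \frac{1}{p_i(\theta)}
\end{equation*}
and to show that each of the two pieces converges $P_o$-almost surely to its $P_o$-analogue. The underlying fact I would invoke is Equation~(\ref{totvarber2}) from the preceding subsection: $V(\widehat{P}_{\mathbf{D}}, P_o)\to 0$ $P_o$-a.s.\ as $n_o\to\infty$, which by reading the total-variation distance coordinate-wise gives $\widehat{p}_i \to p_{o,i}$ a.s.\ for each $i=1,\ldots,k$.

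For the cross-entropy term, assumption~(\ref{posass}) on $P_\theta$ guarantees $p_i(\theta)>0$, so each $\ln(1/p_i(\theta))$ is a finite deterministic constant and $\sum_i \widehat{p}_i \ln(1/p_i(\theta))$ is simply a continuous linear functional of $(\widehat{p}_1,\ldots,\widehat{p}_k)$. It therefore converges a.s.\ to $\sum_i p_{o,i}\ln(1/p_i(\theta))$. For the entropy term I would appeal to Lemma~\ref{lemmakontbd} with $P=\widehat{P}_{\mathbf{D}}$ and $Q=P_o$: since $V(\widehat{P}_{\mathbf{D}}, P_o)\to 0$ a.s., eventually $V(\widehat{P}_{\mathbf{D}}, P_o)<1/2$ on a full $P_o$-measure set, and the lemma yields
\begin{equation*}
|H(\widehat{P}_{\mathbf{D}}) - H(P_o)| \leq - V(\widehat{P}_{\mathbf{D}}, P_o)\,\ln\!\left(\frac{V(\widehat{P}_{\mathbf{D}}, P_o)}{k}\right).
\end{equation*}
Since $V\ln V\to 0$ as $V\downarrow 0$, the right-hand side tends to $0$ a.s., whence $H(\widehat{P}_{\mathbf{D}})\to H(P_o)$ a.s. Summing the two limits, and noting that assumption~(\ref{posass}) on $P_o$ keeps $D_{\rm KL}(P_o,P_\theta)$ finite, gives Equation~(\ref{misspec13}).

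The main obstacle is the entropy piece: the map $x\mapsto -x\ln x$ is continuous on $[0,1]$ but fails to be Lipschitz near $0$, so one cannot simply lift the coordinate-wise convergence $\widehat{p}_i\to p_{o,i}$ to convergence of $H(\widehat{P}_{\mathbf{D}})$ through a uniform Lipschitz bound. The inequality of Lemma~\ref{lemmakontbd}, which already bakes in the correct $V\ln V$ modulus of continuity for the Shannon entropy, is precisely the tool that bypasses this difficulty and keeps the argument short; the cross-entropy half is by contrast a routine consequence of linearity and the strict positivity in~(\ref{posass}).
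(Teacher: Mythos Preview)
Your argument is correct, but the paper proceeds along a different line. Rather than splitting $D_{\rm KL}(\widehat{P}_{\mathbf{D}},P_\theta)$ into an entropy part and a cross-entropy part, the paper starts from the type-class identity $\ln P_\theta(\mathbf{D})=-n_o H(\widehat{P}_{\mathbf{D}})-n_o D_{\rm KL}(\widehat{P}_{\mathbf{D}},P_\theta)$ (and the analogue with $P_o$ in place of $P_\theta$) to obtain
\[
\frac{1}{n_o}\ln\frac{P_o(\mathbf{D})}{P_\theta(\mathbf{D})}=D_{\rm KL}(\widehat{P}_{\mathbf{D}},P_\theta)-D_{\rm KL}(\widehat{P}_{\mathbf{D}},P_o).
\]
The left-hand side converges to $D_{\rm KL}(P_o,P_\theta)$ by the strong law of large numbers, and $D_{\rm KL}(\widehat{P}_{\mathbf{D}},P_o)\to 0$ is the content of Lemma~\ref{dklkonv}(ii), proved there via the reverse Pinsker inequality. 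Your route is arguably more elementary: it never needs the log-likelihood identity or reverse Pinsker, only the convergence $V(\widehat{P}_{\mathbf{D}},P_o)\to 0$ together with Lemma~\ref{lemmakontbd} for the entropy piece and plain linearity for the cross-entropy piece. (In fact, since $x\mapsto -x\ln x$ is continuous on $[0,1]$, even Lemma~\ref{lemmakontbd} is a slight luxury here---continuous mapping would already suffice---though invoking it keeps the argument stylistically aligned with the proofs of Theorems~\ref{nskonv} and~\ref{nskonv3}.) The paper's approach, on the other hand, makes the information-theoretic content visible: the lemma is essentially the SLLN for the log-likelihood ratio, corrected by the vanishing empirical-to-true divergence.
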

\begin{proof}
From  \citet[Thm 11.1.2]{cover2012elements}  we have  the identity 
\begin{equation}\label{fshkldid}
\ln P_{\theta} (\mathbf{ D}) = -n_{o}  H\left(\widehat{P}_{\mathbf{ D}}\right) - n_{o}D_{\rm KL} \left( \widehat{P}_{\mathbf {D}}, P_{\theta}\right).
\end{equation} 
 By the same token  the probability of   $\mathbf{D}$ under $P_{o}$ is
\begin{equation} \label{take2}
\ln P_{o} (\mathbf {D}) = -n_{o}  H\left(\widehat{P}_{\mathbf {D}}\right) - n_{o}D_{\rm KL} \left( \widehat{P}_{\mathbf {D}}, P_{o}\right).
\end{equation}
If  the entropy expression from Equation~(\ref{fshkldid}) is substituted in Equation~(\ref{take2}) we get
$$
\ln P_{o} (\mathbf {D}) = \ln P_{\theta} (\mathbf{D}) +  n_{o}\left[D_{\rm KL} \left( \widehat{P}_{\mathbf {D}}, P_{\theta}\right)  -D_{\rm KL} \left( \widehat{P}_{\mathbf {D}}, P_{o}\right)\right]
$$
and then   
\begin{equation}\label{finident}
\frac{1}{n_{o}}\ln\frac{ P_{o} ({\mathbf D}) }{ P_{\theta} ({\mathbf D})} = D_{\rm KL} \left( \widehat{P}_{\mathbf D}, P_{\theta}\right) -D_{\rm KL} \left( \widehat{P}_{\mathbf D}, P_{o}\right).
\end{equation}
 In the left hand side  we have by the strong law of large numbers, that as $n_{o}\rightarrow +\infty$
$$
\frac{1}{n_{o}}\ln\frac{ P_{o} ({\mathbf D}) }{ P_{\theta} ({\mathbf D})} =\frac{1}{n_{o}} \sum_{i=1}^{n} \ln\frac{ P_{o} (D_{i}) }{ P_{\theta} (D_{i})} \rightarrow D_{\rm KL} \left(P_{o}, P_{\theta} \right)
$$
$P_{o}$-a.s.. By  Equation~(\ref{dklkonv2})
 $D_{\rm KL} \left( \widehat{P}_{\mathbf D}, P_{o}\right) \rightarrow 0$, $P_{o}$-a.s., as $n\rightarrow +\infty$. When we allow  these  limits in 
Equation~(\ref{finident}), the assertion in  Equation~(\ref{misspec13}) is established. \end{proof}

Now we give the proof of Proposition  \ref{misspec11}.
\begin{proof}
We use the calculation in  Equation~(\ref{plan}) to obtain 
\begin{equation}\label{plan2}
D_{\rm JS} \left(  \widehat{P}_{\mathbf{D}},  P_{\theta} \right) = \pi D_{\rm KL} \left(  \widehat{P}_{\mathbf{D}},  P_{\theta} \right) - D_{\rm KL} \left(  \widehat{M}_{\rm sp}, P_{\theta} \right),
 \end{equation}
 where  $  \widehat{M}_{\rm sp}:= \pi   \widehat{P}_{\mathbf{D}}  + (1- \pi) P_{\theta}$.  We prove first that 
$$
D_{\rm KL} \left(  \widehat{M}_{\rm sp}, P_{\theta} \right) \rightarrow D_{\rm KL} \left( M_{o}, P_{\theta} \right), 
$$
where $  M_{o}:= \pi   P_{o}  + (1- \pi) P_{\theta}$.  We   bound 
$
\left |D_{\rm KL} \left(  \widehat{M}_{\rm sp}, P_{\theta} \right) -D_{\rm KL} \left(  M_{o}, P_{\theta} \right)\right|
$
as follows. 
We have by definition of $D_{\rm KL}$  
\begin{eqnarray}\label{melcomp223} 
\left |  D_{\rm KL} \left( \widehat{M}_{\rm sp},  P_{\theta} \right)  -   D_{\rm KL}\left(M_{o},  P_{\theta}\right)\right | &\leq &          \left |   H\left( \widehat{M}_{\rm sp}\right)  - H\left( M_{o} \right)   \right | \nonumber \\ 
&  & \\
& +& \left| \sum_{j=1}^{k} m_{j,{\rm sp}}  \ln \frac{1}{p_{j}\left(\theta\right)} - \sum_{j=1}^{k}m_{j,o}  \ln \frac{1}{p_{j}\left(\theta\right)} \right|,  \nonumber
\end{eqnarray} 
where we used Equation~(\ref{posass}).   But from this inequality on  the proof of  Theorem  \ref{nskonv} shows that the  convergence sought for here depends on 
the variation distance between $ \widehat{M}_{\rm sp}$ and $  M_{o}$. We have 
 $$
V\left(  \widehat{M}_{\rm sp} ,  M_{o} \right)  = \pi  V \left(\widehat{P}_{\mathbf{D}}  , P_{o} \right).
$$
 By the remark in Equation~(\ref{totvarber2})  we see that  $V\left(\widehat{P}_{\mathbf{D}}, P_{o}\right) \rightarrow 0$,  as $n_{o} \rightarrow +\infty$.  
 Hence   it follows as in    the proof of  Theorem  \ref{nskonv}  that 
\begin{equation}\label{toka1}
\left |D_{\rm KL} \left(  \widehat{M}_{\rm sp}, P_{\theta} \right) -D_{\rm KL} \left(   M_{o}, P_{\theta} \right)\right| \rightarrow 0,
\end{equation}
$P_{o}$-a.s.,  as $n_{o} \rightarrow +\infty$. Now Equation~(\ref{misspec13}) in Lemma  \ref{misspec} and Equation~(\ref{toka1}) imply via 
 Equation~(\ref{plan2}) that
\begin{equation}\label{toka2}
D_{\rm JS} \left(  \widehat{P}_{\mathbf{D}},  P_{\theta} \right) \rightarrow   \pi D_{\rm KL} \left( P_{o},  P_{\theta} \right) - D_{\rm KL} \left( M_{o}, P_{\theta} \right)
\end{equation}
$P_{o}$-a.s.,  as $n_{o} \rightarrow +\infty$.  But the calculus underlying Equation~(\ref{plan2})  verifies that  the right  hand side of 
Equation~(\ref{toka2}) equals  $D_{\rm JS} \left(  P_{o},  P_{\theta} \right)$, as claimed. \end{proof}
The next result is written in terms of the symmetric JSD with $\pi=1/2$, i.e., $ D_{\rm JS, 1/2}$, but the results hold obviously for any $\pi \in (0,1)$. 
\begin{proposition}\label{misspec12}
 Assume that Equation~(\ref{posass}) holds for
$P_{o} \in  \mathbb{P}$ and for  any $P_{\theta} \in \mathbb{M}_{p}$. Let  $ {\mathbf D}= (D_{1}, \ldots, D_{n_{o}}) $ be    an i.i.d. $n_{o}$-sample  $\sim$ $P_{o}$. Then  it holds that 
\begin{equation}\label{misspec1}
\lim_{n_{o} \rightarrow +\infty} D_{\rm JS, 1/2} \left( \widehat{P}_{\mathbf D}, P_{\theta}\right) =  D_{\rm JS, 1/2} \left( P_{o}, P_{\theta}\right). 
\end{equation}
$P_{o}$-a.s.. 
\end{proposition}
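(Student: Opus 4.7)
The plan is to observe that Proposition~\ref{misspec12} is just the $\pi=1/2$ specialization of Theorem~\ref{misspec11}, which has already been proved in the appendix. The argument given there for a general $\pi\in(0,1)$ is independent of the specific value of $\pi$, so I would simply re-apply it with $\pi=1/2$. For clarity I would re-state the main steps rather than quote the general result as a black box.

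First I would invoke the compensation identity from Equation~(\ref{plan}) with $\pi=1/2$ to decompose
\begin{equation*}
D_{\rm JS,1/2}\!\left(\widehat{P}_{\mathbf D},P_{\theta}\right) \;=\; \tfrac{1}{2} D_{\rm KL}\!\left(\widehat{P}_{\mathbf D},P_{\theta}\right) \;-\; D_{\rm KL}\!\left(\widehat{M}_{\rm sp},P_{\theta}\right),
\end{equation*}
where $\widehat{M}_{\rm sp}=\tfrac12\widehat{P}_{\mathbf D}+\tfrac12 P_{\theta}$, and similarly
\begin{equation*}
D_{\rm JS,1/2}(P_{o},P_{\theta}) \;=\; \tfrac{1}{2} D_{\rm KL}(P_{o},P_{\theta}) \;-\; D_{\rm KL}(M_{o},P_{\theta}),\qquad M_{o}=\tfrac12 P_{o}+\tfrac12 P_{\theta}.
\end{equation*}
Thus it suffices to show the two $P_{o}$-a.s.\ convergences $D_{\rm KL}(\widehat{P}_{\mathbf D},P_{\theta})\to D_{\rm KL}(P_{o},P_{\theta})$ and $D_{\rm KL}(\widehat{M}_{\rm sp},P_{\theta})\to D_{\rm KL}(M_{o},P_{\theta})$ as $n_{o}\to+\infty$.

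The first convergence is exactly Lemma~\ref{misspec}. For the second, I would mimic the bound already derived in the proof of Theorem~\ref{nskonv}: write $D_{\rm KL}(R,P_{\theta})=-H(R)+\sum_{j}r_{j}\ln(1/p_{j}(\theta))$ and apply the triangle inequality to the difference of the two KL terms, getting an entropy-difference piece $|H(\widehat{M}_{\rm sp})-H(M_{o})|$ plus a linear piece controlled by $V(\widehat{M}_{\rm sp},M_{o})\cdot\max_{j}\ln(1/p_{j}(\theta))$. By Equation~(\ref{posass}) applied to $P_{\theta}$ the logarithmic factor is finite, and a direct computation gives $V(\widehat{M}_{\rm sp},M_{o})=\tfrac12 V(\widehat{P}_{\mathbf D},P_{o})$, which tends to $0$ $P_{o}$-a.s.\ by Equation~(\ref{totvarber2}). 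Lemma~\ref{lemmakontbd} then pins down the entropy difference once $V(\widehat{M}_{\rm sp},M_{o})<1/2$, which holds for all large enough $n_{o}$. Combining these two KL limits in the compensation decomposition yields Equation~(\ref{misspec1}).

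There is essentially no hard step, since Theorem~\ref{misspec11} has done all the work. The only mild point to keep in mind is that while Lemma~\ref{misspec} requires only Equation~(\ref{posass}) for $P_{\theta}$, the entropy-continuity step via Lemma~\ref{lemmakontbd} uses Equation~(\ref{posass}) for $P_{\theta}$ to bound $-\ln p_{j}(\theta)$; both assumptions are furnished by the hypothesis of the proposition, so the argument goes through verbatim.
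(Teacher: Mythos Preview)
Your proposal is correct and follows essentially the same route as the paper's own proof: the compensation identity from Equation~(\ref{plan}) with $\pi=1/2$, Lemma~\ref{misspec} for the first KL term, and then the entropy/linear split controlled by $V(\widehat{M}_{\rm sp},M_{o})=\tfrac12 V(\widehat{P}_{\mathbf D},P_{o})$ together with Lemma~\ref{lemmakontbd} and Equation~(\ref{totvarber2}) for the second. Your use of $\max_{j}\ln(1/p_{j}(\theta))$ in the linear piece is in fact the correct bound (the paper writes $\min$ there, which appears to be a typo).
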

\begin{proof}
We use the calculation in  Equation~(\ref{plan}) to obtain 
\begin{equation}\label{plan3}
D_{\rm JS, 1/2} \left(  \widehat{P}_{\mathbf{D}},  P_{\theta} \right) =  \frac{1}{2}   D_{\rm KL} \left(  \widehat{P}_{\mathbf{D}},  P_{\theta} \right) - D_{\rm KL} \left(  \widehat{M}_{\rm sp}, P_{\theta} \right),
 \end{equation}
 where  $  \widehat{M}_{\rm sp}:= \frac{1}{2}  \left[ \widehat{P}_{\mathbf{D}}  + P_{\theta} \right]$. In view of  Equation~(\ref{misspec13}) in Lemma  \ref{misspec} we need only to prove that  
$$
D_{\rm KL} \left(  \widehat{M}_{\rm sp}, P_{\theta} \right) \rightarrow D_{\rm KL} \left( M_{o}, P_{\theta} \right), 
$$
where $  M_{o}:=  \frac{1}{2}  \left[   P_{o}  + P_{\theta}\right]$.  We use the bound in Equation~(\ref{melcomp223}). 
  We have $
V\left(  \widehat{M}_{\rm sp} ,  M_{o} \right)  = \frac{1}{2}  V \left(\widehat{P}_{\mathbf{D}}  , P_{o} \right)$.   Hence  $ V\left(  \widehat{M}_{\rm sp} ,  M_{o} \right)   <1/2$    and  by Lemma \ref{lemmakontbd}, Equation~(\ref{cskorner})  entails  
\begin{equation}\label{bounda}
\left |   H\left( \widehat{M}_{\rm sp}\right)  - H\left( M_{o} \right)   \right | \leq  -\frac{1}{2}
   V \left(\widehat{P}_{\mathbf{D}}  , P_{o} \right) \ln\left( \frac{  \frac{1}{2} V \left(\widehat{P}_{\mathbf{D}}  , P_{o} \right) }{k}\right).
\end{equation}
For the second term in the right hand side of Equation~(\ref{melcomp223}) we have   by definitions of  $M_{\rm sp }$ and  $M_{o}$
$$
  \left| \sum_{j=1}^{k} m_{j,{\rm sp}}  \ln \frac{1}{p_{j}\left(\theta\right)} - \sum_{j=1}^{k}m_{j,o}  \ln \frac{1}{p_{j}\left(\theta\right)} \right| = 
\frac{1}{2}\left | \sum_{j=1}^{k}  \left(\widehat{p}_{j} - p_{o,j} \right)\ln \frac{ 1 }{p_{j}(\theta)}  \right|.
$$
The right hand side is bounded upwards ($|\ln  1/p_{j}\left(\theta\right)|=\ln  1/p_{j}\left(\theta\right) $) by 
\begin{equation}\label{boundb}
\leq   \frac{1}{2}\sum_{j=1}^{k} \left |  \left(\widehat{p}_{j} - p_{o,j} \right) \right| \ln \frac{ 1 }{p_{j}\left(\theta\right)} \leq   \frac{1}{ 2} \min_{ 1 \leq j \leq k} \ln \frac{ 1 }{p_{j}\left(\theta\right)} V \left(\widehat{P}_{\mathbf{D}}  , P_{o} \right).
\end{equation}
By the remark in Equation~(\ref{totvarber2})  we have  that  $V\left(\widehat{P}_{\mathbf{D}}, P_{o}\right) \rightarrow 0$,  as $n_{o} \rightarrow +\infty$.  
Hence Equations~(\ref{melcomp223}), (\ref{bounda}) and (\ref{boundb}) give 
 \begin{equation}\label{tokab}
\left |D_{\rm KL} \left(  \widehat{M}_{\rm sp}, P_{\theta} \right) -D_{\rm KL} \left(   M_{o}, P_{\theta} \right)\right| \rightarrow 0,
\end{equation}
$P_{o}$-a.s.,  as $n_{o} \rightarrow +\infty$. Now Equation~(\ref{misspec13}) in Lemma  \ref{misspec} and Equation~(\ref{toka1}) imply via 
 Equation~(\ref{plan2}) that
\begin{equation}\label{tokac}
D_{\rm JS, 1/2} \left(  \widehat{P}_{\mathbf{D}},  P_{\theta} \right) \rightarrow   \frac{1}{2} D_{\rm KL} \left( P_{o},  P_{\theta} \right) - D_{\rm KL} \left( M_{o}, P_{\theta} \right)
\end{equation}
$P_{o}$-a.s.,  as $n_{o} \rightarrow +\infty$.  But the calculus underlying Equation~(\ref{plan2})  verifies that  the right  hand side of 
Equation~(\ref{tokac}) equals  $ D_{\rm JS, 1/2} \left( P_{o}, P_{\theta}\right)$, as claimed. \end{proof}

\section{The Sum of Second Derivatives in the Voronovskaya   Expansion  }\label{voronder}
 We  need to obtain  Equation~(\ref{andraderivator}).  In order to simplify writing we set   $x= p_{i}(\theta)$ and $p=\widehat{p}_{i}$  and replace 
$u_{i}^{(1)}$ and  $u_{i}^{(1)}$ in Equation~(\ref{bernst2})   by
$$
f(x)=  \ln \left(\frac{ p}{ \pi p +(1-\pi)x}\right) = \ln p -  \ln  \left({ \pi p +(1-\pi)x}\right) 
$$
and
$$
g(x) =x\ln\left(\frac{x}{{ \pi p +(1-\pi)x}}\right) = x \ln x + xf(x) - x \ln p,
$$
respectively.  We have  the first and second derivatives  w.r.t $x$ 
\begin{equation}\label{fderiv}
f^{(1)}(x)= -\left(\frac{1-\pi }{ \pi p +(1-\pi)x}\right), f^{(2)}(x)= \frac{(1-\pi)^{2} }{( \pi p +(1-\pi)x)^{2}}
\end{equation}
and the first derivative  and the second derivative 
$$
g^{(1)}(x)= \ln x +1 +  f(x)  +  xf^{(1)}(x) -\ln p, 
g^{(2)}(x)=  \frac{1}{x} + 2 f^{(1)}(x) + xf^{(2)}(x),
$$
respectively. Let  us also put $m=\pi p +(1-\pi)x$.  When we  insert from Equation~(\ref{fderiv})  we get 
$$
g^{(2)}(x)= \frac{1}{x}  -\frac{2(1-\pi) }{ m} + \frac{(1-\pi)^{2} x}{ m^{2}}.
$$
 Then 
 $$
\pi  p f^{(2)}(x) + (1-\pi) g^{(2)}(x)=\frac{\pi (1-\pi)^{2} p}{m^{2}}+ \frac{(1-\pi) }{x}  -\frac{2(1-\pi)^{2} }{ m} + \frac{(1-\pi)^{3} x}{ m^{2}}
$$
$$
= (1-\pi)^{2}\left[ \frac{\pi p -2 m +(1-\pi)x}{m^{2}}     \right] + \frac{(1-\pi) }{x}=(1-\pi) \left[  \frac{1 }{x} -\frac{(1-\pi)}{m}     \right],
$$
which  with  $x= p_{i}(\theta)$ and $p=\widehat{p}_{i}$  produces Equation~(\ref{andraderivator}).  
 
\section{Proof of Lemma \ref{vantevarde}}\label{odotusarvo} 

\begin{proof} By  Equation~(\ref{Qhat2}), $ \triangle( \widehat{Q}_{\theta} )= (\xi_{1}/n, \ldots, \xi_{k}/n)$.   
We set   $\widehat{M}:= \pi\widehat{P}_{\mathbf{D}} +(1-\pi)  \widehat{Q}_{\theta}$, and   $\widehat{m}_{i}=\pi \widehat{p}_{i}  + (1-\pi)\frac{\xi_{i}}{n}$, $i=1, \ldots, k $. By the definition  of JSD  
 \begin{equation}\label{jsdivaver2} 
E_{P_{\theta}}\left[D_{\rm JS}(  \widehat{P}_{\mathbf{D}},    \widehat{Q}_{\theta})\right]=\pi   E_{P_{\theta}}\left[ D_{\rm KL}(\widehat{P}_{\mathbf{D}}, \widehat{M})\right] +(1-\pi)
E_{P_{\theta}}\left[ D_{\rm KL}(   \widehat{Q}_{\theta}, \widehat{M} )\right].
 \end{equation}
Since  $ \widehat{P}_{\mathbf{D}}$ is in this  not a random variable, we have
 \begin{eqnarray}\label{term1} 
 E_{P_{\theta}}\left[ D_{\rm KL}(  \widehat{P}_{\mathbf{D}}, \widehat{M} ) \right] & = & \sum_{i=1}^{k} \widehat{p}_{i}\ln \widehat{p}_{i} \nonumber \\
& &  \\
& -& \sum_{i=1}^{k} \widehat{p}_{i} E_{P_{\theta}}\left[ 
\ln\left(  \pi \widehat{p}_{i}  + (1-\pi) \frac{\xi_{i}}{n} \right)\right], \nonumber 
\end{eqnarray} 
and 
 \begin{eqnarray}\label{term2} 
 E_{P_{\theta}}\left[ D_{\rm KL}(\widehat{Q}_{\theta}, \widehat{M}) \right] & = & \sum_{i=1}^{k} E_{P_{\theta}}\left[  \frac{\xi_{i}}{n}\ln \frac{\xi_{i}}{n}\right]
 \nonumber \\
&- & \sum_{i=1}^{k} E_{P_{\theta}}\left[ \frac{\xi_{i}}{n}\ln\left( \pi\widehat{p}_{i}  +(1-\pi) \frac{\xi_{i}}{n}\right) \right].
\end{eqnarray}
The required expectations of the functions of $\xi_{i}$ above  are  next  computed with  the  binomial distribution ${\rm Bin}\left(n, p_{i}\left( \theta\right)\right)$. When 
$b_{i}(r)$ denotes the binomial probability of 
$r$ successes for a binomial r.v. $\sim {\rm Bin}\left(n, p_{i}\left( \theta\right)\right)$, and $ \widehat{m}_{i}(r) :=  \pi\widehat{p}_{i}  +(1-\pi) \frac{r}{n}$  we have 
$$
 E_{P_{\theta}}\left[ 
\ln\left( \pi \widehat{p}_{i}  + (1-\pi)\widehat{Q}_{i} \right)\right]= \sum_{r=0}^{n} \ln\left(  \widehat{m}_{i}(r) \right)
b_{i}(r).
$$
Thus we get  
 \begin{eqnarray}\label{term12} 
 E_{P_{\theta}}\left[ D_{\rm KL}( \widehat{P}, \widehat{M})\right] & = & \sum_{i=1}^{k} \widehat{p}_{i}\ln \widehat{p}_{i} \nonumber \\
& &  \\
& -& \sum_{i=1}^{k} \widehat{p}_{i}\sum_{r=0}^{n} \ln\left( \widehat{m}_{i}(r) \right)
b_{i}(r). \nonumber 
\end{eqnarray} 
By the same argument as above 
$$
 \sum_{i=1}^{k} E_{P_{\theta}}\left[\frac{\xi_{i}}{n}\ln  \frac{\xi_{i}}{n}\right] = \sum_{i=1}^{k} \sum_{r=1}^{n} \frac{r}{n} \ln\left(\frac{r}{n}  \right)b_{i}(r).
$$
The term corresponding to $r=0$ vanished above due to  $0 \ln (0/n)=0$. Furthermore 
$$
 \sum_{i=1}^{k}E_{P_{\theta}}\left[ \frac{\xi_{i}}{n}\ln\left( (\pi \widehat{p}_{i}  + (1-\pi) \frac{\xi_{i}}{n}\right) \right] = 
\frac{1}{n} \sum_{i=1}^{k}\sum_{r=1}^{n} r  \ln\left( \widehat{m}_{i}(r) \right)b_{i}(r).
$$
Hence we have 
 \begin{eqnarray}\label{term22} 
 E_{P_{\theta}}\left[ D_{\rm KL}( \widehat{Q}, \widehat{M})\right] & = &  \sum_{i=1}^{k} \sum_{r=1}^{n} \frac{r}{n} \ln\left(\frac{r}{n}  \right)b_{i}(r).
 \nonumber \\
&- & \frac{1}{n} \sum_{i=1}^{k}\sum_{r=1}^{n} r  \ln\left( \widehat{m}_{i}(r)\right)b_{i}(r).
\end{eqnarray}
Hence
\begin{eqnarray}\label{proberror23}
E_{P_{\theta}}\left[D_{\rm JS}( \widehat{P}, \widehat{Q})\right]   =  \pi \sum_{i=1}^{k} \widehat{p}_{i}\ln \widehat{p}_{i} &-&  \pi\sum_{i=1}^{k} \widehat{p}_{i}\sum_{r=0}^{n} \ln\left( \widehat{m}_{i}(r) \right)
b_{i}(r) \nonumber \\
& & \\
&+&   (1-\pi)\sum_{i=1}^{k} \sum_{r=1}^{n} \frac{r}{n} \ln\left(\frac{\frac{r}{n}}{ \widehat{m}_{i}(r) }  \right)b_{i}(r).
\nonumber
\end{eqnarray} 
 Consider the second term 
in the right hand side of   in Equation~(\ref{proberror23}), namely 
$$
-\pi\sum_{i=1}^{k} \widehat{p}_{i}\sum_{r=0}^{n} \ln\left( \widehat{m}_{i}(r) \right)b_{i}(r). 
$$
  Consider the second term 
in the right hand side of   in Equation~(\ref{proberror23}), namely 
$$
-\pi\sum_{i=1}^{k} \widehat{p}_{i}\sum_{r=0}^{n} \ln\left( \widehat{m}_{i}(r) \right)b_{i}(r). 
$$
We  have the compensated identity 
$$
= -\pi\sum_{i=1}^{k} \widehat{p}_{i}\sum_{r=0}^{n} \ln \left(\frac{\widehat{m}_{i}(r)}{ p_{i}\left( \theta\right)}\right)  
b_{i}(r) -\pi\sum_{i=1}^{k} \widehat{p}_{i}\sum_{r=0}^{n} \ln p_{i}\left( \theta\right)
b_{i}(r)
$$
and  rightmost term can be rewritten as
$$
\sum_{i=1}^{k} \widehat{p}_{i}\sum_{r=0}^{n} \ln p_{i}\left( \theta\right) 
b_{i}(r)=\sum_{i=1}^{k} \widehat{p}_{i} \ln p_{i}\left( \theta\right)\sum_{r=0}^{n}  b_{i}(r)
= \sum_{i=1}^{k} \widehat{p}_{i} \ln p_{i}\left( \theta\right). 
$$
Thus we have obtained 
$$
-\pi\sum_{i=1}^{k} \widehat{p}_{i}\sum_{r=0}^{n} \ln\left(\widehat{m}_{i}(r)\right)
b(r; p_{i}\left( \theta\right))  = -\pi\sum_{i=1}^{k} \widehat{p}_{i}\sum_{r=0}^{n} \ln \left(\frac{\widehat{m}_{i}(r)}{ p_{i}\left( \theta\right)}\right) b(r; p_{i}\left( \theta\right)) 
$$
$$
- \pi \sum_{i=1}^{k} \widehat{p}_{i} \ln p_{i}\left( \theta\right). 
$$
W.r.t.  Equation~(\ref{proberror23}) we note that 
$
\pi \sum_{i=1}^{k}\widehat{p}_{i} \ln \widehat{p}_{i}- \pi \sum_{i=1}^{k} \widehat{p}_{i} \ln p_{i}\left( \theta\right) = \pi D_{\rm KL}( \widehat{P}, P_{\theta}),
$     
and we have obtained  
\begin{eqnarray}\label{proberror233}
E_{P_{\theta}}\left[D_{\rm JS}( \widehat{P}, \widehat{Q}) \right]   =  \pi D_{\rm KL}( \widehat{P}, P_{\theta}) &-& \pi\sum_{i=1}^{k} \widehat{p}_{i}\sum_{r=0}^{n} \ln \left(\frac{\widehat{m}_{i}(r)}{ p_{i}\left( \theta\right)}\right) b_{i}(r) \nonumber \\
& & \\
&+ &    (1-\pi)\sum_{i=1}^{k} \sum_{r=1}^{n} \frac{r}{n} \ln\left(\frac{\frac{r}{n}}{ \widehat{m}_{i}(r) }  \right) b_{i}(r).
\nonumber
\end{eqnarray} 
We take a closer look at the second term in the right hand side, i.e.,  
$$
S:=\pi\sum_{i=1}^{k} \widehat{p}_{i}\sum_{r=0}^{n} \ln \left(\frac{\widehat{m}_{i}(r)}{ p_{i}\left( \theta\right)}\right) b_{i}(r). 
$$
A compensated equality for this is 
$$
\ln \left(\frac{\widehat{m}_{i}(r)}{ p_{i}\left( \theta\right)}\right)= \ln \left(\frac{\widehat{m}_{i}(r)}{ \widehat{p}_{i}}\right) + \ln \left(\frac{\widehat{p}_{i}}{ p_{i}\left( \theta\right)}\right).
$$
This yields 
$$
S= \pi\sum_{i=1}^{k} \widehat{p}_{i}\sum_{r=0}^{n} \ln \left(\frac{\widehat{m}_{i}(r)}{ \widehat{p}_{i}}\right) b_{i}(r) + \pi \sum_{i=1}^{k} \widehat{p}_{i}\ln \left(\frac{\widehat{p}_{i}}{ p_{i}\left( \theta\right)}\right)\sum_{r=0}^{n} b_{i}(r)
$$
$$
= \pi\sum_{i=1}^{k} \widehat{p}_{i}\sum_{r=0}^{n} \ln \left(\frac{\widehat{m}_{i}(r)}{ \widehat{p}_{i}}\right) b_{i}(r) +\pi D_{\rm KL}(\widehat{P},P_{\theta}). 
$$
As soon as  this  is substituted  in the right hand side of Equation~(\ref{proberror233}), we  have
\begin{eqnarray}\label{slututryckperl}
E_{P_{\theta}}\left[D_{\rm JS}( \widehat{P}_{\mathbf{D}}, \widehat{Q}_{\theta})\right]    & =& \pi\sum_{i=1}^{k} \widehat{p}_{i}\sum_{r=0}^{n} \ln 
\left(\frac{ \widehat{p}_{i}}{\widehat{m}_{i}(r)}\right) b(r;p_{i}\left( \theta\right)) \nonumber \\
& & \\
&+ &    (1-\pi)\sum_{i=1}^{k} \sum_{r=1}^{n} \frac{r}{n} \ln\left(\frac{\frac{r}{n}}{ \widehat{m}_{i}(r) }  \right)b(r;p_{i}\left( \theta\right)).
\nonumber
\end{eqnarray} 
As soon as  the Bernstein operators 
(Equation~\ref{bernst1}) on  the functions in  Equation~(\ref{bernst2}) are  identified  in Equation~(\ref{slututryckperl}), we get Equation~(\ref{slututryck}), as claimed. \end{proof}

\section{Proof of  Lemma    \ref{jsdvantevardetaylor}  }\label{mse} 

\begin{proof} We   start with    Lemma \ref{jsdtaylorlem} 
by  evaluating the expectations in   the three terms in Equation~(\ref{jsdstatmse}). Set $\sigma_{j}^{2}(\theta):= p_{j}(\theta)(1-p_{j}(\theta))$.  By simulator-modeling  $ \widehat{q}_{i,\theta} - p_{j}(\theta)= 
\frac{1}{n}\left( \xi_{j} -n p_{j}(\theta) \right)$, where $ \xi_{j} \sim {\rm Bin}\left(n,  p_{j}(\theta) \right)$.
Hence  $E_{P_{\theta}} \left( \xi_{j} -n p_{j}(\theta) \right)^{2} = n \sigma_{j}^{2}(\theta) $. 
  By Equation~(\ref{multinompid})  we have 
$$E_{P_{\theta}}\left[\left(  \xi_{j} - np_{i}(\theta) \right) \left( \xi_{j} - np_{j}(\theta) \right)\right]= -n p_{i}(\theta)p_{j}(\theta).$$ 
 Hence we substitute  in  Equation~(\ref{taylorjsdt21})  to  obtain 
\begin{eqnarray}\label{taylorjsdt211} 
M_{1} = \frac{1}{n}\sum_{j=1}^{k}\left[\frac{\partial}{\partial q_{j}} F_{j}\left( p_{j}(\theta) \right)\right]^{2} 
 \sigma_{j}^{2}(\theta)
& & \nonumber \\
& & \\
 -  \frac{2}{n}\sum_{i=1}^{k-1} \sum_{j=i+1}^{k}\frac{\partial}{\partial q_{i}} F_{i}\left( p_{i}(\theta) \right) \frac{\partial}{\partial q_{j}} F_{j}\left( p_{j}(\theta) \right)p_{i}(\theta)p_{j}(\theta).& & \nonumber 
\end{eqnarray} 
By  \citet[Eq. (80) p.~25]{ouimet2021general}
$$E_{P_{\theta}}\left[\left(  \xi_{i} - np_{i}(\theta) \right) \left(  \xi_{j} -n p_{j}(\theta)\right)^{2}\right] = n  
p_{i}(\theta) p_{j}(\theta) \left( 2p_{j}(\theta)-1 \right)$$ and thus in Equation~(\ref{taylorjsdt41})
 \begin{equation}\label{taylorjsdt411}
M_{2}=
\frac{1}{n^{2}} \sum_{i=1}^{k}  \sum_{j=1}^{k} \frac{\partial}{\partial q_{}} F_{j}\left( p_{i}(\theta) \right) \cdot\frac{\partial^{2}}{\partial q_{j}^{2}} F_{j}\left( p_{j}(\theta)\right) p_{i}(\theta) p_{j}(\theta) \left( 2p_{j}(\theta)-1 \right). 
\end{equation}
For reasons of space we set  $ M_{3}= T_{3}+T_{4} $ in  Equation~(\ref{taylorjsdt31}),  where 
$$T_{3}=\frac{1}{4}\sum_{j=1}^{k}\left[\frac{\partial^{2}}{\partial q_{j}^{2}} F_{j}\left( p_{j}(\theta) \right)\right]^{4}E_{P_{\theta}}\left[ \left( \widehat{q}_{i,\theta} - p_{j}(\theta) \right)^{4} \right]$$
 and  $$T_{4}=\frac{1}{2} \sum_{i=1}^{k-1} \sum_{j=i+1}^{k}\frac{\partial^{2}}{\partial q_{i}^{2}} F_{i}\left( p_{i}(\theta) \right) \frac{\partial^{2}}{\partial q_{j}^{2}} F_{j}\left( p_{j}(\theta) \right)E_{P_{\theta}}\left[ \left( \widehat{q}_{i,\theta} - p_{i}(\theta) \right)^{2}\left( \widehat{q}_{j,\theta} - p_{j}(\theta) \right)^{2}\right].$$    

Next, by the algorithms of  \citet[p.270]{griffiths2013raw},  \citet[Eq. (82) p.~25]{ouimet2021general}  and \citet[Table 2., p.5]{skorski2020handy},
$
E_{P_{\theta}} \left( \xi_{j} -n p_{j}(\theta) \right)^{4} = n \sigma_{j}^{2}(\theta)\left ( 1+ 3 \sigma_{j}^{2}(\theta) (n-2) \right)$.
This can, by some effort,  be checked by evaluating the  fourth derivative of the moment generating function of the binomial distribution  at zero. Then 
\begin{equation}\label{taylorjsdt311a} 
T_{3}= \frac{1}{4n^{3}}\sum_{j=1}^{k}\left[\frac{\partial^{2}}{\partial q_{j}^{2}} F_{j}\left( p_{j}(\theta) \right)\right]^{4}
 \sigma_{j}^{2}(\theta)\left ( 1+ 3 \sigma_{j}^{2}(\theta) (n-2) \right).
\end{equation}
By  \citet[Eq. (84) p.~25]{ouimet2021general}
 \begin{eqnarray}\label{ouimet5}
E_{P_{\theta}}\left[\left(  \xi_{i} - np_{i}(\theta) \right)^{2} \left(  \xi_{j} -n p_{j}(\theta)\right)^{2}\right] &=&
3n(n-2)p_{i}(\theta)^{2}p_{j}(\theta)^{2} \nonumber \\
& &  +n(n-2)p_{i}(\theta)p_{j}(\theta)\left[1-(p_{i}(\theta)+p_{j}(\theta))\right] \nonumber \\
& &  +n p_{i}(\theta)p_{j}(\theta). 
\end{eqnarray}
Hence 
 \begin{eqnarray}\label{taylorjsdt311b}
T_{4} &=&\frac{3(n-2)}{n^{3}}  \sum_{i=1}^{k-1} \sum_{j=i+1}^{k}\frac{\partial^{2}}{\partial q_{i}^{2}} F_{i}\left( p_{i}(\theta) \right) \frac{\partial^{2}}{\partial q_{j}^{2}} F_{j}\left( p_{j}(\theta) \right)p_{i}(\theta)^{2}p_{j}(\theta)^{2} \nonumber \\
& & +\frac{(n-2)}{n^{3}}\sum_{i=1}^{k-1} \sum_{j=i+1}^{k}\frac{\partial^{2}}{\partial q_{i}^{2}} F_{i}\left( p_{i}(\theta) \right) \frac{\partial^{2}}{\partial q_{j}^{2}} F_{j}\left( p_{j}(\theta) \right)p_{i}(\theta)p_{j}(\theta)\left[1-(p_{i}(\theta)+p_{j}(\theta))\right] \nonumber \\
& &  +\frac{1}{n^{3}} \sum_{i=1}^{k-1} \sum_{j=i+1}^{k}\frac{\partial^{2}}{\partial q_{i}^{2}} F_{i}\left( p_{i}(\theta) \right) \frac{\partial^{2}}{\partial q_{j}^{2}} F_{j}\left( p_{j}(\theta) \right)p_{i}(\theta)p_{j}(\theta). 
\end{eqnarray}
We find from Equation~(\ref{forstaderivjsd}) that
$$\frac{\partial}{\partial q_{j}} F_{j}\left( p_{j}(\theta) \right)  = \frac{\partial}{\partial q_{j}} D_{\phi_{\rm JS}}\left( \widehat{{\bf p}},    E_{P_{\theta}}\left[  \widehat{{\bf q}}_{\theta} \right]    \right)= 
(1-\pi) 
  \ln \left( p_{j}(\theta)/m_{j}\right),  $$ 
and from Equation~(\ref{andraderivjsd}) 
$$
\frac{\partial^{2}}{\partial q_{j}^{2}} F_{j}\left( p_{j}(\theta)\right)  =\frac{\partial^{2}}{\partial q_{j}^{2}}D_{\phi_{\rm JS}}\left( \widehat{{\bf p}},    E_{P_{\theta}}\left[  \widehat{{\bf q}}_{\theta} \right]\right)= \frac{ \widehat{p}_{j}}{p_{j}(\theta)} \frac{\pi (1-\pi)}{m_{j}}.
$$
When these two partial derivatives  are substituted in the right hand sides of Equations~(\ref{taylorjsdt211}), (\ref{taylorjsdt411}),  (\ref{taylorjsdt311a}) and (\ref{taylorjsdt311b}),  the expression displayed in 
Equation~(\ref{taylorjsdt211fin})  follows after some simplification as claimed. 
\end{proof}

\newpage

\vskip 0.2in

\bibliography{jsdkas15.bib}
\end{document}